\definecolor{blue}{HTML}{34495E}
\definecolor{red}{HTML}{C0392B}
\definecolor{green}{HTML}{F1C40F}
\definecolor{grey}{HTML}{BDC3C7}
\DeclareMathOperator{\EX}{\mathbb{E}}
\newcommand{\NN}{\mathbb{N}}
\newcommand{\calF}{\mathcal{F}}
\newcommand{\PP}{\mathbb{P}}
\newcommand{\RR}{\mathbb{R}}
\newcommand{\EE}{\mathbb{E}}
\newcommand{\calE}{\mathcal{E}}
\newcommand{\calL}{\mathcal{L}}
\newcommand{\calU}{\mathcal{U}}
\newcommand{\calA}{\mathcal{A}}
\newcommand{\calS}{\mathcal{S}}
\newcommand{\calM}{\mathcal{M}}
\newcommand{\calO}{\mathcal{O}}
\newtheorem{lemma}{Lemma}
\newtheorem{corollary}{Corollary}
\newtheorem{definition}{Definition}
\newtheorem{obs}{Observation}
\newtheorem{claim}{Claim}
\crefname{claim}{Claim}{Claims}
\crefname{obs}{Observation}{Observation}
\begin{document}
	
\title{New Combinatorial Insights for Monotone Apportionment\footnote{This work was partially funded by the ANID (Chile) through Grants FB210005, AFB230002, and ACT210005; by NSF (USA) under Grant No.\ DMS-1928930; and by the Alfred P.\ Sloan Foundation under grant G-2021-16778, while one of the authors was in residence at the Simons Laufer Mathematical Sciences Institute (formerly MSRI) in Berkeley, California, during the Fall 2023 semester. 
This work was partially done while the third and fourth authors were with the Center for Mathematical Modeling at the University of Chile. The authors are grateful to Ekin Ergen for pointing them to the problem of the $k$-level in line arrangements.}
}
\author{Javier Cembrano\thanks{Department of Algorithms and Complexity, Max Planck Institute for Informatics.}
\and José Correa\thanks{Department of Industrial Engineering, Universidad de Chile.}
\and Ulrike Schmidt-Kraepelin\thanks{Department of Mathematics and Computer Science, TU Eindhoven.}
\and Alexandros Tsigonias-Dimitriadis\thanks{European Central Bank.}
\and Victor Verdugo\thanks{Institute for Mathematical and Computational Engineering and Department of Industrial and Systems Engineering, PUC Chile.}
}
\date{}
\maketitle

\begin{abstract}
The \emph{apportionment problem} constitutes a fundamental problem in democratic societies: How to distribute a fixed number of seats among a set of states in proportion to the states' populations? This---seemingly simple---task has led to a rich literature and has become well known in the context of the US House of Representatives. 
In this paper, we connect the design of monotone apportionment methods to classic problems from discrete geometry and combinatorial optimization and explore the extent to which randomization can enhance proportionality.

We first focus on the well-studied family of \emph{stationary divisor methods}, which satisfy the strong \emph{population monotonicity} property, 
and show that this family produces only a slightly superlinear number of different outputs as a function of the number of states.
While our upper and lower bounds leave a small gap, we show that---surprisingly---closing this gap would solve a long-standing open problem from discrete geometry, known as the complexity of $k$-levels in line arrangements. 
The main downside of divisor methods is their violation of the \emph{quota} axiom, i.e., every state should receive $\lfloor q_i \rfloor$ or $\lceil q_i \rceil$ seats, where $q_i$ is the proportional share of the state. As we show that randomizing over divisor methods can only partially overcome this issue, we propose a relaxed version of divisor methods in which the total number of seats may slightly deviate from the house size. By randomizing over these methods, we can simultaneously satisfy population monotonicity, quota, and ex-ante proportionality. 

Finally, we turn our attention to quota-compliant methods that are \emph{house-monotone}, i.e., no state may lose a seat when the house size is increased. We provide a polyhedral characterization based on network flows, which implies a simple description of all ex-ante proportional randomized methods that are house-monotone and quota-compliant.
\end{abstract}

\thispagestyle{empty}
\newpage
\setcounter{page}{1}

\section{Introduction}

Deciding how to allocate seats in legislative bodies has been a central topic of discussion in the political organization of democratic societies for over 200 years. This has come to be known as the \emph{apportionment problem}, and, despite its apparent simplicity, its various aspects have challenged mathematicians for several decades. Proportionality stands as a fundamental criterion, demanding that each state receive seats in accordance with its population.\footnote{The notion also applies, for instance, to allocate seats proportionally to the votes obtained by political parties.} Back in 1792, Alexander Hamilton proposed a simple method that first assigns the lower quota---the exact proportional value rounded down---to every state and then allocates the remaining seats to the states with the largest remainders.

Despite its clear intuition and ease of implementation, the Hamilton method later led to unexpected outcomes, commonly referred to as apportionment paradoxes. The first one, known as the ``Alabama paradox'', occurred during the United States congressional apportionment in 1880. C.\ W.\ Seaton, chief clerk of the Census Office, observed that, when transitioning from 299 to 300 representatives, the Hamilton method resulted in Alabama losing a seat. The second paradox, termed the ``population paradox'', occurred between 1900 and 1901 and involved the states of Virginia and Maine. Despite Virginia's larger proportional growth in population compared to Maine, the Hamilton method would have taken a seat away from Virginia and allocated it to Maine.

These paradoxes, together with the fundamental nature and ubiquity of the apportionment problem, sparked an interest in its mathematical study. Subsequently, monotonicity and proportionality have become cornerstone goals when devising apportionment methods. In particular, two desirable properties are house monotonicity and population monotonicity: a method is said to satisfy the former if it escapes the Alabama paradox and the latter if it avoids the population paradox. The notion of proportionality does not extend in any obvious way to integers. However, a commonly adopted approach is the one of quota compliance, ensuring that each state receives its exact proportional value rounded up or down.

The most prevalent population-monotone methods are divisor methods, which entail scaling the population of each state by a common factor and rounding the resulting values. In fact, \citet{balinski2010fair} showed that, subject to what they termed ``rock-bottom requirements'', divisor methods are the only ones satisfying population monotonicity. Each rounding rule yields a particular divisor method. For example, Spain and Brazil use the Jefferson/D'Hondt method, based on downward rounding, to distribute the seats of their Chamber of Deputies across political parties, whereas Germany and New Zealand use the Webster/Sainte-Lagu\"e method, based on nearest-integer rounding. 
Despite their ubiquity, little is known regarding the diversity of apportionments generated by different divisor methods.
Furthermore, with the emerging interest in randomized apportionment methods 
\citep[e.g.][]{aziz2019random, correa2024monotone, goelz2022apportionment, hong2023random}, 
it is natural to ask whether divisor methods with randomized rounding rules may give best-of-both-worlds guarantees by ensuring population monotonicity and getting closer to exact proportionality.

House monotonicity, unlike population monotonicity, is compatible with quota compliance. To the best of our knowledge, three characterizations of house-monotone and quota-compliant methods have been proposed in the literature. Two of these characterizations rely on recursive constructions \citep{balinski2010fair, still1979} while the third one associates apportionment vectors generated by these methods with extreme points of a fractional matching polytope \citep{goelz2022apportionment}.

\subsection{Our Contribution and Techniques}

We present combinatorial descriptions of the space of outcomes generated by the two most common families of monotone apportionment methods and study methods that randomize over this space.

In \Cref{sec:pop-mon}, we focus on divisor methods with stationary rounding rules---rules that round a fractional value upwards if its fractional part exceeds a fixed threshold $\delta\in [0,1]$ and downwards otherwise.
We establish as \Cref{thm:main} a link between the apportionment output by these methods and the $k$-level in a line arrangement, thus drawing a novel connection between two fundamental problems in social choice and computational discrete geometry. 
This provides valuable insights into the behavior of divisor methods as a function of $\delta$.
Specifically, it implies that for any population vector $p$ and house size $H$, it is possible to partition the interval $[0,1]$ into polynomially (almost linearly) many intervals, such that each of them yields a unique common output when $\delta$ lies in its interior. It further implies a superlinear lower bound on the number of such intervals, where the gap comes from a long-standing open question about the complexity of the \mbox{$k$-level} in a line arrangement. The $k$-level in an arrangement of $n$ lines is the closure of all line segments that have exactly $k$ lines strictly below them; its complexity corresponds to its (worst-case) number of vertices and is known to lie between $n\exp(\Omega(\sqrt{\ln n}))$ and $\calO(n^{4/3})$ \citep{dey1998improved, toth2000point}. 
Our upper bound leads to an efficient algorithm to compute all apportionment vectors that are realized for a non-zero measure domain for $\delta$, as well as a concise combinatorial description of the whole set of outputs. 
Even though non-stationary divisor methods may produce exponentially many outcomes, we discuss in \Cref{sec:power} how the upper bound that follows from \Cref{thm:main} can be extended to the family of \emph{power-mean} divisor methods, this time through a connection to pseudoline arrangements.
This family is particularly relevant as it includes the five traditional divisor methods: Adams, Dean, Huntington-Hill, Webster/Sainte-Lagu\"e, and Jefferson/D'Hondt. 
Furthermore, we generalize results regarding the lower quota compliance of the Jefferson/D'Hondt method and the upper quota compliance of the Adams method (\Cref{prop:partition}). We show that for every instance, there is a partition of the interval $[0,1]$ into two intersecting intervals, one containing $0$ and the other containing $1$, such that upper quota is satisfied when $\delta$ lies in the former and lower quota is satisfied when $\delta$ lies in the latter.

Building upon our comprehension of the set of apportionment vectors produced by stationary divisor methods, we study methods that randomize over this set in \Cref{sec:randomization-variable}. Randomizing over well-studied and widely-used apportionment methods, such as divisor methods, constitutes a natural step to make randomized apportionment more applicable in practice. The natural goal of randomization in this context is to mitigate the main drawback of divisor methods: the potential violation of quota by up to $H$ seats. While positive results arise when state populations differ by a constant factor (\Cref{prop:dev-quota-fixed-pop}), we show that the minimum worst-case deviation from quota that randomized stationary divisor methods can achieve is still linear in $H$ (\Cref{prop:rand-dev-quota}). This bound, roughly $H/2$, is nearly matched by a straightforward method that takes $\delta$ equal to either $0$ or $1$, each with probability $1/2$ (\Cref{prop:jeff-adams}).
Motivated by this negative result, our focus shifts to methods that meet the house size in expectation but may slightly deviate from it ex-post. Within this class, akin to divisor methods, certain methods satisfy population monotonicity, quota compliance, and ex-ante proportionality. 
In essence, these methods allocate each state its lower quota and subsequently assign an additional seat with a probability equal to its remainder. We carefully implement the sampling and rounding schemes to guarantee population monotonicity and ex-ante proportionality, accompanied by probabilistic bounds on the deviation from the house size (\Cref{thm:variable-size}). Apart from their ease of implementation, these methods can be seen as a randomized version of both divisor and Hamilton methods, which provides an intuitive understanding of the underlying design principles that enable our method to satisfy these three properties simultaneously.

We finally provide in \Cref{sec:HM} a particularly simple polyhedral characterization of house-monotone and quota-compliant methods, alternative to that by \citeauthor{goelz2022apportionment} We show as \Cref{thm:hm-LP-characterization} that, for any given instance, the set of apportionment vectors output by such methods is exactly the set of (integral) extreme points of a network flow polytope. Our approach is very flexible as it remains valid when incorporating additional properties in our method as long as these properties can be expressed as constraints that preserve the network flow structure. Moreover, combining \Cref{prop:phi} and \Cref{prop:tightness} we provide tight bounds on the size of the linear program needed (which can be thought of as the worst-case lookahead value), so that we can have a polyhedral description of all such possible apportionment vectors for a given house size in a way that house monotonicity and quota compliance are not violated for any higher number of seats. As a consequence of \Cref{thm:hm-LP-characterization}, in \Cref{thm:rand-characterization} we also fully characterize the set of randomized apportionment methods that respect house monotonicity, quota compliance, and ex-ante proportionality up to a house size equal to the total population. Informally, the theorem states that since the proportional fractional allocation is a feasible point within our network flow polytope, a randomized method with the described properties can be obtained by taking any convex combination of its extreme points that results in this point. Conversely, any randomized method respecting house monotonicity, quota compliance, and ex-ante proportionality generates the same allocation, for any number of seats not greater than the total population, as some randomization over extreme points of this polytope. 

\subsection{Related Work}

There is a rich body of literature on the theory and applications of apportionment methods; for a comprehensive treatment of this topic, we refer the reader to the book of \citet{balinski2010fair} and the book of \citet{pukelsheim2017}. Closely related to our work is the stream of literature dealing with the design of house-monotone and quota-compliant methods. The existence of such a method was first shown in \citet{balinskiyoung1974} and~\citet{balinskiyoung1975}. Subsequently (and, in fact, in parallel), \citet{balinski1979} and \citet{still1979} provided simple characterizations of all methods satisfying the two properties. 
Regarding the stronger population monotonicity axiom, \citet{balinski2010fair} showed that, under basic axioms (symmetry and exactness), divisor methods are the unique family satisfying this property but, unfortunately, fail to be quota-compliant. Divisor methods are well known and widely used at national and regional levels in many democracies around the world \citep{balinski2010fair,pukelsheim2017}. If one relaxes the notion of quota compliance, the Jefferson/D'Hondt method has been shown to be the unique divisor method satisfying upper quota compliance, whereas the Adams method has been shown to be the unique divisor method satisfying upper quota compliance \citep{balinski2010fair}. \citet{marshall2002majorization} studied the behavior of apportionments produced by certain families of divisor methods such as stationary divisor methods, showing that seat transfers go from smaller to larger states as a parameter of the method increases.

As is the case of this paper, linear programming and discrete optimization have proven to be powerful tools in the design of apportionment methods. For the biproportional apportionment problem \citep{balinskidemange1989a, balinskidemange1989b}, in which proportionality is ruled by two dimensions (typically states and political parties), \citet{guenterzachariasen2007, gaffkepukelsheim2008a}, and \citet{gaffkepukelsheim2008b} developed a network flow approach to compute a solution. Recently, \citet{cembranocorreaverdugo2022} and \citet{cembranocorreadiazverdugo2021} extended these ideas to the multidimensional case, studying a discrepancy problem in hypergraphs. Furthermore, network flow techniques have been employed in other questions related to the biproportional apportionment problem by \citet{pukelsheimricca2011} and \citet{serafini2011}. Similarly, \citet{mathieu2022} studied the classic apportionment problem with the extra constraint of achieving parity between the representatives of two parts of the population. \citet{shechter2024congressional} has recently proposed a multiobjective optimization approach, studying apportionment vectors at the Pareto frontier between fairness axioms inspired by traditional divisor methods.
Apportionment also has a strong connection to just-in-time sequencing, and apportionment theory has been employed for the design algorithms in this setting \citep{jozefowska2006characterization,bautista1996note,li2022webster}. 

For the case of randomized apportionment, \citet{grimmett2004} first suggested such a method to overcome fairness issues caused by the use of deterministic methods. Despite being ex-ante proportional, quota-compliant, and easy to implement, the method proposed by \citeauthor{grimmett2004} does not satisfy the essential notions of house and population monotonicity. \citet{goelz2022apportionment} developed a randomized, ex-ante proportional method that satisfies population monotonicity but is not quota-compliant. Their method is essentially a divisor method where the signposts are sampled from independent Poisson processes of the same rate. Moreover, they provided a house-monotone, quota-compliant, and ex-ante proportional method based on a dependent rounding approach inspired by the bipartite pipage rounding procedure of \citet{gandhi2006pipage}. 
Their result is based on finding a bipartite matching description of the apportionment vectors. 
Recently, \citet{hong2023random} proposed a randomized method that is quota-compliant and satisfies stochastic versions of house monotonicity and a weaker version of population monotonicity, while \citet{correa2024monotone} studied monotonicity axioms regarding the number of seats assigned to a coalition of districts/parties for randomized apportionment methods.
Finally, \citet{aziz2019random} studied the strategyproof peer selection problem and pointed out that their proposed mechanism contains a randomized allocation subroutine, which can serve by itself as a randomized apportionment method satisfying ex-ante proportionality and quota compliance. 

For an overview of the $k$-level in line arrangement problem, we refer to \citet{matousek2013lectures}. The upper and lower bounds on this problem that we apply in \Cref{sec:pop-mon} were given by \citet{dey1998improved} and \citet{toth2000point}, respectively. 

\section{Preliminaries}\label{sec:prelims}

We denote by $\NN$ the set of strictly positive integer values and by $\NN_0=\NN \cup \{0\}$ the set of non-negative integer values. 
We also denote by $\RR_{+}$ the set of non-negative real numbers and by $\RR_{++}$ the set of strictly positive real numbers.
An instance of the apportionment problem is given by a pair $(p,H) \in \NN^n \times \NN$ for some positive integer $n$, where we refer to $p$ as the \emph{population vector} and $H$ is the so-called \emph{house size}. 
We let $[n]$ denote the set $\{1,\ldots,n\}$ for any natural value $n$ and use $P=\sum_{i=1}^n p_i$ as a shortcut for the total population. 
An \emph{apportionment method} is given by a family of multi-valued functions $f$ mapping an instance in $\NN^n \times \NN$ to a subset of $\NN^n_0$, such that for every $p\in \NN^n,~H\in \NN$, and $x \in f(p,H)$ we have $\sum_{i=1}^{n}x_i = H$.
In a slight abuse of notation, we use $f$ both to refer to a method and to individual functions of the family. For a given method, we use \textit{outcome} to refer to the set that the method outputs for a given instance and \textit{apportionment vector} to refer to the individual vectors that belong to this outcome.

For an instance $(p,H)$ and a state $i$, the \textit{quota} of $i$, given by $q_i=\frac{p_i}{P}H$, corresponds to the number of seats that the state would obtain in a proportional fractional allocation.
In this work, we consider the following axioms for apportionment methods:
\begin{enumerate}[label=(\alph*)]
    \item {\bf Quota compliance.} We say that a method $f$ is \emph{quota-compliant} if every state receives a number of seats equal to the rounding (up or down) of its quota, namely, for every population vector $p=(p_1,\ldots,p_n)$, every house size $H$, every $x\in f(p,H)$, and every $i\in [n]$, it holds $x_i \in \{ \lfloor q_i \rfloor, \lceil q_i \rceil\}$.
    \item {\bf Lower (upper) quota compliance.} We say that $f$ is lower (resp. upper) quota-compliant if every state receives a number of seats greater or equal to the floor (resp. lower or equal to the ceiling) of its quota.
    Namely, for every population vector $p=(p_1,\ldots,p_n)$, every house size $H$, every $x\in f(p,H)$, and every $i\in [n]$, it holds $x_i \geq \lfloor q_i \rfloor$ (resp. $x_i \leq \lceil q_i \rceil$).
    \item {\bf House monotonicity.} We say that $f$ is \emph{house-monotone} if no state receives fewer seats when the house size is incremented: for every $p$, every house sizes $H_1,H_2,H_3$ with $H_1<H_2<H_3$, and every $y \in f(p,H_2)$, there exist $x \in f(p,H_1)$ and $z \in f(P,H_3)$ such that $x\leq y \leq z$.\footnote{Here and throughout the paper, inequalities between vectors denote component-wise inequalities.}
    \item {\bf Population monotonicity.} We say that $f$ is \emph{population-monotone} if, whenever the populations change in a way that the population ratio between two states $i$ and $j$ increases in favor of $i$, it does not occur that state $i$ receives strictly fewer seats and $j$ receives strictly more seats. Formally, for every population vectors $p,p'$, every house sizes $H,H'$, every apportionments $x\in f(p,H), x'\in f(p',H')$, and every pair of states $i,j\in [n]$, whenever $p'_i/p'_j \geq p_i/p_j$ it holds either (i) $x_i\leq x'_i$, (ii) $x_j\geq x'_j$, or (iii) $p'_i/p'_j = p_i/p_j$ and $x' \in f(p,H)$.
\end{enumerate}

Denoting as $\calF$ the set of methods, a \emph{randomized method} consists of $F$, a random variable on $\calF$, and a tie-breaking distribution $B$ on subsets of $\NN^n_0$; we write both compactly as $F^B$.
For each possible population vector $p$ and house size $H$, we write $F^B(p,H)$ for the random variable corresponding to the output of the method realized by $F$ evaluated in $(p,H)$, breaking ties according to $B$ in case the method outputs multiple apportionment vectors.\footnote{This definition is made to ensure that a randomized method outputs a single apportionment vector for every instance, which will turn useful, for example, when computing the expected outcome of a randomized divisor method.}
We omit the superscript corresponding to the tie-breaking distribution whenever we work with methods that output a single vector for every instance.
We remark that any randomness involved in the randomized method is realized independently of the specific instance.
A randomized method $F^B$ is (lower/upper) quota-compliant, house-monotone, or population-monotone if $F$ is such that the methods that do not satisfy these properties are realized with probability zero.
Furthermore, we say that $F^B$ is \emph{ex-ante proportional} if every state receives its quota in expectation: for every $p=(p_1,\ldots,p_n)$, every house size $H$ and every $i\in [n]$, it holds $\EE(F^B_i(p,H))=q_i$.

\section{The Combinatorial Structure of Stationary Divisor Methods}\label{sec:pop-mon}

The most well-known population-monotone apportionment methods are divisor methods. Among these, prominent examples are the Jefferson/D'Hondt method, the Webster/Sainte-Laguë method, and the Adams method, all of which fall into the subclass of \textit{stationary divisor methods}. Despite being widespread, little is known about the diversity of outcomes that can be derived from these methods. In particular, \textit{how many different stationary divisor methods may return different outcomes in the worst case?} Given the omnipresence of divisor methods in real-world politics, answering this question can help us to understand the influence of the choice between different divisor methods. While we cannot answer this question exactly, we show that---perhaps surprisingly---doing so would solve a long-standing open question from discrete geometry. On our way towards finding almost matching upper and lower bounds (\Cref{subsec:ub-bp} and \Cref{subsec:lb-bp}), we derive several structural insights into the space of outcomes of stationary divisor methods through the lens of our new geometric perspective. In \Cref{sec:power}, we discuss how we can extend our upper bound from \Cref{subsec:ub-bp} to the class of \emph{power-mean} divisor methods by applying results for pseudoline arrangements.

The space of stationary divisor methods is parameterized by $\delta \in [0,1]$, and we refer to the stationary divisor method with parameter $\delta$ as the $\delta$-divisor method. To introduce them, we define a \emph{rounding rule} parameterized by $\delta$, which simply rounds up a number if its fractional part is strictly above $\delta$, downwards if it is strictly below $\delta$, and either of them if it is equal to $\delta$. Formally, 
\[
    \llbracket r \rrbracket_\delta = \begin{cases}
        \{0 \} & \text{ if } r < \delta,\\
        \{t \} & \text{ if } t-1 + \delta < r < t + \delta \text{ for some } t\in \NN_0,\\
        \{t,t+1\} & \text{ if } r = t + \delta \text{ for some } t\in \NN_0.
    \end{cases}
\]

For $\delta \in [0,1]$, the \textit{$\delta$-divisor method} is a family of functions $f(\cdot,\cdot;\delta)$ (i.e., one function for each number $n \in \mathbb{N}$)
such that for every $p\in \NN^n$ and $H\in \NN$
\[
    f(p,H;\delta) = \bigg\{ x\in \NN^n_0 ~\bigg|~ \text{there exists } \lambda>0 \text{ s.t.\ } x_i \in \llbracket \lambda p_i\rrbracket_\delta \text{ for every }i\in [n] \text{ and } \sum_{i=1}^{n} x_i = H \bigg\}.
\]
The Jefferson/D'Hondt method corresponds to $f(\cdot,\cdot;1)$ in our notation, the Webster/Sainte-Lagu\"e method corresponds to $f\big(\cdot,\cdot;\frac{1}{2}\big)$, and the Adams method corresponds to $f(\cdot,\cdot;0)$.
For an instance $(p,H)$, a value $\delta\in [0,1]$ and a vector $x\in f(p,H;\delta)$, we let $$\Lambda(x;\delta) = \left\{ \lambda\in \RR_{++} ~|~ x_i\in \llbracket \lambda p_i\rrbracket_{\delta} \text{ for all } i\in [n] \right\}$$ denote the set of multipliers producing this output via the $\delta$-divisor method.

\paragraph{Breaking points.} To study the diversity of stationary divisor methods, we introduce the notion of \textit{breaking points} of an instance $(p,H)$, which informally correspond to the values of $\delta$ at which the outputs of stationary divisor methods change.

\begin{definition}
    The \textit{breaking points} of an instance $(p,H)$ are defined by $\tau_0=0$ and inductively by 
    \[\tau_i = \max \{\delta \in (\tau_{i-1},1] \mid f(p,H;\delta_1) = f(p,H;\delta_2) \; \forall \; \delta_1,\delta_2 \in (\tau_{i-1},\delta)\}\]
    until $\tau_i = 1$, at which point we fix $B=i$.
\end{definition}
By the definition it follows directly that $0=\tau_0<\tau_1<\tau_2<\cdots<\tau_{B-1}<\tau_{B}=1$ and that, for all $i \in \{0,\dots,B-1\}$ and $\delta, \delta' \in (\tau_i,\tau_{i+1})$, $f(p,H;\delta) = f(p,H;\delta')$. The following observation states a natural convexity notion of stationary divisor methods.

\begin{restatable}{obs}{obsconvexity}\label{obs:convexity}
Let $(p,H)\in \NN^n\times \NN$ and $\delta_1,\delta_2\in [0,1]$ be arbitrary values with $\delta_1<\delta_2$.
Then, for every $x\in f(p,H;\delta_1)\cap f(p,H;\delta_2)$ and every $\delta\in [\delta_1,\delta_2]$, we have that $x\in f(p,H;\delta)$.
\end{restatable}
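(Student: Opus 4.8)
The plan is to show that the set of multipliers $\Lambda(x;\delta)$ is a closed interval that moves continuously (in a monotone way) as $\delta$ increases, and then use the hypothesis that $x$ is a valid output at both endpoints $\delta_1$ and $\delta_2$ to conclude it is valid for every $\delta$ in between. Fix $x\in f(p,H;\delta_1)\cap f(p,H;\delta_2)$. The key observation is a pointwise one: for a single state $i$ with $x_i = t$, the condition $t \in \llbracket \lambda p_i \rrbracket_\delta$ is equivalent to $t - 1 + \delta \le \lambda p_i \le t + \delta$, i.e.\ $\lambda \in \big[\frac{t-1+\delta}{p_i},\, \frac{t+\delta}{p_i}\big]$ (reading the lower bound as $0$ when $t=0$). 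Call this interval $I_i(\delta)$. Then $\Lambda(x;\delta) = \bigcap_{i\in[n]} I_i(\delta)$, a closed (possibly degenerate) interval, and $x\in f(p,H;\delta)$ iff $\Lambda(x;\delta)\neq\emptyset$ (the constraint $\sum_i x_i = H$ is already baked into $x$ and does not depend on $\delta$).

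The main step is then to show: if $\Lambda(x;\delta_1)\neq\emptyset$ and $\Lambda(x;\delta_2)\neq\emptyset$ with $\delta_1<\delta_2$, then $\Lambda(x;\delta)\neq\emptyset$ for all $\delta\in[\delta_1,\delta_2]$. Write $\Lambda(x;\delta) = [L(\delta),\, R(\delta)]$ where $L(\delta) = \max_{i} \frac{x_i-1+\delta}{p_i}$ (with the convention that states having $x_i=0$ contribute $0$, so $L(\delta)=\max\{0,\max_{i:x_i\ge 1}\frac{x_i-1+\delta}{p_i}\}$) and $R(\delta) = \min_i \frac{x_i+\delta}{p_i}$. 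Both $L$ and $R$ are continuous, piecewise-linear, nondecreasing functions of $\delta$ (each is a max, resp.\ min, of affine nondecreasing functions), with all slopes in $\{0\}\cup\{1/p_i : i\in[n]\}$. Nonemptiness of $\Lambda(x;\delta)$ is the inequality $g(\delta) := R(\delta) - L(\delta) \ge 0$. The function $g$ need not be monotone, but I would argue convexity directly: $R$ is concave (a min of affine functions) and $L$ is convex (a max of affine functions), so $g = R - L$ is concave; hence if $g(\delta_1)\ge 0$ and $g(\delta_2)\ge 0$ then $g(\delta)\ge \min\{g(\delta_1),g(\delta_2)\}\cdot$(appropriate convex-combination bound)$\,\ge 0$ for all $\delta\in[\delta_1,\delta_2]$. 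That gives $\Lambda(x;\delta)\neq\emptyset$, i.e.\ there exists $\lambda>0$ with $x_i\in\llbracket\lambda p_i\rrbracket_\delta$ for all $i$, so $x\in f(p,H;\delta)$.

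The one point that needs a little care — and which I expect to be the main (if minor) obstacle — is the boundary/edge-case bookkeeping: the lower endpoint of $I_i(\delta)$ is $0$ rather than $\frac{\delta-1}{p_i}$ when $x_i=0$, so I want $L(\delta)=\max\{0,\ldots\}$, and I should check this is still convex and nondecreasing (it is, as a max of affine nondecreasing functions including the constant $0$); and I should confirm $\lambda>0$ strictly, which holds since $R(\delta)\ge R(\delta_1)>0$ (as $\Lambda(x;\delta_1)$ is a nonempty subset of $\RR_{++}$, so $R(\delta_1)>0$), and $R$ is nondecreasing. I should also note the trivial sub-case $\delta\in\{\delta_1,\delta_2\}$ is given. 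With the convexity argument above this is essentially a one-paragraph proof once the pointwise description of $\Lambda(x;\delta)$ is written down.
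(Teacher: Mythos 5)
Your proposal is correct, and it rests on the same underlying fact as the paper's proof: the set $\{(\delta,\lambda) : x_i-1+\delta \le \lambda p_i \le x_i+\delta \ \forall i\}$ is a convex polyhedron in $(\delta,\lambda)$. The paper realizes this concretely by picking $\lambda_1\in\Lambda(x;\delta_1)$, $\lambda_2\in\Lambda(x;\delta_2)$ and verifying directly that the $\delta$-weighted convex combination $\lambda = \tfrac{\delta_2-\delta}{\delta_2-\delta_1}\lambda_1 + \tfrac{\delta-\delta_1}{\delta_2-\delta_1}\lambda_2$ satisfies all constraints; you package the same thing as concavity of $g(\delta)=R(\delta)-L(\delta)$, which is equivalent (the convex combination in the paper is a point lying between $L(\delta)$ and $R(\delta)$ precisely because $L$ is convex and $R$ is concave). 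Your extra care about strict positivity of $\lambda$ — via $R$ nondecreasing and $R(\delta_1)>0$ — is a valid way to dispose of the edge case; in the paper it is implicit because the interpolated $\lambda$ is a convex combination of two strictly positive numbers.
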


To prove this observation, we show that an appropriate convex combination of the two multipliers associated with $x$ in $f(p,H;\delta_1)$ and in $f(p,H;\delta_2)$ is itself a multiplier associated with $x$ in $f(p,H;\delta)$; see \Cref{app:obs-convexity} for the proof.
The example in \Cref{fig:line-arrangement-a} shows that the interval $[\delta_{-}(x),\delta_{+}(x)]$ for which a vector $x$ is output might actually be a single breaking point, i.e., we may have $\delta_{-}(x) = \delta_{+}(x)$. 

\begin{figure}[t]
\centering
\begin{subfigure}[b]{.45\textwidth}
    \centering
    \scalebox{0.9}{
    \begin{tikzpicture}
    \node at (0,0) {}; 
    \draw[fill=grey!40,draw=grey, thick] (0.6,-.5) rectangle (3.3,-1); 
    \draw[fill=grey!40,draw=grey, thick] (3.4,-.5) rectangle (6.3,-1); 
    \node at (1.95,-0.75) {$(\textcolor{blue}{2},\textcolor{green}{1},\textcolor{red}{1})$}; 
    \node at (4.75,-0.75) {$(\textcolor{blue}{3},\textcolor{green}{1},\textcolor{red}{0})$}; 

    \draw[fill=black,draw=black] (3.32,-.5) rectangle (3.38,-1.25); 
    \node at (3.35,-1.5) {$(\textcolor{blue}{2},\textcolor{green}{1},\textcolor{red}{1})$}; 
    \node at (3.35,-1.9) {$(\textcolor{blue}{2},\textcolor{green}{2},\textcolor{red}{0})$}; 
    \node at (3.35,-2.3) {$(\textcolor{blue}{3},\textcolor{green}{1},\textcolor{red}{0})$}; 
    
    \begin{axis}[xlabel=$\delta$,ylabel=$\frac{t+\delta}{p_i}$, xmin=0,xmax=1,ymin=0,ymax=1, axis lines=center, axis equal]
    \def\H{6}

    \addplot[domain=0:0.25,color=grey,line width=6pt]{1/5 + x/5}; 
   \addplot[domain=0.25:0.5,color=grey,line width=6pt]{x}; 
    \addplot[domain=0.5:1,color=grey,line width=6pt]{2/5 + x/5};

    \foreach \t in {0,...,\H}{
        \addplot[domain=0:1, color=blue,line width=1.5pt]{\t/5 + x/5}; 
        \addplot[domain=0:1, color=green,line width=1.5pt]{\t/3 + x/3}; 
        \addplot[domain=0:1, color=red,line width=1.5pt]{\t/1 + x/1}; 
    }

    \node[circle, thick, draw =blue, fill=blue] at (0,20){}; 
    \node[circle, thick, draw =blue, fill=blue] at (100,60){}; 
    \node[circle, thick, draw =blue, fill=blue] at (50,50){}; 
    \node[circle, thick, draw =blue, fill=white] at (25,25){}; 
    \end{axis}
    \end{tikzpicture}}
    \caption{Illustration of instance $(p=(5,3,1),H=4)$.}
    \label{fig:line-arrangement-a}
\end{subfigure} \quad  \centering \begin{subfigure}[b]{.45\textwidth}
    \centering
    \scalebox{0.9}{
    \begin{tikzpicture}
    \node at (0,0) {}; 
    \draw[fill=grey!40,draw=grey, thick] (0.6,-.5) rectangle (1.65,-1); 
    \draw[fill=grey!40,draw=grey, thick] (1.75,-.5) rectangle (3.35,-1); 
    \draw[fill=grey!40,draw=grey, thick] (3.45,-.5) rectangle (5.05,-1); 
    \draw[fill=grey!40,draw=grey, thick] (5.15,-.5) rectangle (6.3,-1); 
    
    \node at (1.125,-0.75) {$(\textcolor{blue}{3},\textcolor{green}{2},\textcolor{red}{1})$}; 
    \node at (2.5,-0.75) {$(\textcolor{blue}{4},\textcolor{green}{1},\textcolor{red}{1})$}; 
    \node at (4.2,-0.75) {$(\textcolor{blue}{4},\textcolor{green}{2},\textcolor{red}{0})$}; 
    \node at (5.675,-0.75) {$(\textcolor{blue}{5},\textcolor{green}{1},\textcolor{red}{0})$}; 

    \draw[fill=black,draw=black] (1.67,-.5) rectangle (1.73,-1.25); 
    \draw[fill=black,draw=black] (3.37,-.5) rectangle (3.43,-1.25); 
    \draw[fill=black,draw=black] (5.07,-.5) rectangle (5.13,-1.25); 
    \node at (1.7,-1.5) {$(\textcolor{blue}{3},\textcolor{green}{2},\textcolor{red}{1})$}; 
    \node at (1.7,-1.9) {$(\textcolor{blue}{4},\textcolor{green}{1},\textcolor{red}{1})$}; 
    \node at (1.7,-2.28) {}; 

    \node at (3.4,-1.5) {$(\textcolor{blue}{4},\textcolor{green}{1},\textcolor{red}{1})$}; 
    \node at (3.4,-1.9) {$(\textcolor{blue}{4},\textcolor{green}{2},\textcolor{red}{0})$}; 

    \node at (5.1,-1.5) {$(\textcolor{blue}{4},\textcolor{green}{2},\textcolor{red}{0})$}; 
    \node at (5.1,-1.9) {$(\textcolor{blue}{5},\textcolor{green}{1},\textcolor{red}{0})$}; 
    \begin{axis}[xlabel=$\delta$,ylabel=$\frac{t+\delta}{p_i}$, xmin=0,xmax=1,ymin=0,ymax=1, axis lines=center, axis equal]
    \def\H{6}

    \addplot[domain=0:0.2,color=grey,line width=6pt]{1/3 + x/3}; 
    \addplot[domain=0.2:0.44,color=grey,line width=6pt]{3/8 + x/8}; 
    \addplot[domain=0.42:0.5,color=grey,line width=6pt]{0/1 + x/1};
    \addplot[domain=0.5:0.8,color=grey,line width=6pt]{1/3 + x/3};
    \addplot[domain=0.8:1,color=grey,line width=6pt]{4/8 + x/8};

    \foreach \t in {0,...,\H}{
        \addplot[domain=0:1, color=blue,line width=1.5pt]{\t/8 + x/8}; 
        \addplot[domain=0:1, color=green,line width=1.5pt]{\t/3 + x/3}; 
        \addplot[domain=0:1, color=red,line width=1.5pt]{\t/1 + x/1}; 
    }

    \node[circle, thick, draw =blue, fill=blue] at (0,33){}; 
    \node[circle, thick, draw =blue, fill=blue] at (20,40){}; 
    \node[circle, thick, draw =blue, fill=white] at (42,42){}; 
    \node[circle, thick, draw =blue, fill=blue] at (50,50){}; 
    \node[circle, thick, draw =blue, fill=blue] at (80,60){}; 
    \node[circle, thick, draw =blue, fill=blue] at (100,62){}; 
    
    \end{axis}
    \end{tikzpicture}}
    \caption{Illustration of instance $(p=(8,3,1),H=6)$.}
    \label{fig:line-arrangement-b}
\end{subfigure}
\caption{Illustration of the outputs and breaking points of apportionment instances. Linear functions in $\mathcal{L}$ corresponding to state $1$ ($2$,$3$, respectively) are illustrated by blue (yellow, red, respectively) lines. The function $\lambda_H$ corresponding to the $(H-1)$-level is illustrated by thick light gray segments. 
Filled circles correspond to breaking points of $(p,H)$; unfilled circles correspond to vertices of $\lambda_H$ that are not breaking points of $(p,H)$. Outputs for each value of $\delta\in [0,1]$ are shown below the plots.}
\label{fig:line-arrangement}
\end{figure}
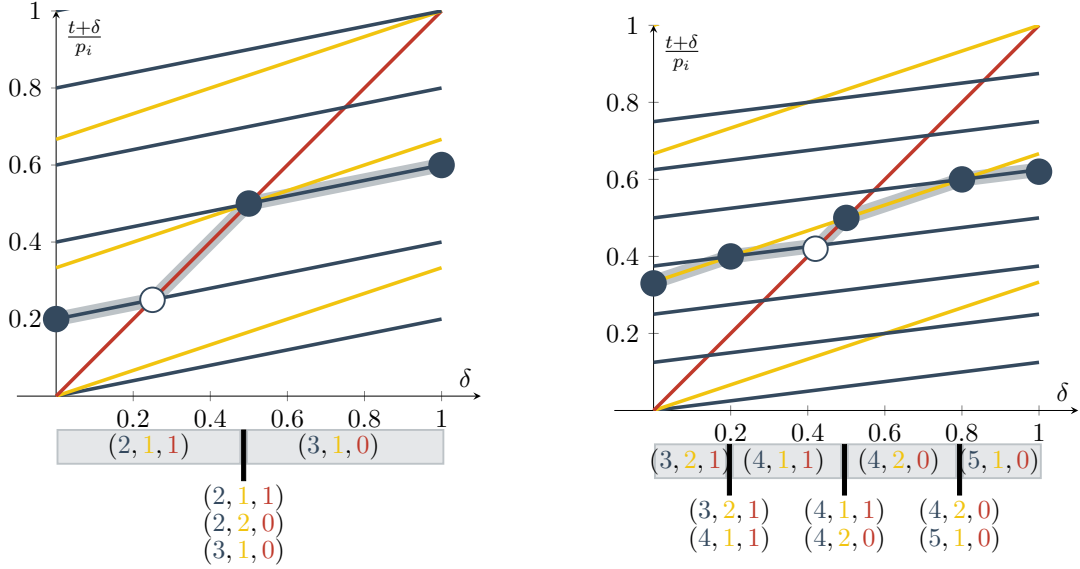

\paragraph{Majorization and a first upper bound.} \citet{marshall2002majorization} showed a particular relation between apportionments produced by different divisor methods. For vectors $x,y\in \NN^n_0$ whose components sum up to the same value, we say that~$x$ \emph{majorizes}~$y$ if, for every $i\in[n]$, the sum of the $i$ largest components of $x$ is at least the sum of the $i$ largest components of $y$. The authors showed that, for all values $0\leq \delta \leq \delta' \leq 1$ and all instances $(p,H)$, any vector in $f(p,H;\delta')$ majorizes any vector in $f(p,H;\delta)$. In simple terms, increasing $\delta$ can only produce seat transfers from smaller to larger districts. 
An upper bound of $\calO(nH)$ on the number of breaking points follows immediately from this property, as seat transfers go in a single direction. In order to beat this upper bound and get rid of the dependence on~$H$, we will explore in the remainder of this section the geometric structure of divisor methods through a connection to line arrangements.

\subsection{Connection to Line Arrangements}\label{subsec:line-arrangements}

In the following, we uncover structural insights about breaking points and the space of outcomes of $\delta$-divisor methods. To do so, we first draw a connection to line arrangements. 
For an apportionment instance $(p,H)$, we introduce the following family of linear functions with domain in $[0,1]$: \[\mathcal{L}(p,H) = \bigg\{\ell_{i,t}(\delta) = \frac{t}{p_i} + \frac{\delta}{p_i}\;\bigg\vert\; i \in [n], t \in \{0,\dots,H-1\}\bigg\}.\] 
These functions are illustrated in \Cref{fig:line-arrangement}. When clear from the context, we omit the apportionment instance and write $\mathcal{L}$ instead of $\mathcal{L}(p,H)$. 
For some $x \in \mathbb{N}^{n}_0$ with $\sum_{i =1}^n x_i = H$, it holds that $x \in f(p,H;\delta)$ if and only if there exists $\lambda \in \mathbb{R}_{++}$ such that for every $i \in [n]$ we have that $x_i - 1 + \delta \leq \lambda p_i \leq x_i + \delta$. Dividing the constraint for $i \in [n]$ by $p_i$, this is equivalent to the condition that \[\ell_{i,x_i-1}(\delta) \leq \lambda \leq \ell_{i,x_i}(\delta).\]
Interpreting these constraints geometrically, we get the following equivalence: For some $x \in \mathbb{N}^{n}_0$ with $\sum_{i =1}^n x_i = H$, it holds that $x \in f(p,H;\delta)$ if and only if there exists $\lambda \in \mathbb{R}_{++}$ such that for every $i\in [n]$ we have that $\ell_{i,0}(\delta)< \dots < \ell_{i,x_i-1}(\delta) \leq \lambda$ and $\ell_{i,x_i}(\delta) \geq \lambda$. Hence, another interpretation of the multiplier $\lambda$ is that of a threshold such that all lines in $\mathcal{L}$ below $\lambda$ get assigned one seat and those lines at $\lambda$ potentially get a seat assigned. Here, \textit{assigning} a seat to a line refers to assigning a seat to the corresponding state. In order to meet the house size $H$, $\lambda$ must be chosen such that the number of lines strictly below $\lambda$ is at most $H$ and the number of lines weakly below $\lambda$ is at least $H$. 
Hence, the minimum choice of $\lambda$ as a function of $\delta$ is given by
$$\lambda_H(\delta) = \min \big\{\lambda \in \mathbb{R}\;\big\vert\; |\{\ell \in \mathcal{L}(p,H) \mid \ell(\delta) \leq \lambda\}| \geq H \big\}.$$ 

We also illustrate $\lambda_H$ in \Cref{fig:line-arrangement}. 
Recall that, for a given $x \in f(p,H;\delta)$, $\Lambda(x;\delta)$ is the set of feasible multipliers for $x$ in the $\delta$-divisor method. We immediately get that $\Lambda(x;\delta) = [\lambda_H(\delta), \lambda_{H+1}(\delta)]$ and this set does not depend on $x$.
Now, let $\mathcal{L}_H(\delta) = \big \{ \ell \in \mathcal{L} \mid \ell(\delta) = \lambda_H(\delta)\big\}$ and, similarly, $\mathcal{L}_{<H}(\delta) = \big \{ \ell \in \mathcal{L} \mid \ell(\delta) < \lambda_H(\delta)\big\}$, $\mathcal{L}_{\leq H}(\delta) = \mathcal{L}_{< H}(\delta) \cup \mathcal{L}_{H}(\delta)$, and $\mathcal{L}_{\geq H}(\delta) = \mathcal{L} \setminus \mathcal{L}_{<H}(\delta)$. We obtain the following characterization of the apportionment output by the $\delta$-divisor method.

\begin{obs} \label{obs:lineSets}
For $(p,H)\in \NN^n\times \NN$, $\delta\in[0,1]$, and $x \in \mathbb{N}^n_0$ with $\sum_{i \in [n]} x_i = H$, it holds that $$x \in f(p,H;\delta) \Longleftrightarrow \ell_{i,x_i-1} \in \mathcal{L}_{\leq H}(\delta) \text{ and } \ell_{i,x_i} \in \mathcal{L}_{\geq H}(\delta) \text{ for every } i \in [n].$$ 
\end{obs}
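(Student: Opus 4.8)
The plan is to unwind the definitions established just before the statement, since Observation~\ref{obs:lineSets} is essentially a repackaging of the equivalences already derived in this subsection. The starting point is the chain of equivalences in the paragraph preceding the observation: for $x \in \NN_0^n$ with $\sum_{i \in [n]} x_i = H$, we have $x \in f(p,H;\delta)$ if and only if there exists $\lambda \in \RR_{++}$ such that $\ell_{i,x_i-1}(\delta) \leq \lambda \leq \ell_{i,x_i}(\delta)$ for every $i \in [n]$ (with the convention that the lower bound is vacuous when $x_i = 0$, since then no line $\ell_{i,x_i-1}$ appears). Equivalently, using $\Lambda(x;\delta) = [\lambda_H(\delta), \lambda_{H+1}(\delta)]$, the vector $x$ is output precisely when this interval is nonempty and contained in the ``corridor'' $\bigcap_{i \in [n]} [\ell_{i,x_i-1}(\delta), \ell_{i,x_i}(\delta)]$.

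First I would prove the forward direction. Suppose $x \in f(p,H;\delta)$; then $\lambda_H(\delta) \in \Lambda(x;\delta)$, so $\ell_{i,x_i-1}(\delta) \leq \lambda_H(\delta) \leq \ell_{i,x_i}(\delta)$ for each $i$. The second inequality says $\ell_{i,x_i}(\delta) \geq \lambda_H(\delta)$, which is exactly the condition $\ell_{i,x_i} \in \mathcal{L}_{\geq H}(\delta)$ by definition of that set. The first inequality says $\ell_{i,x_i-1}(\delta) \leq \lambda_H(\delta)$, i.e.\ $\ell_{i,x_i-1} \in \mathcal{L}_{\leq H}(\delta) = \mathcal{L}_{<H}(\delta) \cup \mathcal{L}_H(\delta)$; again when $x_i=0$ there is nothing to check. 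This handles one direction.

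For the converse I would assume $\ell_{i,x_i-1} \in \mathcal{L}_{\leq H}(\delta)$ and $\ell_{i,x_i} \in \mathcal{L}_{\geq H}(\delta)$ for every $i$, and exhibit a valid multiplier. Take $\lambda = \lambda_H(\delta)$. By definition of $\mathcal{L}_{\leq H}(\delta)$ and $\mathcal{L}_{\geq H}(\delta)$ we immediately get $\ell_{i,x_i-1}(\delta) \leq \lambda_H(\delta) \leq \ell_{i,x_i}(\delta)$ for each $i$. Since the lines for a fixed state $i$ are linearly ordered pointwise ($\ell_{i,0}(\delta) < \ell_{i,1}(\delta) < \cdots$ because $\frac{t+\delta}{p_i}$ is strictly increasing in $t$), the condition $\lambda_H(\delta) \leq \ell_{i,x_i}(\delta)$ together with $\lambda_H(\delta) \geq \ell_{i,x_i-1}(\delta)$ shows that exactly the lines $\ell_{i,0}, \dots, \ell_{i,x_i-1}$ lie weakly below $\lambda_H(\delta)$ among state $i$'s lines (those at or below threshold), which translates back into $x_i - 1 + \delta \leq \lambda_H(\delta)\, p_i \leq x_i + \delta$, i.e.\ $x_i \in \llbracket \lambda_H(\delta)\, p_i \rrbracket_\delta$. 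Since this holds for all $i$ and $\sum_i x_i = H$ by hypothesis, $\lambda_H(\delta)$ witnesses $x \in f(p,H;\delta)$.

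The only subtle point — and the step I would be most careful about — is the boundary behavior at the threshold and the $x_i = 0$ edge case. When $\ell_{i,x_i-1}(\delta) = \lambda_H(\delta)$ (a line lying exactly on the threshold, i.e.\ in $\mathcal{L}_H(\delta)$), the rounding rule $\llbracket \cdot \rrbracket_\delta$ is genuinely multi-valued, and one must check that the value $x_i$ we picked is indeed one of the two admissible roundings; this is where it matters that the characterization is stated with $\mathcal{L}_{\leq H}$ and $\mathcal{L}_{\geq H}$ (closed conditions) rather than strict ones, and that $\lambda_H(\delta)$ is defined as the \emph{minimum} $\lambda$ with at least $H$ lines weakly below it. Similarly, for $x_i = 0$ the ``$\ell_{i,x_i-1}$'' condition should be read as vacuously true, and I would state that convention explicitly at the start of the proof. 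Once these conventions are fixed, the argument is just a direct translation through the definitions, so I do not anticipate a genuine obstacle beyond careful bookkeeping.
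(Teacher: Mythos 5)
Your proof is correct and follows essentially the same route as the paper, which states the observation as an immediate consequence of the chain of equivalences developed in the preceding paragraph (the defining inequalities $x_i-1+\delta\leq\lambda p_i\leq x_i+\delta$ rewritten as $\ell_{i,x_i-1}(\delta)\leq\lambda\leq\ell_{i,x_i}(\delta)$, the identification $\Lambda(x;\delta)=[\lambda_H(\delta),\lambda_{H+1}(\delta)]$, and the definitions of $\mathcal{L}_{\leq H}$, $\mathcal{L}_{\geq H}$). Two very small points: the intermediate sentence in your converse claiming ``exactly the lines $\ell_{i,0},\dots,\ell_{i,x_i-1}$ lie weakly below $\lambda_H(\delta)$'' is slightly off when $\ell_{i,x_i}(\delta)=\lambda_H(\delta)$ (that line is also weakly below), though you don't actually use it, and beside the $x_i=0$ convention you flag there is the symmetric $x_i=H$ case where $\ell_{i,x_i}\notin\mathcal{L}(p,H)$ and the second condition must likewise be read as vacuous---the paper glosses over both.
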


Note that \Cref{obs:lineSets} restricted to fixed $\delta$ is well known. Yet, to the best of our knowledge, we are the first to systematically study the set of outcomes derived from stationary divisor methods for varying $\delta$. This brings us to the connection to line arrangements.

\paragraph{The $k$-level in line arrangements.} Consider a set of $n$ lines in the plane. The intersections of two lines are called \emph{vertices} and the \emph{edges} are the line segments between any two vertices. 
For some $k \in [n]$, the \emph{$k$-level} of the line arrangement is the closure of all edges that have exactly $k$ lines strictly below them.
For some apportionment instance $(p,H)$, the set of lines $\mathcal{L}(p,H)$ can be directly interpreted as a line arrangement of $nH$ lines in the plane. Moreover, unless $p_i = p_j$ for some $i,j \in [n]$, the $(H-1)$-level of $\mathcal{L}(p,H)$ is well defined and equals exactly $\lambda_H(\delta)$. Note that $\lambda_H(\delta)$ is well defined even when $p_i=p_j$ for some $i,j \in [n]$. In the literature on line arrangements, it is often assumed that lines are in \textit{general position}, i.e., no three lines intersect at the same point. While we cannot make this assumption for $\mathcal{L}(p,H)$ (see, e.g., \Cref{fig:line-arrangement-a}), whenever we apply results from the literature the general position assumption is without loss of generality. 

\paragraph{Vertices and breaking points.} In the next section, we will show that if $\tau \in [0,1]$ is a breaking point of $(p,H)$, then $(\tau,\lambda_H(\tau))$ is a vertex of $\lambda_H(\delta)$. Conversely, not every vertex of $\lambda_H(\delta)$ is located at a breaking point. Consider for example the unfilled circle in \Cref{fig:line-arrangement-a}. Since the set $\mathcal{L}_{\leq H}(\delta)$ is exactly of size $H$ and does not change at this intersection point, $f(p,H;\delta)$ does not change at this point. In general, vertices of $\lambda_H(\delta)$ at which the slope increases (also referred to as \emph{convex} vertices) do not correspond to breaking points of $(p,H)$ while points at which the slope decreases (also referred to as \emph{concave} vertices) correspond to breaking points. The number of vertices of a $k$-level is also referred to as its \emph{complexity}. Establishing tight worst-case bounds on the complexity of a $k$-level (in terms of the number of lines and $k$) is a long-standing open problem in discrete geometry. For an overview of known results, we refer to Section 11 by \citet{matousek2013lectures}. 

The connection we have sketched between the breaking points of an apportionment instance and the complexity of the $k$-level in a line arrangement for some $k$ is formally stated in the following theorem, which constitutes the main result of this section.

\begin{restatable}{theorem}{thmMain}\label{thm:main}
Let 
\begin{itemize}
    \item $g: \NN \to \RR$ be such that, for any arrangement of $m$ lines and any $k\in \{0,\ldots,m\}$, the complexity of the $k$-level is bounded by $\calO(g(m))$, and
    \item $h: \NN \to \RR$ be such that, for any $m\in \NN$, there exists an arrangement of $m$ lines whose $k$-level has complexity $\Omega(h(m))$ for some $k\in \{0,\ldots,m\}$.
\end{itemize}
Then, for any apportionment instance $(p,H)$, the number of breaking points of $(p,H)$ is upper bounded by $\mathcal{O} (g(n))$, where $n$ is the number of states. Conversely, for any $n\in \NN$, there exists an apportionment instance with $n$ states and $\Omega(h(n))$ breaking points.
\end{restatable}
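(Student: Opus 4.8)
The plan is to prove the two directions of \Cref{thm:main} separately, each by carefully exploiting \Cref{obs:lineSets} together with the two already-established facts that (i) breaking points of $(p,H)$ correspond exactly to the \emph{concave} vertices of the $(H-1)$-level $\lambda_H$ of the line arrangement $\mathcal{L}(p,H)$, and (ii) by \Cref{obs:convexity} (majorization/convexity), as $\delta$ increases the output changes monotonically, so that no two distinct breaking points share the same output and each breaking point witnesses a genuine change in $\mathcal{L}_{\leq H}(\delta)$.

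\textbf{Upper bound.} First I would make precise the claim that every breaking point $\tau$ of $(p,H)$ gives rise to a vertex of $\lambda_H$ at $(\tau,\lambda_H(\tau))$: if $f(p,H;\cdot)$ changes at $\tau$, then by \Cref{obs:lineSets} the set $\mathcal{L}_{\leq H}(\delta)$ (equivalently the set of lines weakly below the level) must change as $\delta$ crosses $\tau$, which forces $\lambda_H$ to be non-differentiable at $\tau$, i.e.\ $(\tau,\lambda_H(\tau))$ is a vertex of the $(H-1)$-level. Hence the number of breaking points is at most the number of vertices of the $(H-1)$-level of an arrangement of $nH$ lines, which is $\calO(g(nH))$ — but this still carries the unwanted factor $H$. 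To remove it, I would argue that the $nH$ lines of $\mathcal{L}(p,H)$ fall into $n$ "pencils," the $i$-th consisting of $H$ parallel lines $\ell_{i,0},\dots,\ell_{i,H-1}$ of common slope $1/p_i$, equally spaced vertically by $1/p_i$; within a pencil the lines never cross, and along the $(H-1)$-level only the lines $\ell_{i,x_i-1}$ and $\ell_{i,x_i}$ "active for state $i$" matter (\Cref{obs:lineSets}). The standard argument here is that for a $k$-level, only a bounded number of lines from each parallel pencil can ever appear on the level in a given local region, so the level's complexity is governed by the $n$ pencils rather than the $nH$ individual lines; formally one contracts each pencil to a single "fat line" or invokes the known fact that the complexity of the $k$-level of an arrangement of $n$ pencils is $\calO(g(n))$. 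This contraction step — showing rigorously that the $H$-dependence collapses and that the resulting object is (combinatorially) a genuine $n$-line arrangement to which the bound $g$ applies, handling the non-general-position issue noted in the excerpt — is the part I expect to require the most care, and is likely where the paper does its real work in \Cref{subsec:ub-bp}.

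\textbf{Lower bound.} For the converse I would start from an arrangement of $n$ lines whose $k$-level has $\Omega(h(n))$ vertices, and reverse-engineer an apportionment instance realizing these as breaking points. Since we need \emph{concave} vertices specifically, I would first observe (a routine counting/Euler-type argument on a monotone polygonal level) that a $k$-level with $\Omega(h(n))$ vertices has $\Omega(h(n))$ concave vertices. Then I would place the arrangement in "apportionment form": by an affine change of coordinates push the relevant portion of the $k$-level into the strip $\delta\in[0,1]$, $\lambda>0$, arrange the lines to have positive slopes, and then \emph{perturb and rationalize} so that line $j$ becomes $\ell_{i_j,t_j}(\delta)=(t_j+\delta)/p_{i_j}$ for suitable integer populations $p_i$ and indices $t_j\in\{0,\dots,H-1\}$ — the key point being that a line of the form $(t+\delta)/p$ is determined by its slope $1/p$ and intercept $t/p$, so we need the $n$ target lines to organize into pencils of parallel, evenly-spaced lines; choosing the slopes of the original arrangement to take only a few distinct rational values and scaling, one can force exactly this structure. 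Finally I would check that with $H$ chosen so that the target level is the $(H-1)$-level, $\mathcal{L}_{\leq H}$ really does change at each concave vertex, so each is a breaking point, giving $\Omega(h(n))$ breaking points. The delicate point here is the simultaneous realization constraint: the $n$ lines are not free but must come in arithmetic-progression pencils, so I expect the construction either to spend a constant factor of extra lines per pencil or to appeal to the fact that the known lower-bound constructions for $k$-levels (e.g.\ those of \citet{toth2000point}) can already be taken to have lines of few distinct slopes; reconciling that with the evenly-spaced-intercept requirement is the main obstacle on this side.
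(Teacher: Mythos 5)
Your outline captures the high-level architecture correctly (breaking points $\leftrightarrow$ vertices of the $(H-1)$-level, with the concave/convex distinction), but both directions contain gaps that the paper resolves with arguments you did not anticipate.

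\textbf{Upper bound.} You correctly identify that the hard step is passing from $nH$ lines to $n$. But there is no "known fact" about $k$-levels of pencils to invoke, and "contracting each pencil to a fat line" is not a well-defined operation that produces an arrangement to which the bound $g$ applies. The paper instead proves a sharp combinatorial claim: at most $2n-1$ of the $nH$ lines ever touch $\lambda_H$ over the entire interval $[0,1]$, so one may simply \emph{discard} the remaining lines and view $\lambda_H$ as a $k$-level of an honest arrangement of at most $2n-1$ lines. The proof is a potential-function argument: it defines the set $\hat{\mathcal{L}}(\delta)$ of lines that have become permanently trapped below the level (because a higher-indexed line in their pencil has been weakly below the level at some $\delta'\le\delta$), shows this set is monotone in $\delta$ with $H-n \le |\hat{\mathcal{L}}(\delta)| \le H-1$, and observes that each new line touching the level increments $|\hat{\mathcal{L}}|$. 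This bounds the union of $\mathcal{L}_{\le H}(\delta)\setminus\hat{\mathcal{L}}(\delta)$ over all $\delta$ by $2n-1$. Your proposal identifies the right intuition (each pencil contributes boundedly to the level) but does not supply an argument; the paper's potential function is the missing ingredient.

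\textbf{Lower bound.} Two issues. First, your claim that ``a $k$-level with $\Omega(h(n))$ vertices has $\Omega(h(n))$ concave vertices'' is \emph{false}: the $0$-level (lower envelope) of $n$ lines is convex, with up to $n-1$ vertices and zero concave ones. The paper handles this by observing that if fewer than half the vertices of the $(H-1)$-level of $\mathcal{L}(p,H)$ are concave, then at least half are convex, and a convex vertex of the $(H-1)$-level is a concave vertex of the $(H-2)$-level — so one switches to the instance $(p,H-1)$. You would need this or an equivalent dodge. Second, you worry about organizing the $n$ given lines into "pencils of parallel, evenly-spaced lines" with few distinct slopes; this is a misreading of the required construction. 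The paper maps each of the $n$ original lines to its \emph{own} state (so each original line yields its own pencil $\ell_{i,0},\dots,\ell_{i,H-1}$), and the technical work (via \Cref{lem:wlog-form-of-arrangement}: normalize slopes to $(1,2)$ with $c_i/m_i\in\NN$) is to guarantee that exactly one line of each pencil, namely $\ell_{i,c_i/m_i}$, intersects the level, while all others stay strictly below or above on $[0,1]$. There is no requirement that original lines share slopes.
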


We can now directly apply the best-known bounds for the complexity of the $k$-level in an arrangement of $m$ lines. \citet{dey1998improved} proved that, for any arrangement of $m$ lines and any $k \in \{0,\dots,m\}$, the complexity of the $k$-level is bounded by $\mathcal{O}(m^{4/3})$, while \citet{toth2000point} showed that, for any $m\in \mathbb{N}$, there exists an arrangement of $m$ lines  whose $k$-level has complexity $m \exp({\Omega(\sqrt{\ln m})})$ for some $k\in \{0,\ldots,m\}$.
We obtain the following corollary.

\begin{corollary}\label{cor:BestKnownBounds}
Let $n\in \mathbb{N}$. For any apportionment instance $(p,H)$ with $n$ states, the number of breaking points of $(p,H)$ is upper bounded by $\mathcal{O} (n^{4/3})$. Conversely, there exists an apportionment instance with $n$ states and $n e^{\Omega(\sqrt{\ln n})}$ breaking points.
\end{corollary}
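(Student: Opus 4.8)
The plan is to prove \Cref{thm:main} by making precise the correspondence, already sketched in the text, between breaking points of $(p,H)$ and vertices of the graph of $\lambda_H(\delta)$ on $[0,1]$, and then transferring worst-case complexity bounds in both directions between line arrangements and apportionment instances. Throughout I will work with the line arrangement $\mathcal{L}(p,H)$ of $nH$ lines; its $(H-1)$-level is the graph of $\lambda_H$ (when the $p_i$ are distinct, and with the caveat in the text that $\lambda_H$ is well-defined in all cases, so general position may be assumed when invoking external results).

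\textbf{Upper bound.} First I would establish the claim stated in the text: if $\tau\in(0,1)$ is a breaking point of $(p,H)$, then $(\tau,\lambda_H(\tau))$ is a vertex of the graph of $\lambda_H$, and in fact a concave vertex (slope strictly decreases there). By \Cref{obs:lineSets}, $f(p,H;\delta)$ is determined by the pair of sets $(\mathcal{L}_{\le H}(\delta),\mathcal{L}_{\ge H}(\delta))$; away from vertices of $\lambda_H$ these sets are locally constant because the identity of the line realizing $\lambda_H$, and the lines strictly below it, do not change, so $f$ is locally constant — hence breaking points occur only at vertices of $\lambda_H$. For the concavity refinement, at a convex vertex a new line joins $\lambda_H$ from below while $|\mathcal{L}_{<H}|$ stays at $H-1$, so $\mathcal{L}_{\le H}$ and $\mathcal{L}_{\ge H}$ are unchanged and $f$ does not change; this uses the structure of the lower envelope of the upper part of the arrangement. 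The key quantitative step is then a counting/charging argument: the number of concave vertices of the $(H-1)$-level is $\calO$ of the number of vertices of that level, which is $\calO(g(nH))$ by hypothesis on $g$. To remove the dependence on $H$, I would argue that all the ``action'' is confined to a band of $\calO(n)$ lines: for each state $i$, only two consecutive lines $\ell_{i,t-1},\ell_{i,t}$ can be relevant at any given $\delta$ (those bracketing $\lambda_H(\delta)$), and by the majorization monotonicity of \citet{marshall2002majorization} — seat transfers go one way as $\delta$ increases — the index $t=x_i$ is monotone in $\delta$, so over all of $[0,1]$ the level interacts with only $\calO(n)$ distinct lines (roughly, $x_i$ ranges over an interval, and consecutive values of $x_i$ across all $i$ sum to $\calO(n + H) = \calO(n)$ extra lines beyond a fixed representative... this is the step I must be careful with). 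Restricting to this sub-arrangement of $\calO(n)$ lines, its $(H'-1)$-level for the appropriate $H'$ has complexity $\calO(g(n))$, bounding the breaking points.

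\textbf{Lower bound.} Conversely, given $n$, take an arrangement of $m=\Theta(n)$ lines whose $k$-level has $\Omega(h(m))=\Omega(h(n))$ vertices for some $k$. I would first normalize: by an affine change of coordinates I may assume all slopes are positive and the portion of interest lies over a bounded $x$-interval, which I rescale to $(0,1)$; general position can be assumed, and then roughly half the vertices of the $k$-level are concave (alternating convex/concave along any monotone level, up to $\calO(1)$), so the $k$-level still has $\Omega(h(n))$ concave vertices. Each line can be written as $\delta\mapsto (t_i+\delta)/p_i$ after clearing denominators to get integer $p_i$ and integer ``offset'' $t_i$ — this is where I must ensure the $t_i$ can be taken nonnegative and less than $p_i\cdot(\text{something})$, possibly by adding parallel copies $\ell_{i,0},\dots,\ell_{i,t_i}$ for free (they only push the level up by a controlled amount, i.e., shift $k$). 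Setting $H-1=k$ plus the number of added dummy lines below, the instance $(p,H)$ has $\mathcal{L}(p,H)$ containing this arrangement and $\lambda_H$ equal to its $k$-level, so its concave vertices are breaking points, giving $\Omega(h(n))$ of them. Finally apply \citet{dey1998improved} ($g(m)=m^{4/3}$) and \citet{toth2000point} ($h(m)=m\exp(\Omega(\sqrt{\ln m}))$) to get \Cref{cor:BestKnownBounds}.

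\textbf{Main obstacle.} The genuinely delicate part is the reduction from a $k$-level of $m$ lines in arbitrary position to a line arrangement of the special form $\{(t+\delta)/p_i\}$ with $p_i\in\NN$ and $t\in\{0,\dots,H-1\}$, while (a) keeping the number of states $\Theta(n)$ rather than blowing it up, (b) keeping $H$ finite and controlling the shift in the level index induced by the dummy parallel lines, and (c) preserving a constant fraction of the vertices as concave vertices, which genuinely correspond to breaking points. The monotonicity-based confinement to $\calO(n)$ lines in the upper bound is the second subtle point; I expect it to follow cleanly from \citet{marshall2002majorization}, but making the $\calO(n)$ bound on the number of lines the level touches rigorous (as opposed to $\calO(n + H)$) requires noting that the total variation of the integer vector $x$ over $\delta\in[0,1]$ is at most $\calO(n)$ because each of the $n$ coordinates moves monotonically and the coordinates sum to the constant $H$, so the total number of unit steps is $\calO(n)$ — and hence the number of distinct lines $\ell_{i,x_i}$ encountered is $\calO(n)$.
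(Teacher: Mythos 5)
Your plan follows the same outline as the paper (prove \Cref{thm:main}, then invoke Dey and T\'oth), but several steps you yourself flag as delicate fail as stated. For the upper bound, the claim that each coordinate $x_i(\delta)$ moves monotonically as $\delta$ increases---which you use to bound the total variation, and hence the number of distinct lines the level touches, by $\calO(n)$---is false: majorization (\citet{marshall2002majorization}) does not imply coordinate-wise monotonicity, and the paper's own \Cref{fig:line-arrangement-b} (with $p=(8,3,1)$ and $H=6$) exhibits $x_2$ going $2\to 1\to 2\to 1$. So the total variation can exceed $\sum_i|x_i(1)-x_i(0)|$ and your charging breaks down. The paper's actual argument is a potential-function one: it tracks a monotone set $\hat{\mathcal{L}}(\delta)$ of lines that are permanently strictly below $\lambda_H$, shows $H-n\le|\hat{\mathcal{L}}(\delta)|\le H-1$, and observes that a new line can reach the level only when this set grows; since it grows at most $n-1$ times and at most $n$ lines are touched initially, at most $2n-1$ lines meet the level, with no monotonicity assumption on $x$.

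For the lower bound there are two further gaps. First, the $k$-level does not alternate between convex and concave vertices, so you cannot conclude that roughly half are concave: long runs of convex vertices are possible. The paper deals with a convex-heavy level by observing that each convex vertex of the $(H-1)$-level is a concave vertex of the $(H-2)$-level and therefore a breaking point of the instance $(p,H-1)$, so one of $(p,H)$ or $(p,H-1)$ still has $\Omega(z)$ breaking points. Second, your normalization (``slopes positive, bounded interval, clear denominators'') is weaker than what is actually needed. \Cref{lem:wlog-form-of-arrangement} forces all slopes into $(1,2)$, and it is precisely this constraint---combined with the vertical gap of at least $m_i>1$ between consecutive lines of the same state while the level's slope also lies in $(1,2)$---that guarantees at most one line per state meets the level. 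Without that control your dummy-line construction could leave several lines of a single state on the level, and the asserted identification of the original $k$-level with the $(H-1)$-level of the constructed apportionment instance would not hold.
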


We formally prove \Cref{thm:main} in \Cref{subsec:ub-bp,subsec:lb-bp}. First, we derive further consequences of the geometric approach we have developed. 

\subsection{Structural Insights}

In this section, we provide structural insights into the space of outcomes induced by $\delta$-divisor methods using the geometric interpretation of the assignment of seats performed by these methods.

\paragraph{Quota intervals.} It is well known that the Jefferson/D'Hondt method satisfies lower quota and that the Adams method satisfies upper quota. We now show that the whole set of stationary divisor methods, given by $\delta\in [0,1]$, can be partitioned into three instance-specific subintervals, depending on whether the output satisfies lower quota, upper quota, or both.

\begin{restatable}{proposition}{quotaintervals}\label{prop:partition}
    For every $(p,H)$, there exist $\tau,\bar{\tau}\in [0,1]$ with $\tau\leq \bar{\tau}$ such that:
    \begin{enumerate}[label=(\roman*)]
        \item For every $\delta\in [0,\bar{\tau}]$, every $x\in f(p,H;\delta)$, and every $i\in [n]$, we have $x_i \leq \lceil q_i \rceil$;\label{partition-uq}
        \item For every $\delta\in [\tau,1]$, every $x\in f(p,H;\delta)$, and every $i\in [n]$, we have $x_i \geq \lfloor q_i \rfloor$.\label{partition-lq}
    \end{enumerate}
\end{restatable}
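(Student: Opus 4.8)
The plan is to show that the two quota properties are each ``one-sided monotone'' in $\delta$, so that the set of $\delta$ satisfying upper quota is an initial interval $[0,\bar\tau]$ and the set satisfying lower quota is a final interval $[\tau,1]$, and then to argue these two intervals overlap (i.e.\ $\tau\le\bar\tau$). The overlap is immediate once both are intervals of the stated form, because at $\delta=0$ the Adams method satisfies upper quota and at $\delta=1$ the Jefferson/D'Hondt method satisfies lower quota; if $[0,\bar\tau]$ and $[\tau,1]$ were disjoint there would be some $\delta^\star\in(0,1)$ in neither, and monotonicity of the two sides would force a contradiction with the endpoint behavior. So the real content is the monotonicity of each side.

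For the upper-quota side I would use the majorization result of \citet{marshall2002majorization} quoted before \Cref{obs:convexity}: for $\delta\le\delta'$, every vector in $f(p,H;\delta')$ majorizes every vector in $f(p,H;\delta)$. Suppose $\delta\le\delta'$ and some $x'\in f(p,H;\delta')$ violates upper quota, i.e.\ $x'_i\ge\lceil q_i\rceil+1$ for some state $i$. I want to conclude that some $x\in f(p,H;\delta)$ also violates upper quota — but majorization does not directly give a \emph{per-coordinate} statement, so a cleaner route is to argue contrapositively at the level of a fixed state using the geometric picture of \Cref{subsec:line-arrangements}: $x_i$ is the number of lines $\ell_{i,t}$ lying (weakly, with the appropriate tie convention) below the threshold $\lambda_H(\delta)$. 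As $\delta$ increases, I would track how $x_i$ can change. The key monotone fact I expect to need is: if at some $\delta$ we have $x_i\le\lceil q_i\rceil$ for the ``small'' states and the violation is caused by a ``large'' state receiving too much, then decreasing $\delta$ only moves seats back from large to small states (the Marshall–Nordahl direction), so the large state's overshoot cannot appear at smaller $\delta$ if it was absent — equivalently, the overshoot of a large state is monotone in $\delta$. Making this precise is where I'd be careful: I would define $\bar\tau$ as the supremum of $\delta$ for which \emph{no} state exceeds $\lceil q_i\rceil$, and then show that for $\delta<\bar\tau$ upper quota holds, using that any potential violator $i$ has $q_i$ fixed and the threshold $\lambda_H(\delta)$ moves continuously, combined with the one-directional seat-transfer property to rule out a violation re-appearing at smaller $\delta$ after disappearing.

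Symmetrically, for the lower-quota side I would set $\tau$ to be the infimum of $\delta$ for which every state receives at least $\lfloor q_i\rfloor$, and show that for all $\delta>\tau$ lower quota continues to hold, again by the seat-transfer direction: increasing $\delta$ can only move seats \emph{toward} larger states, so a small state that is at or above its lower quota at some $\delta$ stays above it for larger $\delta$, while a large state trivially satisfies lower quota once its count is large. Concretely I expect to use \Cref{obs:lineSets} together with \Cref{obs:convexity}: if $x$ and $x'$ are outputs at $\delta_1<\delta_2$ with $x=x'$ on the set of ``large'' coordinates, the intermediate outputs interpolate, and the coordinates can only change in the monotone direction dictated by majorization.

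\textbf{Main obstacle.} The subtle point is that $f(p,H;\delta)$ can be multi-valued exactly at breaking points, and majorization gives a comparison between \emph{sets} of vectors, not a coordinatewise comparison, so I cannot naively say ``$x_i$ is monotone in $\delta$.'' The heart of the argument is to convert the majorization/seat-transfer statement into the claim that ``the set of $\delta$ at which state $i$ exceeds $\lceil q_i\rceil$ is an up-set'' (and dually for falling below $\lfloor q_i\rfloor$), which I'd do by distinguishing the states that are ``net senders'' from ``net receivers'' of seats as $\delta$ grows and checking that a violation of upper quota can only be caused by a net receiver (hence occurs only for large $\delta$) and a violation of lower quota only by a net sender (hence only for small $\delta$). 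Once that structural dichotomy is in hand, the interval statement and the overlap $\tau\le\bar\tau$ follow cleanly.
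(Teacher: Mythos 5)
Your overall plan (show upper quota holds on an initial subinterval, lower quota on a final subinterval, and that they overlap) matches the statement, and you correctly flag in your ``main obstacle'' paragraph the central difficulty: the Marshall et al.\ majorization comparison is between sets of vectors, not per coordinate, so it does not directly give the monotonicity you need. The ``net sender vs.\ net receiver'' dichotomy you propose to close that gap is not substantiated, and the gap is real --- whether a particular state exceeds its upper quota is not cleanly determined by a sender/receiver role, and you would still need to prove per-state monotonicity, which is essentially the claim you started with. Your definitions of $\tau$ and $\bar\tau$ as the infimum/supremum of $\delta$ for which the quota condition holds are also circular: proving that the set of such $\delta$ is an interval of the stated form is exactly the content of the proposition.

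The paper's proof sidesteps all of this by defining $\tau$ and $\bar\tau$ in terms of $\lambda_H$ rather than in terms of the apportionment outputs:
\[
\tau=\min\{\delta\in[0,1]:\lambda_H(\delta)p_i\ge\lfloor q_i\rfloor\ \forall i\},\qquad
\bar\tau=\max\{\delta\in[0,1]:\lambda_H(\delta)p_i\le\lceil q_i\rceil\ \forall i\}.
\]
Because $\lambda_H$ is strictly increasing, each defining condition holds on a final (resp.\ initial) subinterval automatically, so there is no monotonicity left to prove. \Cref{obs:lineSets} then gives $\lambda_H(\delta)p_i-\delta\le x_i\le\lambda_H(\delta)p_i-\delta+1$ for any feasible output $x$, which converts the conditions on $\lambda_H(\delta)p_i$ into the desired bounds on $x_i$ after integer rounding (the endpoint cases $\delta\in\{0,1\}$ are handled separately). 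The overlap $\tau\le\bar\tau$ follows because both conditions hold simultaneously at any $\delta$ with $\lambda_H(\delta)=H/P$, since then $\lambda_H(\delta)p_i=q_i$ for all $i$. In short, the correct monotone quantity to track in this geometric picture is $\lambda_H(\delta)$, not the per-state seat count $x_i$; once you see that, no appeal to majorization is needed at all.
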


The proof of \Cref{prop:partition} can be found in Appendix \ref{app:prop-partition}. The main idea is to consider $\lambda_1=\min\{\lambda \in \RR_{++} ~|~ \lambda p_i \geq \lfloor q_i\rfloor \text{ for all } i\in [n]\}$ and $\lambda_2=\max\{\lambda \in \RR_{++} ~|~ \lambda p_i \leq \lceil q_i\rceil \text{ for all } i\in [n]\}$ and to show that the claim holds for $\tau = \lambda_H^{-1}(\lambda_1)$ and $\bar{\tau} = \lambda_H^{-1}(\lambda_2)$. Note that $\lambda_H(\delta)$ is strictly increasing, which is why the inverse of the function exists. 

\paragraph{Breaking points and ties.} We come back to the task of determining the breaking points of an instance, i.e., those values of $\delta \in [0,1]$ for which $f(p,H,\delta)$ changes. We show that any breaking point corresponds to a vertex in $\lambda_H$, and, under the assumption that all populations differ from one another, breaking points are exactly those $\delta \in [0,1]$ for which $f(p,H;\delta)$ contains more than one apportionment vector. See \Cref{fig:line-arrangement} for an illustration. The proof of \Cref{prop:BreakingPointsAndTies} can be found in \Cref{app:breakingTies} and makes use of the alternative definition of $\delta$-divisor methods given in \Cref{obs:lineSets}.

\begin{restatable}{proposition}{tiesAndBreaking} \label{prop:BreakingPointsAndTies}
    Let $(p,H)$ be an apportionment instance. If $\tau \in [0,1]$ is a breaking point, then $\lambda_H(\delta)$ has a vertex at $\tau$ and $|f(p,H;\tau)|>1$. Furthermore, if $p_i \neq p_j$ for all $i,j \in [n]$ with $i \neq j$ and $\delta \in [0,1]$ is such that $|f(p,H;\delta)|>1$, then $\delta$ is a breaking point. 
\end{restatable}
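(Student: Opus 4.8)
\textbf{Proof plan for Proposition \ref{prop:BreakingPointsAndTies}.}

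The plan is to work entirely through the line-arrangement reformulation from \Cref{obs:lineSets}, tracking how the sets $\mathcal{L}_{<H}(\delta)$, $\mathcal{L}_H(\delta)$, and $\mathcal{L}_{\geq H}(\delta)$ vary with $\delta$. The key observation is that $\lambda_H(\delta)$ is continuous and piecewise linear (each $\ell_{i,t}$ is affine in $\delta$), so its behavior changes only at the finitely many vertices of the arrangement $\mathcal{L}$, and between consecutive vertices the identities of the lines realizing $\lambda_H$, lying strictly below it, and lying weakly above it are all constant. By \Cref{obs:lineSets}, $f(p,H;\delta)$ is completely determined by the pair $(\mathcal{L}_{\leq H}(\delta), \mathcal{L}_{\geq H}(\delta))$, so on the interior of any interval between consecutive vertices of $\lambda_H$, the outcome $f(p,H;\delta)$ is constant. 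This already shows every breaking point $\tau$ must be a vertex of $\lambda_H$: if $\tau$ were not a vertex, then $\lambda_H$ would be linear in a neighborhood of $\tau$ and, more importantly, the line sets would be locally constant, so $f(p,H;\cdot)$ would be constant around $\tau$, contradicting the definition of a breaking point as a place where the outcome changes.

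Next I would show $|f(p,H;\tau)| > 1$ at a breaking point $\tau$. Since $\tau$ is a breaking point, there exist $\delta^- < \tau < \delta^+$ arbitrarily close to $\tau$ with $f(p,H;\delta^-) \neq f(p,H;\delta^+)$ (using that $f$ is constant on each side's sub-interval by the previous paragraph). Pick $x^- \in f(p,H;\delta^-)$ and $x^+ \in f(p,H;\delta^+)$; by the majorization result of \citet{marshall2002majorization}, all vectors in $f(p,H;\delta)$ for $\delta$ ranging over a neighborhood of $\tau$ are ordered by majorization, and since the outcome changes at $\tau$ there must be a "seat transfer" at $\tau$ realized by a line $\ell_{i,t}$ passing through the point $(\tau,\lambda_H(\tau))$ — i.e., a line that is weakly below $\lambda_H$ on one side and weakly above on the other. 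At $\delta=\tau$ itself, both assignments (the seat going to state $i$ or to the state losing it) are compatible with the thresholds $x_j - 1 + \delta \le \lambda p_j \le x_j + \delta$ via an appropriate choice of $\lambda \in [\lambda_H(\tau), \lambda_{H+1}(\tau)]$; concretely I would exhibit two distinct vectors in $f(p,H;\tau)$ by taking $\lambda = \lambda_H(\tau)$ and observing the line(s) exactly at this level can be rounded either way. This is the step I expect to require the most care: one must argue that the change in outcome between the two sides genuinely forces a line onto the $k$-level at $\tau$ (rather than, say, a reshuffling among lines strictly below), and that the resulting tie at level $\lambda_H(\tau)$ yields at least two valid integer vectors rather than being "absorbed" into a uniquely-determined assignment.

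For the converse, assume all populations are distinct and $\delta$ satisfies $|f(p,H;\delta)| > 1$. By \Cref{obs:lineSets}, two distinct outcomes force $\mathcal{L}_H(\delta)$ to contain at least one line $\ell_{i,t}$ that can be assigned "up" or "down" — more precisely, at $\lambda = \lambda_H(\delta)$ there is slack allowing a line on the level to be counted as below-or-at versus strictly-above; since populations are distinct, no two lines $\ell_{i,s}, \ell_{j,s'}$ with $i \neq j$ are parallel, and I would use this to rule out the degenerate possibility that $\mathcal{L}_H(\delta)$ is a single non-vertex point of $\lambda_H$ whose presence does not affect the outcome (the situation of the unfilled circle in \Cref{fig:line-arrangement-a}, which requires two parallel lines of the same state crossing a third). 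So $\lambda_H$ has a vertex at $\delta$ where the slope decreases — a concave vertex — and the line sets change there, making $\delta$ a breaking point. The distinctness hypothesis is exactly what I would lean on to separate "honest" concave vertices from the spurious convex ones, and I would make sure the written-out argument isolates precisely where parallel same-state lines would otherwise break the equivalence.
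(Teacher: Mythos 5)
Your first step---showing that a breaking point $\tau$ must be a vertex of $\lambda_H$ by observing that $\mathcal{L}_{<H}(\delta)$, $\mathcal{L}_H(\delta)$, $\mathcal{L}_{\geq H}(\delta)$, and hence $f(p,H;\delta)$ via \Cref{obs:lineSets}, are all constant on any open interval between vertices of $\lambda_H$---is correct and in fact a little slicker than the paper's direct construction. The other two steps, however, have genuine gaps, and you flag part of the problem yourself. For $|f(p,H;\tau)|>1$, the detour through majorization is unnecessary and never closes: the paper simply takes $x\in f(p,H;\tau-\varepsilon)$ and $x'\in f(p,H;\tau)$ with $x\neq x'$, picks $i$ with $x_i>x'_i$, uses continuity to get $\ell_{i,x'_i}(\tau)\le\lambda_H(\tau)$, and notes that together with the $H$ lines $\ell_{j,0},\dots,\ell_{j,x'_j-1}$ forced into $\mathcal{L}_{\le H}(\tau)$ by $x'$, this gives $|\mathcal{L}_{\le H}(\tau)|>H$, which by \Cref{obs:lineSets} yields multiple feasible vectors. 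Your sketch gestures at a tie at the level but never establishes this strict count, which is exactly the missing piece.

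For the converse, your plan as stated would fail. Your parenthetical about the unfilled circle in \Cref{fig:line-arrangement-a} misreads the example: that vertex is the crossing of $\ell_{1,1}$ and $\ell_{3,0}$, lines from \emph{different} states (not parallel same-state lines), and there $|\mathcal{L}_{\le H}|=H$, so $|f(p,H;\delta)|=1$; it does not even satisfy the hypothesis of the converse, so it is not a case you need to exclude. More importantly, leaning on the vertex being \emph{concave} does not work: distinct populations do not prevent $s=|\mathcal{L}_H(\delta)|\ge 3$ lines from distinct states meeting at one point, and with $r=H-|\mathcal{L}_{<H}(\delta)|$ the slope of $\lambda_H$ passes from that of the $r$-th steepest to that of the $(s-r+1)$-th steepest of those lines---which can increase or stay the same when $s>2$, even though $|f(p,H;\delta)|=\binom{s}{r}>1$. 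The paper's proof avoids the convexity classification entirely: take the steepest line $\ell_{i,t}\in\mathcal{L}_H(\delta)$, i.e.\ the state $i$ of smallest population with a line on the level. Just to the left of $\delta$ it lies weakly below $\lambda_H$ and just to the right it lies strictly above, so every vector in $f(p,H;\delta-\varepsilon')$ gives state $i$ at least $t+1$ seats while every vector in $f(p,H;\delta+\varepsilon')$ gives it at most $t$. That forced seat transfer is what certifies $\delta$ as a breaking point.
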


\paragraph{Number of outcomes vs.\ apportionment vectors.} In \Cref{subsec:ub-bp}, we will bound the number of breaking points by a polynomial in $n$, which by definition, yields an upper bound on the number of outcomes of a method (namely, of twice the number of breaking points). 
This is in contrast to the number of apportionment vectors, which may be exponential. While this is easy to see when the populations of states can be equal, we provide an example where all populations are different. This happens when a high number of lines from $\mathcal{L}$ intersect with $\lambda_H$ at the same point. Our example extends the one in \Cref{fig:line-arrangement-a} and proves the following observation in \Cref{app:obs-number-solutions}.

\begin{restatable}{obs}{obsnumbersolutions}\label{obs:number-solutions}
    For every $n\in \NN$, there exist $p\in \NN^n$ and $H\in \NN$ with $|f(p,H;0.5)| = \Omega(2^n / \sqrt{n})$.
\end{restatable}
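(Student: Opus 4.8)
The plan is to exhibit, for each $n$, an instance $(p,H)$ with all populations distinct in which $\lambda_H(0.5)$ is attained by about $2k$ lines of $\mathcal{L}$ simultaneously (for $k\approx n/2$), so that the tie can be resolved in exponentially many ways. Following the hint, I would generalize the instance $(p=(5,3,1),H=4)$ of \Cref{fig:line-arrangement-a}, where at $\delta=0.5$ one has $\ell_{1,2}(0.5)=\ell_{2,1}(0.5)=\ell_{3,0}(0.5)=\tfrac12$, and the three vectors $(2,1,1)$, $(2,2,0)$, $(3,1,0)$ are all output. The key point from \Cref{obs:lineSets} is that an apportionment $x\in f(p,H;0.5)$ is determined by which lines of $\mathcal{L}_H(0.5)$ (the lines tied at the threshold) are counted as ``below'' — so the number of apportionment vectors equals the number of ways of selecting, from among the tied lines, a subset of the right size that is consistent with the nesting $\ell_{i,0}<\cdots<\ell_{i,x_i-1}\le\lambda\le\ell_{i,x_i}$ within each state.

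Concretely, I would fix a threshold value, say $\lambda^\star$ with $\lambda^\star=\tfrac{1}{2p}$ for a common "scale" $p$, and choose populations so that for each state $i$ there is exactly one index $t_i$ with $\ell_{i,t_i}(0.5)=\lambda^\star$. That is, I want $\dfrac{t_i+0.5}{p_i}=\lambda^\star$, i.e.\ $p_i=(t_i+0.5)/\lambda^\star=(2t_i+1)\,p$. Taking $t_i=i$ for $i=1,\dots,n$ gives $p_i=(2i+1)p$, which are pairwise distinct odd multiples of $p$ — I may need to take $p$ large (or perturb) to ensure the resulting vector lies in $\NN^n$ and that no *other* line of $\mathcal{L}$ coincides with $\lambda^\star$, but a fixed generic choice of $p$ handles this. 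Now I choose $H$ so that $\lambda_H(0.5)=\lambda^\star$: since below $\lambda^\star$ state $i$ contributes exactly $t_i=i$ lines and at $\lambda^\star$ it contributes one more, the number of lines strictly below $\lambda^\star$ is $\sum_i i$ and the number weakly below is $\sum_i i + n$. Setting $H=\sum_{i=1}^n i + \lceil n/2\rceil$ (say) forces the threshold to land exactly on the $n$ tied lines, with roughly $n/2$ of them needing to be "rounded up". Each valid apportionment corresponds to choosing which $\approx n/2$ of the $n$ states receive the extra seat, and since the tied line is the *top* line for each such state, every such subset is feasible by \Cref{obs:lineSets}. This gives $\binom{n}{\lceil n/2\rceil}=\Omega(2^n/\sqrt n)$ apportionment vectors, as claimed (by Stirling).

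The main obstacle I anticipate is the bookkeeping to make everything exact while keeping populations integral and distinct: I need $\lambda_H(0.5)$ to equal $\lambda^\star$ exactly (not merely be near it), which requires that no line $\ell_{i,t}$ with $t\neq t_i$ accidentally sits at or just below $\lambda^\star$ in a way that shifts the count, and I need the "$\pm H$" off-by-ones in the definition of $\lambda_H$ to line up with the chosen $H$. This is a matter of choosing the free parameter $p$ (and possibly adding one or two "filler" states of tiny population that sit far below the threshold to fine-tune $H$) rather than a conceptual difficulty. The rest — that distinct $t_i$ force distinct $p_i$, that the tied lines are topmost in their state, and that Stirling gives the stated bound — is routine, so I would relegate the full verification to \Cref{app:obs-number-solutions} and present only this construction in the main text.
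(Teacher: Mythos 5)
Your construction is correct and is essentially the paper's own: the paper takes $p_i = 2i-1$, $H=\lfloor n^2/2\rfloor$, and verifies directly that the multiplier $\lambda=1/2$ supports $x_i=i-1+\chi(i\in S)$ for every $S$ of size $\lfloor n/2\rfloor$, which is the same odd-arithmetic-progression tie at $\delta=0.5$ that you describe (shifted by one in the indexing and with $H$ larger by $n$). The bookkeeping you flag as a possible obstacle in fact works out immediately with $p=1$ — no scaling, perturbation, or filler states are needed, since $p_i=2i+1$ are already distinct integers and $\ell_{i,t}(0.5)=\lambda^\star$ forces $t=i$.
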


We remark that such examples are highly unlikely to occur in practice since they require lines not to be in general position. For example, sampling populations from some uniform distribution would almost surely lead to lines in general position (besides the intersection of the lines $\ell_{i,0}$ in $0$).

\subsection{Upper Bound on the Number of Breaking Points}\label{subsec:ub-bp}

In this section, we exploit the connection outlined in previous sections to prove the upper bound on the number of breaking points for any apportionment instance in terms of the number of states established in \Cref{thm:main}. 
Observe that our construction of a line arrangement from an apportionment instance in \Cref{subsec:line-arrangements} involves $nH$ lines. Thus, in order to directly apply an upper bound on the complexity of the $k$-level in an arrangement of $m$ lines, we would need to replace $m$ by $nH$. Instead, we show that we can reduce $\mathcal{L}$ to an arrangement of $2n-1$ lines such that the $(H-1)$-level of $\mathcal{L}$ exactly corresponds to some $k$-level of the reduced line arrangement. The upper bound then follows directly. An illustration is shown in \Cref{fig:fixedLines}.

\newcommand{\clp}{\mathcal{L}_{\leq \Lambda}(\delta)}
\newcommand{\clpp}{\hat{\mathcal{L}}(\delta)}

\begin{proof}[Proof of the upper bound in \Cref{thm:main}]
    Let $g$ be a function as in the statement, let $n\in \NN$, and let $(p,H)$ be an apportionment instance with $n$ states. We claim that there exist at most $2n-1$ lines in $\mathcal{L}(p,H)$ that intersect with $\lambda_H$. If true, this implies that we can reduce the line arrangement $\mathcal{L}(p,H)$ of some apportionment instance $(p,H)$ to a line arrangement $\mathcal{L}'$ with at most $2n-1$ lines and such that some $k$-level for $k \in [2n-1]$ has the same complexity as the $(H-1)$-level in $\mathcal{L}(p,H)$. From the definition of $g$, we obtain that the complexity of the $(H-1)$-level in $\mathcal{L}(p,H)$ is bounded by $\calO(g(m))$. From \Cref{prop:BreakingPointsAndTies}, we conclude that this bound holds for the number of breaking points of $(p,H)$ as well.

    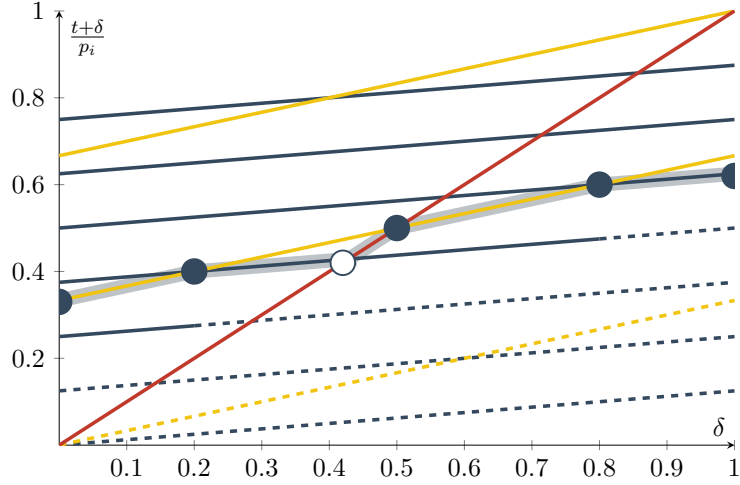
\begin{figure}[t]
\centering
    \scalebox{0.9}{
    \begin{tikzpicture}
    \node at (0,0) {}; 

    \begin{axis}[xlabel=$\delta$,ylabel=$\frac{t+\delta}{p_i}$, xmin=0,xmax=1,ymin=0,ymax=1, axis lines=center, height=8cm,width=0.7\textwidth]
    \def\H{6}

    \addplot[domain=0:0.2,color=grey,line width=6pt]{1/3 + x/3}; 
    \addplot[domain=0.2:0.44,color=grey,line width=6pt]{3/8 + x/8}; 
    \addplot[domain=0.42:0.5,color=grey,line width=6pt]{0/1 + x/1};
    \addplot[domain=0.5:0.8,color=grey,line width=6pt]{1/3 + x/3};
    \addplot[domain=0.8:1,color=grey,line width=6pt]{4/8 + x/8};

    \addplot[domain=0:1, color=blue,line width=1.5pt, dashed]{0/8 + x/8}; 
    \addplot[domain=0:1, color=blue,line width=1.5pt, dashed]{1/8 + x/8}; 
    \addplot[domain=0:0.2, color=blue,line width=1.5pt]{2/8 + x/8}; 
    \addplot[domain=0.2:1, color=blue,line width=1.5pt, dashed]{2/8 + x/8}; 
    \addplot[domain=0:0.8, color=blue,line width=1.5pt]{3/8 + x/8}; 
    \addplot[domain=0.8:1, color=blue,line width=1.5pt,dashed]{3/8 + x/8}; 
    \addplot[domain=0:1, color=blue,line width=1.5pt]{4/8 + x/8}; 
    \addplot[domain=0:1, color=blue,line width=1.5pt]{5/8 + x/8}; 
    \addplot[domain=0:1, color=blue,line width=1.5pt]{6/8 + x/8}; 

    \addplot[domain=0:1, color=green,line width=1.5pt,dashed]{0/3 + x/3}; 
    \addplot[domain=0:1, color=green,line width=1.5pt]{1/3 + x/3}; 
    \addplot[domain=0:1, color=green,line width=1.5pt]{2/3 + x/3}; 

    \addplot[domain=0:1, color=red,line width=1.5pt]{0/1 + x/1};

    \node[circle, thick, draw =blue, fill=blue] at (0,33){}; 
    \node[circle, thick, draw =blue, fill=blue] at (20,40){}; 
    \node[circle, thick, draw =blue, fill=white] at (42,42){}; 
    \node[circle, thick, draw =blue, fill=blue] at (50,50){}; 
    \node[circle, thick, draw =blue, fill=blue] at (80,60){}; 
    \node[circle, thick, draw =blue, fill=blue] at (100,62){}; 
    
    \end{axis}
    \end{tikzpicture}}
\caption{Illustration of the set $\clpp$ from the proof of the upper bound in \Cref{thm:main} via the same example as given in \Cref{fig:line-arrangement-b}. For each $\delta$, $\clpp$ are those functions from $\mathcal{L}$ for which there exists a function with higher index that is included in $\mathcal{L}_{\leq H}(\delta')$ for some $\delta' \in [0,\delta]$. We illustrate $\clpp$ by dashed lines. The important property of this set is that once a line is included for some $\delta$, it will not intersect with the $(H-1)$-level for any $\delta' \geq \delta$.}
\label{fig:fixedLines}
\end{figure}

    We use a potential function argument to show that the number of lines in $\mathcal{L}$ intersecting with $\lambda_H$ is upper bounded and prove the claim. For $\delta \in [0,1]$, we define $\clpp$ as those functions from $\mathcal{L}$ for which there exists a function with higher index that is included in $\mathcal{L}_{\leq H}(\delta')$ for some $\delta' \in [0,\delta]$, i.e., 

    $$\clpp =\{\ell_{i,t'} \mid \exists \; t > t', \delta' \leq \delta : \ell_{i,t} \in \mathcal{L}_{\leq H}(\delta')\}.$$

    Note that $\clpp$ is monotone by definition, i.e., $\clpp \subseteq \hat{\mathcal{L}}(\delta')$ for all $\delta \in [0,1], \delta' \in [\delta,1]$. In the following, we show two further observations: 
    \begin{enumerate}[label=(\roman*)]
        \item Lines in $\clpp$ are \emph{fixed} in the sense that they are included in any $\mathcal{L}_{< H}(\delta')$ for $\delta' \in [\delta,1]$. Formally, $\clpp \subseteq \mathcal{L}_{< H}(\delta')$ for all $\delta \in [0,1], \delta' \in [\delta,1]$.
        \item The size of $\clpp$ is bounded from both sides, i.e., $H -n \leq |\clpp| \leq H-1$ for all $\delta \in [0,1]$. 
    \end{enumerate}
    
    We start by proving (i). Let $\delta \in [0,1], i \in [n]$, and $t \in \{0,\dots,H-1\}$ be such that $\ell_{i,t}(\delta) \leq \lambda_H(\delta)$. We claim that for any $\delta' \in [\delta,1]$, $\ell_{i,0}(\delta'), \dots, \ell_{i,t-1}(\delta')$ are strictly below $\lambda_H(\delta')$. For $\delta' = \delta$, this is true because $\ell_{i,t'}(\delta) < \ell_{i,t}(\delta)$ for any $t' <t$. For $\delta' \in (\delta,1]$ this is true since (a) all functions in $\mathcal{L}$ are increasing, thus, $\lambda_H$ is increasing, and (b) $\ell_{i,t'}(\delta') \leq \ell_{i,t'}(1) \leq \ell_{i,t}(0) \leq \ell_{i,t}(\delta)$ for all $t' < t$. 
    
    For (ii), note that $|\mathcal{L}_{\leq H}(0)| \geq H$ and each state $i \in [n]$ can have at most one line in $\mathcal{L}_{\leq H}(0)\setminus \hat{\mathcal{L}}(0)$, thus $|\hat{\mathcal{L}}(0)| \geq H-n$. Moreover, by (i) we know that $\hat{\mathcal{L}}(1) \subseteq \mathcal{L}_{<H}(1)$, where the cardinality of the latter set is upper bounded $H-1$. Statement (ii) then follows by the monotonicity of $\clpp$.

    We now turn to bounding the number of different lines intersecting with $\lambda_H$. By (i), for any $\delta \in [0,1]$, any line that intersects with $\lambda_H$ in $\delta$ is in particular included in $\mathcal{L}_{\leq H}(\delta)$ but not in $\clpp$. Hence, the total number of lines in $\mathcal{L}$ intersecting with $\lambda_H$ is upper bounded by \begin{equation*}
    \Big|\bigcup_{\delta \in [0,1]} \mathcal{L}_{\leq H}(\delta) \setminus \clpp\Big|.
    \label{eq:boundOnLines}
    \end{equation*}
    For some line $\ell_{i,t}, i \in [n], t \in \{0,\dots,H-1\}$, let $\delta_0 \in [0,1]$ be the smallest value such that $\ell_{i,t} \in \mathcal{L}_{\leq H}(\delta_0) \setminus \hat{\mathcal{L}}(\delta_0)$. This is also the first time that $\ell_{i,t}$ is included in $\mathcal{L}_{\leq H} (\delta)$ (because $\clpp$ is monotone). Thus, $\delta_0$ is also the smallest value at which $\ell_{i,t-1}$ is included in $\hat{\mathcal{L}}(\delta)$ and the size of $\clpp$ increases. By (ii), this can happen at most $n-1$ times. Moreover, by definition it clearly holds that $|\mathcal{L}_{\leq H}(0) \setminus \hat{\mathcal{L}}(0)| \leq n$. This yields an upper bound of $2n-1$ lines intersecting with $\lambda_H$ and finishes the proof of the claim.
\end{proof}

\paragraph{Computation of all outcomes.} Our results give rise to a polynomial-time algorithm for computing all outcomes returned by stationary divisor methods. 
Indeed, using the argument from the previous proof, we can reduce an apportionment instance to a line arrangement with $\mathcal{O}(n)$ lines by computing the outcome of two divisor methods (for $\delta=0$ and $\delta=1$). Then, applying an algorithm to find the breaking points due to \citet{edelsbrunner1986constructing} (later improved by \citet{chan1999remarks}) yields a running time of $\mathcal{O}(n^{4/3}\log^{1+\varepsilon}(n))$. 
After identifying the breaking points, \Cref{obs:lineSets} and \Cref{prop:BreakingPointsAndTies} yield a complete description of the space of outcomes.

\subsection{Lower Bound on the Number of Breaking Points}\label{subsec:lb-bp}

\begin{figure}[t]
\centering
\begin{subfigure}[b]{.45\textwidth}
    \centering
    \scalebox{0.9}{
    \begin{tikzpicture}
    
    \begin{axis}[xlabel=$\delta$, xmin=0,xmax=1,ymin=0,ymax=4, axis lines=center]
    \def\H{6}

    \addplot[domain=0:0.33,color=grey,line width=6pt]{1/(4/7) + x/(4/7)}; 
   \addplot[domain=0.33:0.66,color=grey,line width=6pt]{2/1 + x/1}; 
    \addplot[domain=0.66:1,color=grey,line width=6pt]{3/(11/8) + x/(11/8)};

        \addplot[domain=0:1, color=blue,line width=1.5pt]{1/(4/7) + x/(4/7)}; 
        \addplot[domain=0:1, color=green,line width=1.5pt]{2/1 + x/1}; 
        \addplot[domain=0:1, color=red,line width=1.5pt]{3/(11/8) + x/(11/8)}; 

    \node[circle, thick, draw =blue, fill=blue] at (33,233){}; 
    \node[circle, thick, draw =blue, fill=blue] at (66,266){}; 
    \end{axis}
    \end{tikzpicture}}
    \caption{Line arrangement with its $0$-level marked via thick light gray segments.}
    \label{fig:line-arrangement-lb-a}
\end{subfigure} \quad  \centering \begin{subfigure}[b]{.45\textwidth}
    \centering
    \scalebox{0.9}{
    \begin{tikzpicture}
    \node at (0,0) {}; 

    \begin{axis}[xlabel=$\delta$,ylabel=$\frac{t+\delta}{p_i}$,ymin=0,ymax=4, xmin=0,xmax=1, axis lines=center,y label style={at={(axis description cs:0.1,1.1)},anchor=north}]

      \addplot[domain=0:0.33,color=grey,line width=6pt]{1/(4/7) + x/(4/7)}; 
   \addplot[domain=0.33:0.66,color=grey,line width=6pt]{2/1 + x/1}; 
    \addplot[domain=0.66:1,color=grey,line width=6pt]{3/(11/8) + x/(11/8)};

    \foreach \t in {0,...,5}{
        \addplot[domain=0:1, color=blue,line width=1.5pt]{\t/(4/7) + x/(4/7)}; 
        \addplot[domain=0:1, color=green,line width=1.5pt]{\t/1 + x/1}; 
        \addplot[domain=0:1, color=red,line width=1.5pt]{\t/(11/8) + x/(11/8)}; 
    }

         \addplot[domain=0:1, color=blue,line width=1.5pt]{1/(4/7) + x/(4/7)}; 
        \addplot[domain=0:1, color=green,line width=1.5pt]{2/1 + x/1}; 
        \addplot[domain=0:1, color=red,line width=1.5pt]{3/(11/8) + x/(11/8)}; 

    \node[circle, thick, draw =blue, fill=blue] at (33,23.3){}; 
    \node[circle, thick, draw =blue, fill=blue] at (66,26.6){}; 
    
    \end{axis}
    \end{tikzpicture}}
    \caption{Line arrangement corresponding to the instance $(p=(4/7,1,11/8),H=7)$.}
    \label{fig:line-arrangement-lb-b}
\end{subfigure}
\caption{Illustration of the construction of a line arrangement corresponding to an apportionment instance from an arbitrary line arrangement satisfying the conditions in \Cref{lem:wlog-form-of-arrangement}, such that the $k$-level of the original arrangement corresponds to the $(H-1)$-level of the new one. For illustration purposes, one of the slopes has been chosen out of the range specified in \Cref{lem:wlog-form-of-arrangement} (smaller than $1$).}
\label{fig:line-arrangement-lb}
\end{figure}
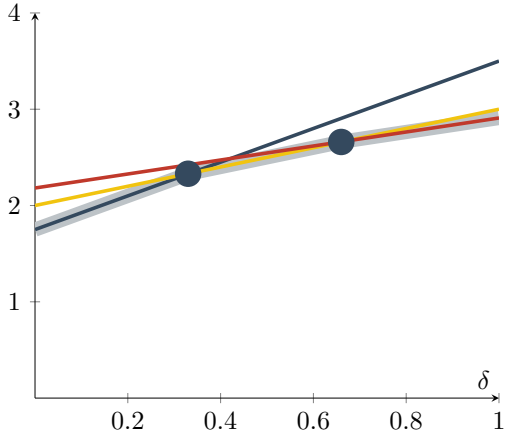
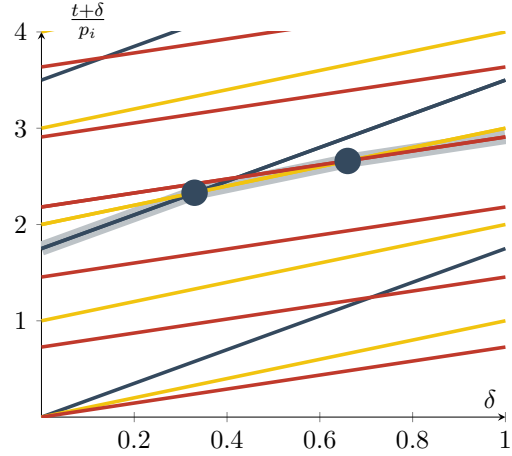

In this section, we prove the lower bound on the worst-case number of breaking points stated in \Cref{thm:main}.  
To do so, we first show that, w.l.o.g., any line arrangement has a specific form.

\begin{restatable}{lemma}{lemwlogFormArrangement} \label{lem:wlog-form-of-arrangement}
    Let $\mathcal{L}$ be a line arrangement of $n$ lines in general position and rational slopes. There exists another line arrangement $\mathcal{L}' = \{\ell_i(\delta) = m_i \delta + c_i, i \in [n]\}$ such that: 
    \begin{enumerate}[label=(\roman*)]
        \item For all $i \in [n]$ it holds that $m_i \in (1,2)$; \label{en:con1}
        \item For all $i \in [n]$ it holds that $\frac{c_i}{m_i} \in \mathbb{N}$; \label{en:con2}
        \item For any $k \in [n]$ the complexity of the $k$-level in $\mathcal{L}$ equals the complexity of the $k$-level in $\mathcal{L}'$. \label{en:con3}
    \end{enumerate}
\end{restatable}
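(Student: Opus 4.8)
The plan is to realize $\mathcal{L}'$ as the image of $\mathcal{L}$ under an affine bijection of the plane that preserves every level. Call a map $\phi\colon(\delta,y)\mapsto(a\delta+b,\;c\delta+dy+e)$ with $a\neq 0$ and $d>0$ \emph{level-preserving}. The key (and elementary) observation is that for any non-vertical line $\ell$ and any point $P=(\delta_0,y_0)$, $P$ lies strictly below $\ell$ if and only if $\phi(P)$ lies strictly below $\phi(\ell)$: at abscissa $a\delta_0+b$ the line $\phi(\ell)$ has ordinate $c\delta_0+d\,\ell(\delta_0)+e$ while $\phi(P)$ has ordinate $c\delta_0+dy_0+e$, and since $d>0$ these compare exactly as $y_0$ and $\ell(\delta_0)$ do. Since $\phi$ is a bijection sending lines to lines, it sends an arrangement to an arrangement and, by the observation, sends the $k$-level of $\mathcal{L}$ onto the $k$-level of $\phi(\mathcal{L})$ as polygonal paths, hence preserves the number of its vertices; it also preserves general position, which we shall not need. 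Level-preserving maps are closed under composition, so it suffices to find a composition of elementary level-preserving maps taking $\mathcal{L}$ to an arrangement satisfying \ref{en:con1} and \ref{en:con2}, since \ref{en:con3} will then be automatic.

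We may first assume that all intercepts $c_i$ of $\mathcal{L}$ are rational: the slopes are rational by hypothesis, and the complexity of every level is locally constant in the coefficients of the lines as long as no three of them are concurrent, so --- using that $\mathcal{L}$ is in general position --- replacing each intercept by a sufficiently close rational does not change any $k$-level complexity. Now carry out two steps. First, apply $\phi_1(\delta,y)=(\delta,\;c\,\delta+d\,y)$ with $c=3/2$ and a rational $d>0$ small enough (relative to $\max_i|m_i|$) that $c+dm_i\in(1,2)$ for every $i$; afterwards line $i$ has rational slope $m_i^{(1)}=c+dm_i\in(1,2)$, so in particular it is non-vertical and non-horizontal and has a well-defined rational $\delta$-intercept $\xi_i^{(1)}=-c_i^{(1)}/m_i^{(1)}$. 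This gives \ref{en:con1}. Second, let $q\in\NN$ be a common denominator of $\xi_1^{(1)},\dots,\xi_n^{(1)}$, pick an integer $\beta$ with $\beta<-q\,\xi_i^{(1)}$ for all $i$, and apply $\phi_2(\delta,y)=(q\,\delta+\beta,\;q\,y)$. Scaling both coordinates by the same factor leaves slopes unchanged, so \ref{en:con1} is preserved; and a direct computation shows that the new $\delta$-intercept of line $i$ is $q\,\xi_i^{(1)}+\beta$, which is an integer (choice of $q$) and strictly negative (choice of $\beta$). Since for a line $\delta\mapsto m\delta+c$ the ratio $c/m$ equals minus its $\delta$-intercept, the arrangement $\mathcal{L}'=\phi_2(\phi_1(\mathcal{L}))$ satisfies $c_i/m_i\in\NN$ for every $i$, which is \ref{en:con2}.

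The one real difficulty is the apparent conflict between \ref{en:con1} and \ref{en:con2}: pushing the slopes into $(1,2)$ displaces the intercepts, while clearing the denominators of the $\delta$-intercepts by a horizontal rescaling pushes the slopes back out of $(1,2)$. The resolution --- and the reason the order above works --- is that a \emph{uniform} scaling $(\delta,y)\mapsto(q\delta,q\,y)$ rescales $\delta$-intercepts but not slopes, so one can decouple the two requirements by normalizing slopes first and integralizing $\delta$-intercepts second. The remaining points are routine: existence of $d$ and of the integer $\beta$ with the required properties (finiteness of $[n]$), existence of the rational perturbation of the intercepts (the combinatorial type of the arrangement is locally constant away from triple points), and the elementary rules for how slopes and $\delta$-intercepts transform under $\phi_1$ and $\phi_2$.
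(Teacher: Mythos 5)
Your proof is correct, and it takes a genuinely different route from the paper's.

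The paper transforms the \emph{coefficients} of the lines directly: it replaces each slope $m_i$ by $\alpha m_i+\beta$ while keeping the intercept $c_i$ fixed, and then (analogously) replaces each intercept by $\alpha'(c_i+\beta')$ while keeping the slope fixed. Neither operation is an affine map of the plane (a shear with $\alpha\neq 1$ that leaves all intercepts untouched cannot be realized by any point transformation), so the paper must verify by a direct computation that the above/below relation between the intersection point $\ell_i\cap\ell_j$ and a third line $\ell_k$ is invariant. You instead restrict to honest affine bijections $\phi(\delta,y)=(a\delta+b,\,c\delta+dy+e)$ with $a\neq 0$, $d>0$; for these the invariance of the $k$-level is immediate from the observation that $\phi$ sends lines to lines and preserves strict verticality comparisons, so no algebraic check is needed. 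The price is that each of your maps moves slopes and intercepts simultaneously, so you have to sequence them carefully (normalize slopes first with a vertical shear-and-scale, then integralize $\delta$-intercepts with a horizontal scale-and-translate, exploiting that the latter leaves slopes fixed) and you need to track the bookkeeping through both steps.

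One further difference in your favor: you explicitly perturb the intercepts to rationals at the outset, invoking the local constancy of the level complexity for arrangements in general position. The paper's second step tacitly requires the ratios $c_i'/m_i$ to be rational in order for a single $\alpha\in\mathbb{Q}$ clearing all denominators to exist, but the hypotheses only guarantee rational \emph{slopes}; your perturbation step supplies exactly the missing ingredient, so your write-up is actually tighter on this point.
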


Building upon \Cref{lem:wlog-form-of-arrangement}, whose proof can be found in \Cref{app:lem-wlog-form-of-arrangement}, we construct an apportionment instance from any line arrangement with the number of breaking points being equal up to a factor of $2$ to the complexity of some $k$-level. An illustration of this construction is shown in \Cref{fig:line-arrangement-lb}.

\begin{proof}[Proof of the lower bound in \Cref{thm:main}.]
Let $h$ be a function as in the statement and $n\in \NN$. Let $\mathcal{L} = \{\ell_i(\delta) = m_i \delta + c_i: i \in [n]\}$ be a line arrangement and $k \in [n]$ such that the complexity of the $k$-level of $\mathcal{L}$ is $z=\Omega(h(n))$. We assume without loss of generality that every line in $\mathcal{L}$ intersects with the $k$-level; otherwise, we could just remove such a line and adjust $k$ if necessary. By \Cref{lem:wlog-form-of-arrangement} we can assume without loss of generality that $m_i \in (1,2)$ and $c_i/m_i \in \mathbb{N}$ for all $i \in [n]$. We construct the apportionment instance as follows:\footnote{Note that, even though these populations might not be integers, they can be scaled for this purpose without affecting the breaking points.} Set $p_i = 1/m_i$ for all $i \in [n]$ and $H = \sum_{i \in [n]} \frac{c_i}{m_i} + k + 1$.

We aim to show that the $(H-1)$-level of $\mathcal{L}(p,H)$ equals the $k$-level of $\mathcal{L}$. We first show that no two lines in $\mathcal{L}(p,H)$ that correspond to the same state can both intersect with the $k$-level of $\mathcal{L}$, which we denote by $\gamma_k(\delta)$. Assume for contradiction that $\ell_{i,t}(\delta)$ and $\ell_{i,t'}(\delta)$ with $t,t' \in \mathbb{N}_0$, $t \neq t'$ both intersect with $\gamma_k(\delta)$ at $\delta_1, \delta_2 \in [0,1], \delta_1 < \delta_2$, respectively. In the following, we derive a contradiction by applying the facts that $|\ell_{i,t}(\delta) - \ell_{i,t'}(\delta)| \geq m_i > 1$, $\delta_2 - \delta_1 <  \gamma_k(\delta_2) - \gamma_k(\delta_1) < 2(\delta_2 - \delta_1)$, and $\delta_2 - \delta_1 < \ell_{i,t}(\delta_2)- \ell_{i,t}(\delta_1) < 2(\delta_2 - \delta_1)$, which hold since all $\ell_{i,t}, \ell_{i,t'}$, and $\gamma_k(\delta)$ are (piecewise) linear functions with slopes in $(1,2)$ and the images of $\ell_{i,t}$ and $\ell_{i,t'}$ intersect in at most one point. First, assume $t<t'$. Then, $$2 (\delta_2 - \delta_1) \leq 1 + \delta_2 - \delta_1 < 1 + \ell_{i,t}(\delta_2) - \ell_{i,t}(\delta_1)< \ell_{i,t'}(\delta_2) - \ell_{i,t}(\delta_1) = \gamma_k(\delta_2) - \gamma_k(\delta_1) < 2 (\delta_2 - \delta_1),$$ yielding a contradiction. Otherwise, assume $t > t'$. Then, $$ \delta_2 - \delta_1 < \gamma_k(\delta_2) - \gamma_k(\delta_1)= \ell_{i,t'}(\delta_2) - \ell_{i,t}(\delta_1) < \ell_{i,t}(\delta_2) -1  - \ell_{i,t}(\delta_1) < 2(\delta_2 - \delta_1) - 1 \leq \delta_2 - \delta_1,$$ again, yielding a contradiction. 

Now, we show that for each $i \in [n]$, exactly the line $\ell_{i,c_i/m_i}$ intersects with the $(H-1)$-level. Note that each line of the form $m_i \delta + c_i$ for each $i \in [n]$ from $\mathcal{L}$ equals exactly the line $\ell_{i,c_i/m_i}(\delta)$ from $\mathcal{L}(p,H)$. Therefore, we know for each $\ell_{i,c_i/m_i}$ that it intersects with $\gamma_k(\delta)$. By the observation above, this implies that for each $i \in [n]$, the lines $\ell_{i,t}$ for $t \in \{0,\dots,c_i/m_i\}$ are below $\gamma_k(\delta)$ for the entire interval $[0,1]$ and $\ell_{i,t}$ for $t \in \{c_i/m_i + 1, \dots, H-1\}$ are above $\gamma_k(\delta)$ for the entire interval $[0,1]$. By definition of $H$ this implies that $\gamma_k(\delta)$ equals the $(H-1)$-level of $\mathcal{L}(p,H)$. Hence, $z$ is the number of vertices on the $(H-1)$-level of $\mathcal{L}(p,H)$.
Observe that any concave vertex at $\delta$ is such that $|f(p,H;\delta)|>1$: For any pair of states $i,j\in [n]$ such that $\ell_{i,c_i/m_i}(\delta) = \ell_{j,c_j/m_j}(\delta) = \gamma_k(\delta)$, 
there must be apportionment vectors $x,y \in f(p,H;\delta)$ such that $x_{i} = c_{i}/m_{i} = y_{i}-1$ and $x_{j}-1 = c_{j}/m_{j} = y_{j}$.
Therefore, if at least $\lceil z/2 \rceil$ of the vertices on the $(H-1)$-level of $\mathcal{L}(p,H)$ are concave, then we are done due to \Cref{prop:BreakingPointsAndTies}. Otherwise, note that every convex vertex of this level corresponds to a concave vertex of the $(H-2)$-level of $\mathcal{L}(p,H)$, and thus to a breaking point of the instance $(p,H-1)$. Thus, in this case, the instance $(p,H-1)$ has at least $\lceil z/2 \rceil$ breaking points due to \Cref{prop:BreakingPointsAndTies}.
\end{proof}
 
We remark that the apportionment constructed in the previous proof may require a large number of seats $H$. Bounding the number of breaking points in both $H$ and $n$ remains an intriguing direction for future work. 

{
\subsection{Beyond Stationary Divisor Methods} \label{sec:power}
In general, one can consider rounding rules $\llbracket \cdot \rrbracket$ given by values~$\delta_t\in [0,1]$ for each $t\in \NN_0$ such that for $r\in [t,t+1]$ we have $\llbracket r \rrbracket = \{t\}$ if $r<t+\delta_t$, $\llbracket r \rrbracket = \{t+1\}$ if $r>t+\delta_t$, and $\llbracket r \rrbracket = \{t,t+1\}$ if $r=t+\delta_t$. These rules give rise to non-stationary divisor methods, for which the number of breaking points may become exponential in the number of districts $n$. To see this, we can consider the same instance as in the proof of \Cref{obs:number-solutions}, with $p_i=2i-1$ for every $i\in [n]$ and $H=\lfloor n^2/2\rfloor$. For any $S\subset [n]$ with $|S|=\lfloor n/2\rfloor$, one can obtain the apportionment vector $x\in \NN^n_0$ given by
\[
    x_i = \begin{cases}
        i & \text{if } i\in S,\\
        i-1 & \text{otherwise,}
    \end{cases}
    \qquad \text{for every } i\in [n]
\]
by setting $\delta_i<0.5$ for every $i\in S$ and $\delta_i>0.5$ for every $i\in\NN \setminus S$. Nevertheless, our upper bound on the number of breaking points still holds for other families of divisor methods that exhibit the same majorization property discussed at the beginning of this section, such as the family of \textit{power-mean} divisor methods. This family uses rounding rules given by $\delta_t=\big(\frac{t^q}{2}+\frac{(t+1)^q}{2}\big)^{1/q}-t$ for some $q\in \RR$, and one can define the breaking points analogously as before when~$q$ varies.\footnote{The value $\delta_t$ is not well defined when $q=0$, as well as when $t=0$ and $q<0$. When $q=0$, the standard definition \citep{marshall2002majorization} is $\delta_t = \lim_{q\to 0} \big(\frac{t^q}{2}+\frac{(t+1)^q}{2}\big)^{1/q}-t = \sqrt{t(t+1)}-t$. When $q<0$, we consider $\delta_0=0$ to maintain monotonicity, since $\lim_{q\to 0^+} \big(\frac{0^q}{2}+\frac{1^q}{2}\big)^{1/q} = 0$.} Although this does not define a line arrangement anymore as varying~$q$ leads to curves instead, it is not hard to see that the curves defined in this way for $q>0$, as well as for $q<0$ and $t\geq 1$, are \textit{pseudolines}, i.e., any pair of them intersects at most once. 
Thus, the $\calO(n^{4/3})$ upper bound on the number of breaking points remains valid due to a result by \citet{tamaki2003characterization} on the complexity of the $k$-level in a pseudoline arrangement. Importantly, this family contains traditional divisor methods such as the Dean method (where $q=-1$) and the Huntington-Hill method (where $q=0$), the latter of which is currently used to apportion the House of Representatives of the United States. In \Cref{app:power-mean}, we prove that the curves are indeed pseudolines and argue why an adaptation of the proof of the upper bound stated in \Cref{thm:main} still holds.}

\section{Randomization and Divisor Methods}\label{sec:randomization-variable}

In the deterministic apportionment setting, population monotonicity is essentially incompatible with quota compliance.
\citet{balinski2010fair} showed that, under mild assumptions, divisor methods are the unique population-monotone methods.
Among them, the Jefferson/D'Hondt method is the unique one satisfying the lower quota axiom but is known to violate upper quota compliance, whereas the Adams method is the unique one satisfying upper quota compliance but is known to violate lower quota compliance.

In \Cref{subsec:deviations-quota}, we first study \textit{how large} these deviations from quota can be for any $\delta\in [0,1]$. The negative results in this regard, showing a worst-case deviation of almost $H$ seats, motivate the incorporation of randomization. In particular, we explore the possibility of defining a random variable $\delta\in [0,1]$ in a way that, for the $\delta$-divisor method, a smaller deviation from quota is achieved in expectation. Even though constant worst-case deviations can be achieved when the ratio between the populations of two states is constant, in general, it turns out that these deviations are still linear in $H$ for any randomized stationary divisor method. Hence, in \Cref{subsec:variable-H} we take one step further and relax the requirement of exactly fulfilling the house size. We present a simple method, that can be seen as a mixture between a divisor method and the Hamilton method, that satisfies quota compliance, population monotonicity, and ex-ante proportionality. We further give both ex-post and probabilistic bounds on the deviation from the house size.

\subsection{Deviation Bounds from Quota}
\label{subsec:deviations-quota}

It is well known that divisor methods satisfy several desirable axioms, including the strong population monotonicity property, but fail to satisfy quota compliance \citep{balinski2010fair}. In \Cref{sec:pop-mon}, we showed that every stationary divisor method is such that for any instance, all states either receive at least their lower quota or at most their upper quota but may deviate from the other one. In this section, we further study \textit{how large} this deviation can be, i.e., how far the number of seats allocated to some state can be from its quota. Since this deviation cannot be larger than the total number of seats $H$, we take this number as given and ask the following: What is the largest difference that can occur between the number of seats allocated to a state and its quota, over all possible number of states $n$ and all population vectors $p\in \NN^n$?

Our first proposition answers this question: The maximum deviation is arbitrarily close to $H-1$ when $\delta=0$ and $H$ when $\delta>0$.

\begin{restatable}{proposition}{propdetdevquota}\label{prop:det-dev-quota}
    Consider $H\in \NN$. Then, for every $\varepsilon>0$ the following hold:
    \begin{enumerate}[label=(\roman*)]
        \item There exist $n\in \NN,~ p\in \NN^n$, and $i\in [n]$ such that 
        $| x_i - q_i | \geq H-1-\varepsilon$ for every $x \in f(p,H;0)$.\label{prop:det-dev-quota-i}
        \item For every $\delta\in (0,1]$, there exist $n\in \NN,~ p\in \NN^n$, and $i\in [n]$ such that 
        $| x_i - q_i | \geq H-\varepsilon$ for every $x \in f(p,H;\delta)$.\label{prop:det-dev-quota-ii}
    \end{enumerate}
\end{restatable}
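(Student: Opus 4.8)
The statement asks for instances where a single state's allocation deviates from its quota by almost the full house size $H$. The natural strategy is to design a population vector with one ``giant'' state and many ``tiny'' states, and exploit the known extremal behavior of the endpoint methods: the Jefferson/D'Hondt method ($\delta=1$) over-serves large states, and the Adams method ($\delta=0$) over-serves small states. For part~\ref{prop:det-dev-quota-i}, I would make a state whose quota is close to $0$ but which is forced by the Adams method to receive a seat, and push this to the extreme so that one state grabs almost all $H$ seats while its quota is bounded. Concretely, take $n-1$ states with population $1$ and one state with population $M$ for $M$ large; with $H$ seats, the Adams method (which rounds up, since $\delta=0$ means any positive multiple rounds up) must give at least one seat to \emph{every} state, so the big state gets exactly $H-(n-1)$ seats. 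Choosing $n-1 = H-1$ forces the big state to get exactly $1$ seat while its quota $q = MH/(M+H-1)$ tends to $H$ as $M\to\infty$; hence $|x_i - q_i|$ can be made at least $H-1-\varepsilon$. Wait — I need the deviation on a state, and here the big state has $x_i=1$ but $q_i \to H$, giving deviation $\to H-1$, matching \ref{prop:det-dev-quota-i}; but I should double check the bound is $H-1-\varepsilon$ and not better, since the big state \emph{must} get at least $1$ under Adams — so $H-1$ is the ceiling for this construction, which is exactly what the proposition claims.

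For part~\ref{prop:det-dev-quota-ii}, when $\delta>0$ the Adams-type obstruction disappears: a state with tiny population need not receive a seat, so we can push a single state's allocation all the way to $0$ (or conversely have one state absorb all $H$ seats while its quota is near $0$). I would use the Jefferson/D'Hondt direction or a $\delta$-divisor direction: take one state with population $1$ and one state with population $M$, house size $H$. For $\delta>0$, the small state's value $\lambda \cdot 1$ needs to exceed $\delta$ to earn even one seat, and one can tune the multiplier (by the structure of $f(\cdot,\cdot;\delta)$) so that the big state receives all $H$ seats while the small state receives $0$; then the small state has quota $q = H/(M+1) \to 0$ but also there's slack — better to flip it: the big state has $x_i = H$ and $q_i = MH/(M+1)\to H$, so that's no deviation. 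Instead I want the \emph{small} state: it gets $0$ seats but I need its quota bounded away from $0$, which fails as $M\to\infty$. So the right construction is the mirror of part~(i): $H$ states of population $1$ and... no. Let me reconsider — for $\delta>0$ I want a state getting $0$ seats whose quota is close to $1$, contributing deviation close to... that's only $1$. To get deviation close to $H$ I need a state getting $0$ whose quota is close to $H$: take one big state with population $M$ and a cloud of small states, with the big state's rounded-down value landing it $0$ seats. With $\delta>0$, set populations so that $\lambda M < \delta$ is impossible unless... Actually the clean approach: one state of population $M$, and $n-1$ states of population $p'$ chosen so that the multiplier forced by the house size $H$ makes $\lambda M$ fall just below a threshold $t+\delta$ with $t$ small, while $q$ for the big state is large. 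I would pick the small states' populations and count so that $\lambda$ is small relative to $1/M$; since $\delta>0$ strictly, there's room to have $\lambda M$ sit in $(t-1+\delta, t+\delta)$ for $t$ much smaller than $q_{\text{big}}$.

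The main obstacle, and where I'd spend the most care, is the second part: verifying that for \emph{every} fixed $\delta\in(0,1]$ one can realize a multiplier $\lambda$ consistent with the house size $H$ that simultaneously starves the target state — this requires checking the feasibility condition $\sum_i x_i = H$ with $x_i \in \llbracket \lambda p_i \rrbracket_\delta$, i.e., choosing the number and populations of the auxiliary states so that they soak up exactly $H$ seats at that $\lambda$. I expect this amounts to: fix $\delta$, let $\lambda \to 0^+$, take $M = M(\lambda)$ growing so $\lambda M$ stays in $(1-\varepsilon'+\delta,\, 1+\delta)$ — no wait, I want it near $(\delta - $ something$)$; more precisely arrange $\lambda M$ just below $\delta$ so the big state gets $0$, which needs $\lambda < \delta/M$, consistent with $\lambda$ small; then I need many small states each contributing one seat to total $H$, so roughly $H$ small states of population $p'$ with $\lambda p' \in (\delta, 1+\delta)$, which forces $p' \approx 1/\lambda$, all integers after scaling. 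Then $q_{\text{big}} = MH/(M + H p') = MH/(M + H/\lambda)$; to make this close to $H$ I need $M \gg H/\lambda$, i.e., $M\lambda \gg H$ — but that contradicts $M\lambda < \delta \le 1 \ll H$. So this naive version fails, and the honest fix is that the big state cannot simultaneously have near-zero allocation and near-$H$ quota when $\delta>0$ forces the small states to eat population-proportional shares. The resolution must be that the $\varepsilon$ in the statement absorbs exactly this tension differently — I suspect the correct construction for \ref{prop:det-dev-quota-ii} instead gives the big state \emph{all} $H$ seats with quota near $0$: one state population $1$, one state population $M$? No, quota of the size-$1$ state is $H/(M+1)\to 0$ and it gets... under $\delta>0$ with $\lambda \approx (H-1+\delta)/M$, the big state gets $H-1$, wait. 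I think the clean answer is: $n=2$, $p=(M,1)$, pick $\lambda$ so $\lambda M \in (H-1+\delta, H+\delta)$ giving the big state $H$ seats while $\lambda \cdot 1 < \delta$ giving the small state $0$ — feasible iff $H - 1 + \delta < H$, always true, and $\lambda < \delta$, i.e., $(H-1+\delta)/M < \delta$, so $M > (H-1+\delta)/\delta$, fine. Then the small state: $x_2 = 0$, $q_2 = H/(M+1) \to 0$, deviation only $\to 0$. Still bad. The actual target state must be... I now believe \ref{prop:det-dev-quota-ii} uses the \emph{big} state with $x_i$ far \emph{below} $q_i$: flip to $p=(1,\dots,1,M)$ with enough $1$'s that $\lambda M$ lands in $(\delta, 1+\delta)$ — big state gets $1$ seat — while the $1$-states absorb $H-1$ seats, needing $\lambda \cdot 1 > \delta$, so $\lambda > \delta$, but then $\lambda M > \delta M > \delta \cdot (\text{large})$, contradiction with $\lambda M < 1+\delta$. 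For $\delta < 1$ there's no room; only at... Hmm. Given my uncertainty here, the step I flag as the real crux is pinning down the exact construction for \ref{prop:det-dev-quota-ii}; I would resolve it by starting from the $\delta=0$ construction in \ref{prop:det-dev-quota-i}, perturbing populations slightly, and checking that strict positivity of $\delta$ lets the target state's allocation drop one further unit (from $1$ to $0$) relative to the Adams case — pushing the deviation from $H-1-\varepsilon$ up to $H-\varepsilon$ — which is precisely the qualitative gap the proposition asserts between the two cases, and I would verify the multiplier feasibility via $\Lambda(x;\delta) = [\lambda_H(\delta), \lambda_{H+1}(\delta)]$ from \Cref{subsec:line-arrangements}.
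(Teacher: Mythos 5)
Your argument for part~\ref{prop:det-dev-quota-i} is correct and is essentially the paper's construction: take $n=H$, with $H-1$ states of population $1$ and one state of population $M$; under $\delta=0$ every state receives at least one seat (since $\lambda p_i>0$ always rounds up), so if state $1$ got two or more the total would exceed $H$; hence $x_1\le 1$ while $q_1 = MH/(M+H-1)\to H$, giving the deviation $\ge H-1-\varepsilon$.

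For part~\ref{prop:det-dev-quota-ii}, you do not reach a working construction, and you flag this yourself. The obstacle you run into is real, but it arises because you restrict yourself to ``two-parameter'' instances (one big state plus a fixed cloud of equal small states), and you correctly check that all of those fail: either the big state's quota is near $H$ when it grabs all $H$ seats, or the starved state's quota is near $0$. The paper's instance uses three scales, which is the idea you are missing. Take $p_1 = n-1$ and $n-1$ further states each of population $M$, where $M$ is \emph{moderate} (just large enough that $H/(M+1)<\varepsilon$, say $M=\lceil H/\varepsilon-1\rceil$) and then $n$ is taken enormous relative to $M$ (concretely $n$ chosen so that $(H-1)M/(n-1-M)<\delta$). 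Individually, state $1$'s population $n-1$ dwarfs each competitor's population $M$, so the only feasible multiplier gives state $1$ all $H$ seats: if $x_1\le H-1$ then $\lambda(n-1)\le H-1+\delta$, which for $n$ large forces $\lambda<\delta/M$, and then every $M$-state gets zero seats, contradicting $\sum_i x_i = H$. But \emph{collectively} the $n-1$ competitors hold almost all the population, so $q_1 = H/(M+1) \le \varepsilon$, and the deviation is $H - H/(M+1) \ge H-\varepsilon$. In short: dilute the target state's quota using the \emph{number} of competitors, not their individual sizes, while keeping each competitor individually small so that state $1$ still dominates the allocation. This is genuinely different from ``perturbing the $\delta=0$ construction'' as you propose; the $\delta=0$ instance fixes $n=H$, whereas here $n$ must grow without bound as $\delta\to 0^+$, which is precisely how the bound improves from $H-1$ to $H$.
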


The proof is deferred to \Cref{app:prop-det-dev-quota}. For the case $\delta=0$, the deviation of $H-1$ is achieved on an instance with one state with a large population and $H-1$ states with a small population, as the divisor method gives one seat to all states. When $\delta>0$, we take values $n$ and $M$ with $1\ll M\ll n$ and consider one state with population $n-1$ and $n-1$ states with population $M$. For carefully chosen $n$ and $M$, the $\delta$-divisor method assigns all seats to the state with population $n-1$, even though its quota is arbitrarily close to $0$.

Since the aforementioned worst-case instances are tailored for specific values of $\delta$, randomizing over $\delta\in [0,1]$ and returning the stationary divisor method corresponding to the realized $\delta$ arises as a natural attempt to lower the expected deviation from quota while keeping all ex-post guarantees of divisor methods. When $\delta\sim G$, we call such method the $G$-randomized divisor method $F^B$, where $B$ corresponds to a tie-breaking distribution, i.e., a distribution over subsets of apportionment vectors that are output by the method in case they are more than one. The following proposition, whose proof can be found in \Cref{app:prop-rand-dev-quota}, combines the previous worst-case instances appropriately to provide a lower bound for the expected deviation from the quota of such methods, that approaches $H/2$ as $H$ grows regardless of the tie-breaking distribution.

\begin{restatable}{proposition}{propranddevquota}\label{prop:rand-dev-quota}
    Let $G$ be an arbitrary probability distribution over $[0,1]$,
    $B$ be an arbitrary probability distribution over subsets of $\NN^n_0$, and $H\in \NN$ be arbitrary. Then, for every $\varepsilon>0$ there exist $n\in \NN,~ p\in \NN^n$, and $i\in [n]$ such that the $G$-randomized divisor method $F^B$ satisfies $| \EX(F^B_i(p,H)) - q_i | \geq \left(1-1/(2H-1)\right)H/2 - \varepsilon.$
\end{restatable}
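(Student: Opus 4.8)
The plan is to build a single family of instances on which \emph{every} stationary divisor method, regardless of $\delta$, is forced to deviate substantially from the quota of one particular state, in such a way that no choice of distribution $G$ over $\delta$ and no tie-breaking rule $B$ can keep the expected allocation close to $q_i$. The governing principle is the one-directional majorization result of \citet{marshall2002majorization}, recalled in the excerpt: as $\delta$ increases from $0$ to $1$, seats only move from smaller to larger states. Concretely, I want an instance with a designated ``large'' state $i$ whose quota $q_i$ is tiny (close to $0$), such that for small $\delta$ the $\delta$-divisor method gives state $i$ very few seats, while for large $\delta$ it gives state $i$ nearly all $H$ seats; the transition happens monotonically in $\delta$. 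Then for the expected allocation $\EX(F^B_i(p,H))$ to be close to $q_i\approx 0$, the distribution $G$ must put nearly all its mass on the ``small-$\delta$'' regime — but a symmetric instance (swapping roles so that $i$ is a small state whose quota is close to $H$ while for large $\delta$ it receives almost nothing) forces $G$ to put nearly all its mass on the ``large-$\delta$'' regime. Since $G$ is fixed before the instance is revealed, one of the two instances defeats it, and the worst of the two deviations is at least roughly $H/2$.

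The concrete steps I would carry out are as follows. First, revisit the worst-case construction behind \Cref{prop:det-dev-quota}\ref{prop:det-dev-quota-ii}: one state with population $n-1$ and $n-1$ states each with population $M$, with $1\ll M\ll n$ chosen so that for $\delta$ slightly above $0$ the $\delta$-divisor method already assigns all $H$ seats to the big-population state, even though that state's quota is $q\to 0$. I would pin down, as a function of $\delta$, a threshold phenomenon: there is a value $\delta^\star$ (depending on the instance parameters, and which we can push arbitrarily close to $0$) such that for $\delta>\delta^\star$ the big state gets all $H$ seats. By the majorization property, the map $\delta\mapsto (\text{seats to big state})$ is monotone nondecreasing, so on $\{\delta>\delta^\star\}$ the allocation to that state is $H$ and the deviation from its near-zero quota is $\ge H-\varepsilon$. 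Symmetrically — and this is the mirror construction — I would exhibit an instance where the designated state $i$ has quota arbitrarily close to $H$ but, for $\delta<1-\delta^\star$, the $\delta$-divisor method assigns it at most one seat (this is essentially the Adams-type worst case in \ref{prop:det-dev-quota-i}, with the state playing the role of a ``small'' state that loses out until $\delta$ is very close to $1$). On this instance the deviation is $\ge H-1-\varepsilon$ whenever $\delta<1-\delta^\star$.

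Next I combine the two via an averaging/pigeonhole argument on the fixed distribution $G$. Let $G$ be given. Either $G(\delta>\delta^\star)\ge 1/2$ or $G(\delta\le\delta^\star)\ge 1/2$; in the latter case we pass to the mirror instance and use instead the event $\{\delta<1-\delta^\star\}$, which has $G$-probability $\ge 1/2$ once $\delta^\star<1/2$. In the first case, on the first instance, $F^B_i(p,H)=H$ deterministically on an event of probability $\ge 1/2$ (the tie-breaking distribution $B$ is irrelevant here because when the big state gets all $H$ seats the outcome is the single vector $(H,0,\dots,0)$), and $0\le F^B_i(p,H)\le H$ always, so $\EX(F^B_i(p,H))\ge H/2$, giving $|\EX(F^B_i(p,H))-q_i|\ge H/2-q_i\ge H/2-\varepsilon$. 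To sharpen $H/2$ to $(1-1/(2H-1))H/2$ I would be more careful about exactly how large the ``failure'' region can be: on the first instance, for $\delta$ just below $\delta^\star$ the big state may still receive $H-1$ seats rather than $0$, and by pushing the instance so that the only values of $\delta$ at which the big state receives fewer than $H-1$ seats form an arbitrarily small $G$-measure set — together with a tight accounting of the quotas $q_i$ of all states, which sum to $H$ so the ``small'' states' quotas are each close to $H/(n-1)$ and thus close to the integers they round to — one gets that the relevant shortfall is controlled by a $1/(2H-1)$ factor; this matches the phrasing of the bound, which is exactly what \Cref{prop:jeff-adams} is claimed to nearly attain. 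I would do this bookkeeping by choosing $n$ (hence the number of small states) as a function of $H$ so that $H$ divides evenly into the structure, making the fraction $1/(2H-1)$ appear cleanly.

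The main obstacle is the last step: getting the constant right, i.e.\ proving the bound with the precise factor $1-1/(2H-1)$ rather than a cruder ``$\ge H/2-\varepsilon$ for sufficiently large $H$''. This requires (a) a genuinely monotone, threshold-type description of the $\delta$-divisor method's behavior on the chosen instances as $\delta$ ranges over $[0,1]$ — which I expect to follow from the majorization property plus an explicit computation of the signposts $\lambda p_i$ at the critical $\delta$ values — and (b) a careful two-instance combination in which the quotas and the sizes of the ``bad'' $\delta$-regions are balanced so that the worst of the two deviations is at least $(1-1/(2H-1))H/2$ rather than something smaller. I expect the monotonicity and the coarse $H/2$ bound to be routine given \citet{marshall2002majorization}, and essentially all the real work to lie in the calibration of parameters for the sharp constant; the role of $B$ is a non-issue because on the extremal instances the contested outcomes are unique vectors.
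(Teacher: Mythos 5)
Your architecture is the paper's: two mirror extremal families (one where the designated state has quota near $0$ but is forced to receive all $H$ seats whenever $\delta$ exceeds a threshold $\xi$ that can be pushed toward $0$, one where it has quota near $H$ but receives at most one seat whenever $\delta\le\xi$), followed by an averaging argument over the fixed distribution $G$. The gap is in the last step, and it is a real one: the pigeonhole ``either $G([0,\xi])\ge 1/2$ or $G((\xi,1])\ge 1/2$'' only yields $\min\{H/2,\,(H-1)/2\}-\varepsilon=(H-1)/2-\varepsilon$, because in the large-quota instance the bad half of the mass costs you $H-1$ per unit (the state still gets one seat in the good regime) while in the small-quota instance it costs $H$. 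Since
\[
\Bigl(1-\tfrac{1}{2H-1}\Bigr)\tfrac{H}{2}-\tfrac{H-1}{2}=\tfrac{H-1}{2(2H-1)}>0
\]
is a fixed positive quantity independent of $\varepsilon$, the stated bound does not follow from the $1/2$--$1/2$ split, and your sketched repair (shrinking the $G$-measure of the region where the big state gets fewer than $H-1$ seats, choosing $n$ so that ``$H$ divides evenly'') is not where the factor $1/(2H-1)$ comes from.

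The correct finishing move, which is what the paper does, is to keep both lower bounds as functions of $\rho=G(\xi)$ rather than thresholding at $1/2$: the large-quota instance gives deviation at least $\rho(H-1)-o_\xi(1)$ (since $\EX(F^B_i)\le \rho\cdot 1+(1-\rho)H$ while $q_i\to H$), the small-quota instance gives at least $(1-\rho)H-o_{\xi,M}(1)$ (since $\EX(F^B_i)\ge (1-\rho)H$ while $q_i\to 0$), and the adversary's optimal $\rho$ equalizes the two at $\rho=H/(2H-1)$, where the common value is exactly $\frac{H(H-1)}{2H-1}=\bigl(1-\frac{1}{2H-1}\bigr)\frac{H}{2}$. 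With that replacement your argument goes through; the remaining items (the degenerate case where $G$ is a point mass, which reduces to \Cref{prop:det-dev-quota}, the sign conditions needed so that the absolute values resolve the right way, and the fact that $B$ is irrelevant because only worst-case bounds on $x_i$ over all vectors in the outcome are used) are routine. Your appeal to the majorization result for the threshold behavior is a legitimate alternative to the paper's direct multiplier computation, but it is not needed: on these instances one can verify the forced allocations for the relevant ranges of $\delta$ by hand.
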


This lower bound can be almost matched for large $H$ with a very simple randomized method, that runs either the Adams method or the Jefferson/D'Hondt method, each with probability $1/2$, and breaks ties arbitrarily. We formally state this in the following proposition. The proof, based on the fact that the Adams method satisfies upper quota and the Jefferson/D'Hondt method satisfies lower quota, is deferred to \Cref{app:prop-jeff-adams}.

\begin{restatable}{proposition}{propjeffadams}\label{prop:jeff-adams}
    Let $G\sim\normalfont\text{Bernoulli}(1/2)$, $B$ be an arbitrary probability distribution over subsets of $\NN^n_0$, and $H\in \NN$ be arbitrary. Then, for every $n\in \NN$, every $p\in \NN^n$, and every $i\in [n]$, the $G$-randomized divisor method $F^B$ satisfies $| \EX(F^B_i(p,H)) - q_i | < (H+1)/2.$
\end{restatable}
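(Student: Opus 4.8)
The plan is to exploit the fact that the $G$-randomized divisor method $F^B$ realizes either the Jefferson/D'Hondt method (when $G=1$) or the Adams method (when $G=0$), each with probability $1/2$, and to use the quota-compliance properties of these two endpoints established earlier. Fix an instance $(p,H)$ and a state $i\in[n]$. Since the Jefferson/D'Hondt method is lower-quota-compliant, every apportionment vector it may output (hence every realization of $F^B(p,H)$ conditioned on $G=1$, regardless of how $B$ breaks ties) satisfies $F^B_i(p,H)\geq \lfloor q_i\rfloor$; moreover it trivially satisfies $F^B_i(p,H)\leq H$. Similarly, since the Adams method is upper-quota-compliant, conditioned on $G=0$ we have $0\leq F^B_i(p,H)\leq \lceil q_i\rceil$.

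The key step is then to bound $|\EE(F^B_i(p,H))-q_i|$ by splitting on the value of $G$ and taking a convex combination. Writing $a = \EE(F^B_i(p,H)\mid G=0)$ and $b = \EE(F^B_i(p,H)\mid G=1)$, we have $\EE(F^B_i(p,H)) = \tfrac12 a + \tfrac12 b$, with $0\leq a\leq \lceil q_i\rceil$ and $\lfloor q_i\rfloor \leq b\leq H$. Hence
\[
\EE(F^B_i(p,H)) - q_i \;\leq\; \tfrac12\lceil q_i\rceil + \tfrac12 H - q_i \;\leq\; \tfrac12(q_i+1) + \tfrac12 H - q_i \;=\; \tfrac{H+1}{2} - \tfrac{q_i}{2} \;\leq\; \tfrac{H+1}{2},
\]
and in the other direction
\[
\EE(F^B_i(p,H)) - q_i \;\geq\; \tfrac12\cdot 0 + \tfrac12\lfloor q_i\rfloor - q_i \;\geq\; \tfrac12(q_i-1) - q_i \;=\; -\tfrac{q_i}{2} - \tfrac12 \;\geq\; -\tfrac{H+1}{2},
\]
using $0\leq q_i\leq H$ in the last inequality. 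To get a strict inequality $|\EE(F^B_i(p,H))-q_i| < (H+1)/2$ one observes that equality in the upper estimate would force $q_i=0$ together with $b=H$, which is impossible since $q_i=0$ forces $p_i=0$, contradicting $p\in\NN^n$; similarly the lower estimate is never tight. More simply, since $q_i>0$ strictly (as $p_i\geq 1$), both displayed chains are strict at the step where $q_i\geq 0$ is used, giving the strict bound directly.

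I do not anticipate a genuine obstacle here, as the statement is an upper bound following from the endpoint quota guarantees plus the trivial bounds $0\leq F^B_i\leq H$; the only care needed is (a) to justify that tie-breaking does not affect lower/upper quota compliance — which follows from the definition of (lower/upper) quota compliance for randomized methods, since every vector in the support of $f(p,H;0)$ and $f(p,H;1)$ individually satisfies the respective quota bound — and (b) to pin down that $q_i>0$ so the inequality is strict rather than weak. Both are routine; the argument is essentially the three-line convex-combination estimate above.
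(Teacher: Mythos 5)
Your proof is correct and follows essentially the same route as the paper's: condition on the coin flip, invoke upper-quota compliance of Adams ($\delta=0$) and lower-quota compliance of Jefferson/D'Hondt ($\delta=1$) together with the trivial bounds $0\le F_i^B\le H$, and then do the convex-combination estimate. The paper phrases the estimate as a max over extreme points $x\in f(p,H;0),\,y\in f(p,H;1)$ rather than introducing the conditional means $a,b$, but that is cosmetic. One small imprecision to flag: your remark that ``both displayed chains are strict at the step where $q_i\ge 0$ is used'' is not quite right for the lower chain, whose final step uses $q_i\le H$, not $q_i\ge 0$ (and can be tight when $q_i=H$, i.e.\ $n=1$). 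The lower chain is nevertheless strict because $\lfloor q_i\rfloor > q_i-1$ always holds; similarly $\lceil q_i\rceil < q_i+1$ always holds, so the ceiling step in the upper chain is already strict without appealing to $q_i>0$. Either of these observations cleanly closes both chains, which is in effect what the paper does (its final step ``$<\tfrac12(H+1)$'' rests on $q_i-\lfloor q_i\rfloor<1$ and $\lceil q_i\rceil - q_i<1$ combined with $0\le q_i\le H$).
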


When parameterizing on the population vector, we can also obtain a bound on the deviation between the seats assigned to a state and the state's quota in terms of the ratio between the state's population and the minimum population among any state. This bound is minimized when $\EX(\delta)=1/2$, thus it is, in particular, valid for the deterministic Webster/Saint-Lagu\"e method as well.

\begin{restatable}{proposition}{propdevquotafixedpop}\label{prop:dev-quota-fixed-pop}
    Let $n\in \NN,~ p\in \NN^n,~ H\in \NN$, and $i\in [n]$ be arbitrary. Let also $G$ be an arbitrary probability distribution in $[0,1]$ such that for $\delta\sim G$ we have that $\EX(\delta) = 1/2$, and let $B$ be an arbitrary probability distribution over subsets of $\NN^n_0$. Then, the $G$-randomized divisor method $F^B$ satisfies
    \[
        | \EX(F^B_i(p,H)) - q_i | \leq \frac{1}{2}\left( \frac{p_i}{\min_{j\in[n]}p_j} +1 \right).
    \]
\end{restatable}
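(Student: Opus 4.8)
\textbf{Proof plan for Proposition~\ref{prop:dev-quota-fixed-pop}.} The plan is to bound, for a fixed state $i$, the quantity $\EE(F^B_i(p,H))$ directly in terms of the (random) multiplier $\lambda_H(\delta)$ and the population $p_i$, and then compare with $q_i = \frac{p_i}{P}H$. The key observation, which I would establish first, is that for any realization $\delta$ and any $x \in f(p,H;\delta)$ with multiplier $\lambda \in \Lambda(x;\delta)$, the seat count satisfies $x_i \in [\lambda p_i - \delta, \lambda p_i - \delta + 1]$; equivalently $|x_i - (\lambda p_i - \delta + \tfrac12)| \le \tfrac12$. Since ties are broken by $B$ but any tie-broken vector still lies in $f(p,H;\delta)$ and uses a multiplier in $\Lambda(x;\delta) = [\lambda_H(\delta),\lambda_{H+1}(\delta)]$, we can pick (say) $\lambda = \lambda_H(\delta)$ as a canonical choice and write $F^B_i(p,H) = \lambda_H(\delta) p_i - \delta + \tfrac12 + \eta$, where $|\eta| \le \tfrac12$ almost surely. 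Taking expectations and using $\EE(\delta) = \tfrac12$ gives $\EE(F^B_i(p,H)) = p_i \EE(\lambda_H(\delta)) + \EE(\eta)$, so $|\EE(F^B_i(p,H)) - p_i\EE(\lambda_H(\delta))| \le \tfrac12$.

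Next I would relate $\EE(\lambda_H(\delta))$ to $1/P$. Summing the constraint $x_j \in [\lambda p_j - \delta, \lambda p_j - \delta + 1]$ over all $j \in [n]$ and using $\sum_j x_j = H$ yields $\lambda P - n\delta \le H \le \lambda P - n\delta + n$, i.e. $\lambda \in [\tfrac{H + n\delta - n}{P}, \tfrac{H+n\delta}{P}]$. In particular $|\lambda_H(\delta) - \tfrac{H}{P} - \tfrac{n\delta}{P} + \tfrac{n}{2P}| \le \tfrac{n}{2P}$ for each fixed $\delta$; taking expectations with $\EE(\delta)=\tfrac12$, the $\tfrac{n\delta}{P}$ and $\tfrac{n}{2P}$ terms cancel in expectation, giving $|\EE(\lambda_H(\delta)) - \tfrac{H}{P}| \le \tfrac{n}{2P}$. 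Combining with the previous paragraph, $|\EE(F^B_i(p,H)) - \tfrac{p_i H}{P}| \le \tfrac{p_i}{2}\cdot\tfrac{n}{P} + \tfrac12 = \tfrac12(\tfrac{n p_i}{P} + 1)$. Since $P = \sum_{j} p_j \ge n \min_{j} p_j$, we get $\tfrac{n p_i}{P} \le \tfrac{p_i}{\min_j p_j}$, and the bound follows.

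The main obstacle I anticipate is handling the endpoints $\delta \in \{0,1\}$ and the tie-breaking carefully: when $\lambda p_j$ hits an integer plus $\delta$ exactly, both roundings are admissible, and the bound $|\eta| \le \tfrac12$ must be shown to survive whatever $B$ does — but since every tie-broken apportionment vector is by definition an element of $f(p,H;\delta)$, the inequality $x_j \in [\lambda p_j - \delta, \lambda p_j - \delta + 1]$ holds for it with the \emph{same} $\lambda = \lambda_H(\delta)$, so this is a matter of writing it cleanly rather than a genuine difficulty. A second, more delicate point is that I implicitly assumed $f(p,H;\delta)$ is nonempty for every $\delta$ and that $\lambda_H(\delta)$ is the relevant multiplier; nonemptiness is standard for divisor methods (a suitable $\lambda$ always exists), and I would invoke the discussion preceding \Cref{obs:lineSets}, where $\Lambda(x;\delta) = [\lambda_H(\delta),\lambda_{H+1}(\delta)]$ is derived, to justify using $\lambda_H(\delta)$ throughout. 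Everything else is elementary: the only probabilistic input is linearity of expectation together with the single hypothesis $\EE(\delta)=\tfrac12$, which is exactly what makes the method-independent (and hence Webster-valid) statement work.
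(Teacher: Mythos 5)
Your proposal is correct and follows the same high-level strategy as the paper: use the constraint $x_i + \delta - 1 \le \lambda p_i \le x_i + \delta$ to relate the (random) seat count to the (random) multiplier, exploit $\EX(\delta)=1/2$ to control the expected multiplier, and conclude. The difference is in how the multiplier is bounded. The paper works with $\lambda_{\min}(\delta) = \min\Lambda(\delta)$ and $\lambda_{\max}(\delta) = \max\Lambda(\delta)$ and obtains one-sided bounds ($\lambda_{\min}(\delta) \le H/P + \delta/\min_j p_j$ and $\lambda_{\max}(\delta) \ge H/P - (1-\delta)/\min_j p_j$) by a contradiction argument that isolates a single extremal state: if \emph{every} state's scaled population overshot $q_i + \delta$, the house size would be exceeded. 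You instead fix a canonical multiplier $\lambda_H(\delta)$ and bound it on both sides at once by summing the defining constraints over all $j\in[n]$, which gives $\lambda_H(\delta) \in [(H+n\delta-n)/P,\ (H+n\delta)/P]$; after taking expectations the $n\delta/P$ term cancels to leave $|\EX(\lambda_H(\delta)) - H/P| \le n/(2P)$, and you recover the stated bound via $P \ge n\min_j p_j$ at the end. Your summation route is slightly tighter before the final relaxation ($n/(2P) \le 1/(2\min_j p_j)$), and it also sidesteps the need to argue separately about the two endpoints of $\Lambda(\delta)$: since you bound one canonical $\lambda_H(\delta)$ symmetrically, you can apply the $\pm\tfrac12$ relation to that single choice directly, whereas the paper must invoke its Claim for both $\lambda_{\min}$ and $\lambda_{\max}$ to get the two sides of the final inequality. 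Both proofs are valid and comparably short; yours is arguably the cleaner of the two.
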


We defer the proof, which relies on providing bounds for the feasible multipliers that lead to an apportionment vector for each $\delta\in [0,1]$, to \Cref{app:prop-dev-quota-fixed-pop}.
The proposition implies that, whenever the ratio between the populations of every pair of states is bounded by a constant $r$, any randomized stationary divisor method with an expected shift equal to $1/2$ provides an expected deviation from the quota of at most $(r+1)/2$ for every state.

\subsection{Methods with Variable House Size}
\label{subsec:variable-H}

In this section, we relax our notion of a method with fixed house size (i.e., $\sum_{i=1}^{n}x_i=H$ for every $p,H$, and $x\in f(p,H)$) in order to allow for variable house size. 
That is, we do not ask for the apportionments $x \in f(p,H)$ to satisfy $\sum_{i=1}^{n}x_i=H$ for every instance $(p,H)$.
The notions of quota compliance, population monotonicity, and ex-ante proportionality naturally extend.

We now introduce a family of methods with variable house size, which we call \emph{randomized fixed-divisor methods}. These methods, when appropriately defined, satisfy population monotonicity, quota compliance, and ex-ante proportionality. Thus, they are able to overcome well-known impossibilities for deterministic methods with fixed house size. Further, specific methods provide additional deviation guarantees from the house size along with probabilistic bounds on this deviation.

A randomized fixed-divisor method is any method that randomizes on a specific set of 
methods that we call \emph{fixed-divisor methods}. We introduce some additional notation to define them. A \textit{signpost sequence} is a function $s: \NN_0 \to \RR_+$ satisfying two properties: (i) $s(t)\in [t,t+1]$ for every $t\in \NN_0$ and (ii) if $s(t')=t'$ for some $t'\in \NN_0$ then $s(t)<t+1$ for every $t\in \NN_0$ and, analogously, if $s(t')=t'+1$ for some $t'\in \NN_0$ then $s(t)>t$ for every $t\in \NN_0$. For a signpost sequence $s$ and a value $t\in \RR_{+}$, we let $N_s(t)=\{k\geq 0: s(k)< t\}$ denote the number of elements in the sequence that are strictly smaller than $t$.
A fixed-divisor method with signpost sequences $s_i(0),s_i(1),s_i(2),\ldots$ for every $i\in [n]$ receives a population vector $p=(p_1,\ldots,p_n)$ and a house size $H$, and returns $x=(x_1,\ldots,x_n)$ with
$x_i=N_{s_i}\left(q_i\right) \text{ for every } i \in [n].$ 
Note that fixed-divisor methods output a single apportionment vector for every instance: if $f$ is a fixed-divisor method, then $|f(p,H)|=1$ for every $p$ and $H$.\footnote{This is because the need to break ties to exactly match the house size is no longer an issue.} We refer to the vector output by $f$ for an instance $(p,H)$ directly as $f(p,H)$ (instead of $x\in f(p,H)$) to keep the notation simple.
Note that fixed-divisor methods differ from divisor methods studied in \Cref{sec:pop-mon} in two additional ways.
First and most importantly, the value up to which the signposts of a state are counted in fixed-divisor methods is---as the name suggests---fixed to the quota of the state; in divisor methods, this value is set \textit{ex-post} such that the total number of assigned seats is $H$. Second, in fixed-divisor methods, we allow for state-specific signpost sequences. 

Observe that, by linearity of expectation, for any ex-ante proportional randomized fixed-divisor method $F$, 
any population vector $p\in \NN^n$, and any house size $H\in \NN$, we have $\EE\left(\sum_{i=1}^n F_i(p,H)\right)=H$.
Moreover, if a randomized fixed-divisor method $F$ is quota-compliant, then for every $p$ and $H$, 
\[
    \PP\bigg(\bigg|\sum_{i=1}^{n} F_i(p,H) - H \bigg| \leq \max\bigg\{ H-\sum_{i=1}^{n} \left\lfloor q_i \right\rfloor, \sum_{i=1}^{n} \left\lceil q_i \right\rceil - H \bigg\} \bigg) = 1,
\]
which in particular yields $\PP\left(\big|\sum_{i=1}^{n} F_i(p,H) - H \big| < n\right) =1$. 
In simple words, a randomized fixed-divisor method that is ex-ante proportional and quota-compliant can deviate by at most $n-1$ seats from the original house size $H$ and meets $H$ in expectation.\footnote{This is, in turn, valid for any randomized method with variable house size that outputs a single vector for every instance.}

A randomized fixed-divisor method is fully given by the distribution of the signposts: it samples $s_i(k)$ for $i\in[n]$ and $k\in \NN_0$ from this distribution and runs the fixed-divisor method with this signpost sequence on the corresponding input $(p,H)$. 
\Cref{fig:random-fs-method} illustrates the application of a randomized fixed-divisor method with all signposts defined as $s_i(k)=k+\delta_i(k)$ with $\delta_i(k)$ being independent variables with distribution $U[0,1]$ for every $i\in[n]$ and $k\in \NN_0$.
We now state our main result regarding these methods. 
\vspace{-.2cm}
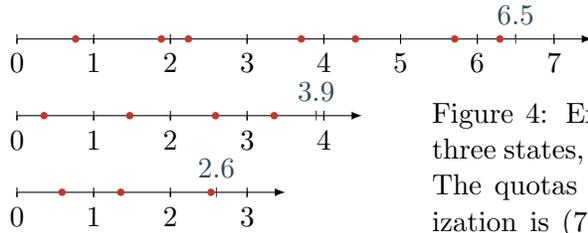
\begin{figure}[H]
\begin{minipage}[c]{0.34\textwidth}
\begin{flushleft}
\begin{tikzpicture}[scale=0.6]
\draw[-latex] (0,3.4) -- (12.75,3.4);
\foreach \x in {0,1,2,3,4,5,6,7}
\draw[shift={(1.7*\x,3.4)},color=black] (0pt,3pt) -- (0pt,-3pt);
\foreach \x in {0,1,2,3,4,5,6,7}
\draw[shift={(1.7*\x,3.4)},color=black] (0pt,0pt) -- (0pt,-3pt) node[below] 
{$\x$};
\draw[shift={(11.05,3.4)},color=blue] (0pt,3pt) -- (0pt,-3pt);
\draw[shift={(11.05,3.4)},color=blue] (0pt,0pt) -- (0pt,3pt) node[above] 
{$6.5$};

\draw[-latex] (0,1.7) -- (7.65,1.7);
\foreach \x in {0,1,2,3,4}
\draw[shift={(1.7*\x,1.7)},color=black] (0pt,3pt) -- (0pt,-3pt);
\foreach \x in {0,1,2,3,4}
\draw[shift={(1.7*\x,1.7)},color=black] (0pt,0pt) -- (0pt,-3pt) node[below] 
{$\x$};
\draw[shift={(6.63,1.7)},color=blue] (0pt,3pt) -- (0pt,-3pt);
\draw[shift={(6.63,1.7)},color=blue] (0pt,0pt) -- (0pt,3pt) node[above] 
{$3.9$};

\draw[-latex] (0,0) -- (5.95,0);
\foreach \x in {0,1,2,3}
\draw[shift={(1.7*\x,0)},color=black] (0pt,3pt) -- (0pt,-3pt);
\foreach \x in {0,1,2,3}
\draw[shift={(1.7*\x,0)},color=black] (0pt,0pt) -- (0pt,-3pt) node[below] 
{$\x$};
\draw[shift={(4.42,0)},color=blue] (0pt,3pt) -- (0pt,-3pt);
\draw[shift={(4.42,0)},color=blue] (0pt,0pt) -- (0pt,3pt) node[above] 
{$2.6$};

\foreach \x in {1.3, 3.2, 3.8, 6.3, 7.5, 9.7, 10.7}
\filldraw [red] (\x,3.4) circle (2pt);

\foreach \x in {0.6, 2.5, 4.4, 5.7}
\filldraw [red] (\x,1.7) circle (2pt);

\foreach \x in {1, 2.3, 4.3}
\filldraw [red] (\x,0) circle (2pt);
\end{tikzpicture}
\end{flushleft}
\hspace{3.5cm}

\end{minipage}
\begin{minipage}[c]{0.65\textwidth}
\vspace{1.4cm}

\caption{Example of a randomized fixed-divisor method with three states, populations $p=(50,30,20)$, and house size \mbox{$H=13$}. The quotas are $(6.5, 3.9, 2.6)$; the apportionment for this realization is $(7,4,3)$, with a deviation of one seat from $H=13$. 
Realizations of the signposts are denoted with red dots.}
\label{fig:random-fs-method}
\end{minipage}
\end{figure}

\vspace{-.2cm}
\begin{restatable}{theorem}{thmvariablesize}
\label{thm:variable-size}
    Let $\delta_i(k)$ be random variables with marginal distribution $U[0,1]$ for every $i\in[n]$ and $k\in \NN_0$. Then, the randomized fixed-divisor method $F$ with signpost sequences $(s_i(k))_{k\ge 0}$ defined as $s_i(k)=k+\delta_i(k)$ for every $i\in [n]$ is quota-compliant, population-monotone, and ex-ante proportional. 
    Furthermore, it is possible to define the distribution of the signpost sequences in a way that the corresponding randomized fixed-divisor method satisfies, in addition, the following two properties for every population vector $p=(p_1,\ldots,p_n)$, every house size $H$, and every $\Delta>0$:
    \[
        \PP\bigg(\bigg|\sum_{i=1}^{n} F_i(p,H) - H \bigg| > \frac{n+n\bmod 2}{2} \bigg) = 0,\quad
        \PP\bigg(\bigg|\sum_{i=1}^{n} F_i(p,H) - H\bigg| \geq \Delta \bigg) \leq 2\exp\bigg(-\frac{2\Delta^2}{n}\bigg).
    \]
\end{restatable}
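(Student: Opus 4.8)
The plan is to split the theorem into two parts: first the three axioms (quota, population monotonicity, ex-ante proportionality) for the simplest construction with all $\delta_i(k)$ i.i.d.\ $U[0,1]$, and then the deviation bounds, which require a more careful coupling of the signposts.

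For the three axioms, I would argue as follows. \emph{Quota compliance} is immediate: since $s_i(k)=k+\delta_i(k)\in[k,k+1]$ and $s_i$ is (with probability one) strictly increasing with $s_i(k)\in(k,k+1)$, the count $N_{s_i}(q_i)$ always lies in $\{\lfloor q_i\rfloor,\lceil q_i\rceil\}$; indeed $s_i(k)<q_i$ forces $k\le \lfloor q_i\rfloor$, and $s_i(\lfloor q_i\rfloor)$ may or may not be below $q_i$. \emph{Ex-ante proportionality} is the crux of why $U[0,1]$ marginals are chosen: conditioning on the event that exactly the signposts $s_i(0),\dots,s_i(\lfloor q_i\rfloor-1)$ are below $q_i$, the extra seat is obtained iff $s_i(\lfloor q_i\rfloor)=\lfloor q_i\rfloor+\delta_i(\lfloor q_i\rfloor)<q_i$, i.e.\ iff $\delta_i(\lfloor q_i\rfloor)<q_i-\lfloor q_i\rfloor$, which has probability exactly $q_i-\lfloor q_i\rfloor$ (the fractional part) by the $U[0,1]$ marginal — independence of the coordinates is not even needed here, only the marginal and the monotonicity property (ii) of signpost sequences, which guarantees the lower signposts are automatically below $q_i$. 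Hence $\EE(F_i(p,H))=\lfloor q_i\rfloor+(q_i-\lfloor q_i\rfloor)=q_i$. \emph{Population monotonicity}: I would verify the formal definition directly. Fix a realization of the signposts; then $x_i=N_{s_i}(q_i)$ is a nondecreasing function of $q_i$, and $q_i=p_iH/P$. If $p'_i/p'_j\ge p_i/p_j$ for instances $(p,H),(p',H')$, one checks that it cannot simultaneously happen that $q'_i<q_i$ forces $x'_i<x_i$ and $q'_j>q_j$ forces $x'_j>x_j$ — more precisely, the standard divisor-method argument (as in Balinski--Young) carries over pointwise in the realization because each $N_{s_i}$ is a monotone counting function against the \emph{same} signpost sequence for state $i$ across the two instances; the only subtlety is the equality case, which matches clause (iii). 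Since this holds for every realization, $F$ is population-monotone with probability one.

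For the deviation bounds I expect the main work and the main obstacle. The quantity $\sum_i F_i(p,H)-H=\sum_i\big(N_{s_i}(q_i)-q_i\big)=\sum_i\big(\mathbb{1}[\delta_i(\lfloor q_i\rfloor)<\{q_i\}]-\{q_i\}\big)$, a sum of $n$ independent bounded mean-zero terms $Y_i$ with $|Y_i|\le 1$ and $\EE Y_i=0$. The Hoeffding bound $\PP(|\sum Y_i|\ge\Delta)\le 2\exp(-2\Delta^2/n)$ then follows immediately provided the $\delta_i(\lfloor q_i\rfloor)$ are \emph{independent across $i$}, which holds for the i.i.d.\ construction; so actually the subgaussian bound already holds for the construction of the first part, and no new coupling is needed for it. The ex-post bound $\big|\sum_i F_i(p,H)-H\big|\le (n+n\bmod 2)/2$, however, does \emph{not} hold for independent signposts (with small probability all $n$ rounding choices could go the same way, giving deviation close to $n$). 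To force it, I would introduce dependence \emph{across states at the same fractional level}: pair up the states (or, more robustly, use a single shared uniform random variable $U\sim U[0,1]$ together with antithetic/stratified assignments), so that within each pair one state rounds up and the other rounds down whenever possible, making the rounding errors cancel in pairs. Concretely, for a pair $(i,j)$ set $\delta_i(\lfloor q_i\rfloor)=U$ and $\delta_j(\lfloor q_j\rfloor)=1-U$ type couplings, or better, sample a uniformly random permutation/matching and use a Srinivasan-style dependent rounding on the fractional parts $\{q_i\}$ so that at most one unit of total rounding slack survives per pair; this yields $|\sum_i Y_i|\le\lceil n/2\rceil=(n+n\bmod2)/2$ deterministically while keeping each marginal $U[0,1]$ (hence preserving quota and ex-ante proportionality) and keeping the pairs independent of each other (hence preserving the Hoeffding bound, now with the pair-sums as the independent bounded blocks, each of range $\le 1$, so the constant $2\Delta^2/n$ is unchanged or only improves). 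I must also recheck that this coupled construction still satisfies population monotonicity — it does, because population monotonicity was argued pointwise per realization of the full signpost family and did not use independence at all.

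The \textbf{hard part} is designing the coupled distribution of signposts that simultaneously (a) keeps every marginal $U[0,1]$ (needed for quota and ex-ante proportionality, which were proved using only marginals), (b) is organized into $\lceil n/2\rceil$ mutually independent blocks each contributing a total rounding error of absolute value $\le 1$ (needed for both deviation bounds), and (c) does not disturb the pointwise monotonicity argument. I would handle (a)--(b) via a pair-up-and-antithetically-couple scheme on the signpost at the ``active'' index $\lfloor q_i\rfloor$ for each state, leaving all other signposts i.i.d.\ $U[0,1]$ and independent of everything; one then has to be slightly careful that the active index depends on the instance $(p,H)$, so the coupling must be specified at the level of the \emph{whole signpost sequences} in an instance-independent way (e.g.\ couple $\delta_i(k)$ with $\delta_j(k)$ for paired $i,j$ at every level $k$), which is fine since for any fixed instance only one level per state is ``active.'' Verifying that this instance-independent coupling still delivers the per-pair cancellation for every instance is the one genuinely delicate check.
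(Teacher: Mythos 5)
Your arguments for quota compliance, ex-ante proportionality, and population monotonicity match the paper's in substance, and your observation that the Hoeffding tail already holds for the fully i.i.d.\ construction is correct. The gap is in the coupling you propose for the ex-post bound. You correctly identify that the coupling must be specified at the level of whole signpost sequences in an instance-independent way, but the scheme you then write down --- pairing state $i$ with state $j$ and setting $\delta_j(k)=1-\delta_i(k)$ at every level $k$, with distinct levels independent --- does \emph{not} deliver per-pair cancellation. The active signposts of a paired $(i,j)$ are $\delta_i(\lfloor q_i\rfloor)$ and $\delta_j(\lfloor q_j\rfloor)$; whenever $\lfloor q_i\rfloor\neq\lfloor q_j\rfloor$ (the generic case), these sit at different levels and are therefore \emph{independent} under your coupling. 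Both states can then round up simultaneously, the pair error can approach $2$ rather than $1$, and the deterministic bound $\lceil n/2\rceil$ fails. Your parenthetical ``which is fine since for any fixed instance only one level per state is active'' is precisely where the argument breaks, because the two active levels need not coincide.

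The paper's fix is more radical than the one you sketch: it makes the shift \emph{constant across levels within a state}, i.e.\ $s_i(k)=k+\delta_i$ for a single $\delta_i\sim U[0,1]$, and then pairs antithetically via $\delta_{i+1}=1-\delta_i$ for odd $i$. This decouples the active shift $\delta_i$ from the active level $\lfloor q_i\rfloor$, so the antithetic pairing holds for every instance; marginals stay $U[0,1]$ (so quota and ex-ante proportionality survive), population monotonicity is unaffected, and the constraint $\sum_{i}\delta_i\in\bigl[(n-n\bmod 2)/2,\;(n+n\bmod 2)/2\bigr]$ yields the ex-post bound through a short counting argument over the order statistics of the $\delta_i$. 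For the concentration bound, note that your blocking into pair sums of range $1$ also relies on the antithetic structure actually holding (under your coupling the block has range $2$ when active levels differ, giving only $2\exp(-\Delta^2/n)$); the paper sidesteps this by applying the Panconesi--Srinivasan extension of Hoeffding directly to the $n$ pairwise negatively-correlated indicators $Y_i=\chi(\delta_i<r_i)$, each supported on $[0,1]$, which cleanly produces the stated $2\exp(-2\Delta^2/n)$.
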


Quota compliance and ex-ante proportionality follow rather easily from the definition of the method, while population monotonicity needs a slightly more careful analysis. However, the main challenge is to define the shifts $\delta_i$ such that the deviation from the house size is kept under control. For this purpose, we define these variables in pairs: $\delta_i$ is taken as a uniform, independently sampled random variable in $[0,1]$ if $i$ is odd, and as $1-\delta_{i-1}$ is $i$ is even. In this way, the marginal distribution of all variables is the same but their sum is restricted to lie between $(n-n\bmod 2)/2$ and $(n+n\bmod 2)/2$, which is key to guarantee a small deviation. The probabilistic bound follows from applying Hoeffding's concentration bound to associated random variables $Y_i\in \{0,1\}$ that take the value $1$ if state $i$ is allocated $\lceil q_i \rceil$ seats and $0$ otherwise. The construction of the variables $\{\delta_i\}_{i\in [n]}$ ensures that these variables $\{Y_i\}_{i\in [n]}$ are not positively correlated, which suffices to apply the concentration bound. We defer the formal proof to \Cref{app:thm-variable-size}.

The \textit{ex-post} deviation guarantee provided by the specific randomized fixed-divisor method constructed in the proof is almost the best possible among this class of methods.

\begin{restatable}{proposition}{proplbdevfixeddivisor}\label{prop:lb-dev-fixed-divisor}
    Let $f$ be a fixed-divisor method satisfying quota and let $n\in \NN$ be arbitrary. Then, for every $\varepsilon>0$ there exists $p\in \NN^n$ and $H\in \NN$ such that $|\sum_{i=1}^{n} f_i(p,H) - H | \geq n/2-1-\varepsilon.$
\end{restatable}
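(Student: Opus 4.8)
### Proof plan for Proposition~\ref{prop:lb-dev-fixed-divisor}

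The plan is to exhibit, for each $\varepsilon > 0$, a population vector $p \in \NN^n$ and house size $H$ on which the fixed-divisor method $f$ (with its state-specific signpost sequences $s_i$) is forced to round \emph{all} states in the same direction, thereby accumulating a total deviation of roughly $n/2$. Since $f$ satisfies quota, for each state $i$ we must have $f_i(p,H) = N_{s_i}(q_i) \in \{\lfloor q_i\rfloor, \lceil q_i\rceil\}$; the deviation from $H$ is $\sum_i (f_i(p,H) - q_i) = \sum_i (N_{s_i}(q_i) - q_i)$, and the idea is to arrange the quotas $q_i$ so that for about half the states the signpost sequence is forced to round \emph{down} (contributing roughly $-1$ each), or symmetrically forced up. The key leverage is that a signpost sequence, by property (ii) of signpost sequences, cannot have $s(t') = t'$ for one index and $s(t'') = t''+1$ for another; so for a fixed state $i$ the "rounding behavior'' near an integer is constrained. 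More importantly, by choosing two different integer locations for the quota of a single state and exploiting monotonicity of $N_{s_i}$, one can pin down $s_i(k)$ for the relevant $k$ and force a prescribed rounding.

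Concretely, I would proceed as follows. First, fix an arbitrary small $\eta>0$ (to be related to $\varepsilon$ at the end) and partition $[n]$ into pairs $\{1,2\},\{3,4\},\dots$ (with one leftover state if $n$ is odd). For the pair $\{2j-1,2j\}$, I want to force the fixed-divisor method to round \emph{both} members the same way, so that the pair contributes a deviation close to $-1$ (or $+1$); with $\lfloor n/2\rfloor$ pairs this gives $|\sum_i f_i(p,H) - H| \geq \lfloor n/2\rfloor - (\text{small error})\ge n/2 - 1 - \varepsilon$. The mechanism: choose the populations so that within a pair the two states have quotas that are both equal to $t+\eta$ for a common integer $t$ but such that the signpost $s_{2j-1}(t)$ and $s_{2j}(t)$, whatever their values in $[t,t+1]$, would either both exceed $t+\eta$ (forcing both down) or we separately handle the case they are below. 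The subtlety is that $f$'s signposts are not under our control — they are fixed by the method — so instead I should use a two-instance or many-instance argument: among a family of instances where the same state has quota $t+\varepsilon'$ for a sequence of shrinking $\varepsilon'$, at least one instance forces the rounding we want, because $N_{s_i}(t+\varepsilon') = t$ for all sufficiently small $\varepsilon'$ unless $s_i(t) = t$ exactly, and property (ii) forbids $s_i(t)=t$ for \emph{every} state simultaneously rounding the other way.

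The cleanest route, which I would pursue, is: pick the population vector so that $q_i = t_i + \varepsilon$ for carefully chosen integers $t_i$, where $\varepsilon$ is small enough that $\varepsilon < s_i(t_i) - t_i$ for every $i$ with $s_i(t_i) > t_i$; for the (at most... actually at most one per state, but possibly many) states with $s_i(t_i) = t_i$, property (ii) of signpost sequences forces $s_i(k) < k+1$ for all $k$, i.e.\ that state always rounds up from just above an integer — but then I instead set that state's quota to $t_i + 1 - \varepsilon$, which is rounded \emph{up} to $t_i+1$ only if $s_i(t_i) \le t_i+1-\varepsilon$, and down to $t_i$ otherwise; a short case analysis using property (ii) shows each state can be made to deviate by at least $1-\varepsilon$ in a \emph{consistent} direction once we know whether $s_i$ is "left-leaning'' or "right-leaning''. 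Summing over $n$ states and noticing we might lose the states whose lean is opposite (at most roughly half — this is where the factor $1/2$ and the $-1$ come from), we conclude $|\sum_i f_i(p,H) - H| \geq n/2 - 1 - \varepsilon$. The constraints that must be respected throughout are that $p$ has positive integer entries and $H = \sum_i q_i = \sum_i (t_i + \varepsilon_i) \in \NN$; this is arranged by scaling, choosing the $t_i$ and the perturbations so the total is an integer, e.g.\ pairing a "$+\varepsilon$'' state with a "$-\varepsilon$'' state, or adding one balancing state whose contribution is negligible.

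The main obstacle I anticipate is handling the fact that the signpost sequences $s_i$ are adversarially fixed by the method $f$ (we only know they satisfy (i) and (ii) and yield quota compliance), so we cannot simply "choose'' which way each state rounds — we must design a single instance $(p,H)$ that works against \emph{every} quota-compliant fixed-divisor method, or at least show that for the given $f$ such an instance exists. The resolution is exactly the observation that property (ii) gives a global dichotomy per state ("once left-leaning, always left-leaning'' near integers), so for each state $i$ we know in advance, independent of the instance, whether $s_i(t) = t$ is possible for some $t$ (call it a down-state) or $s_i(t)=t+1$ is possible for some $t$ (an up-state) or neither; placing $q_i$ at $t_i+\varepsilon$ for an up-state and at $t_i+1-\varepsilon$ for a down-state forces the "bad'' rounding, and the two types contribute deviations of opposite sign, so the worst-case total over the majority type is at least $(n/2-1)(1-\varepsilon)\ge n/2 - 1 - \varepsilon$ after absorbing constants. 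Verifying that this placement is simultaneously realizable by a single integer-population instance with integer $H$, and that quota compliance of $f$ indeed forces $N_{s_i}(q_i)$ to the claimed value at these perturbed quotas, is the routine-but-careful bookkeeping I would complete in the full proof.
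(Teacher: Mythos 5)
Your mechanism for extracting a per-state deviation close to $1$ only works for the two extreme types of state you call left-leaning ($s_i(t')=t'$ for some $t'$) and right-leaning ($s_i(t')=t'+1$ for some $t'$). The generic state is \emph{neither}: every signpost satisfies $t<s_i(t)<t+1$ strictly, as for Webster where $s_i(t)=t+1/2$. For such a state, placing the quota at $t_i+r$ gives a deviation of $-r$ if $r<s_i(t_i)-t_i$ and $1-r$ if $r>s_i(t_i)-t_i$, so the largest deviation you can force is only $\max\{\delta_i,1-\delta_i\}$ where $\delta_i=s_i(t_i)-t_i$ --- which can be as small as $1/2$ --- and, crucially, its \emph{sign} depends on whether $\delta_i$ or $1-\delta_i$ is the larger. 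Your proposal does not say what to do with these states, and the ``majority type / lose the opposite lean'' aggregation does not apply to them. In the worst case all $n$ states are of this type and your dichotomy gives you nothing to work with.

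The missing ingredient, and the heart of the paper's proof, is a pigeonhole on the sum of signpost offsets that \emph{chooses the rounding direction globally}. The paper fixes $t_i=0$ for every $i$, sets $\delta_i=s_i(0)\in[0,1]$, and observes that either $\sum_i\delta_i\le n/2$ or $\sum_i\delta_i>n/2$. In the first case it takes $q_i=\delta_i+\varepsilon_i$ for $i\in[n-1]$ (forcing $f_i=1$) and chooses $q_n$ to make $H$ integral; then $\sum_i f_i(p,H)-H\ge (n-1)-\sum_i q_i\ge n-1-(n/2+\varepsilon)=n/2-1-\varepsilon$. In the second case it takes $q_i=\delta_i-\varepsilon_i$ (forcing $f_i=0$) and bounds $H-\sum_i f_i(p,H)$ symmetrically. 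This is what simultaneously (a) extracts a useful deviation from \emph{every} state, not just the two extreme types, and (b) guarantees all per-state deviations have the same sign. Your approach, by contrast, leans on property~(ii) of signpost sequences, which is irrelevant here. If you want to fix your argument, drop the up/down-state case analysis entirely, work only with $s_i(0)$, and add the pigeonhole on $\sum_i s_i(0)$ to pick the direction; the ``$-1$'' then comes from sacrificing one state to make $H=\sum_i q_i$ an integer, which you correctly anticipated.
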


The proof of this result can be found in Appendix \ref{app:prop-lb-dev-fixed-divisor}. It relies on an adversarial construction of the population vector given the first signpost sequence of each state. Intuitively, if these signposts are such that $\sum_{i=1}^n s_i(0) \leq n/2$, we consider a population vector and a house size such that the quota of each state $i\in \{1,\ldots,n-1\}$ is slightly above $s_i(0)$, the fixed-divisor method assigns one seat to every state, and the deviation from the house size is roughly $n/2$. The population of the last state is taken to ensure an integral house size. The argument is analogous when $\sum_{i=1}^n s_i(0) > n/2$.

We finish this section by discussing further aspects of randomized fixed-divisor methods.

\paragraph{Connection to well-known deterministic methods with fixed house size.} Randomized fixed-divisor methods are a natural randomized version of widely-used deterministic methods that guarantee a (strict) subset of the properties that this method ensures. 
They can be seen as a randomization over divisor methods with a fixed multiplier $\lambda=H/P$ and state-specific rounding rules. In particular, when we ensure that every shift $\delta_i$ distributes uniformly in the interval $[0,1]$ as in the proof of \Cref{thm:variable-size}, these methods resemble uniform randomization over the family of stationary divisor methods studied in \Cref{sec:pop-mon}. Randomized fixed-divisor methods with such shifts may also be understood as a randomized variant of the Hamilton method, where each state first receives its lower quota and then, instead of assigning the remaining seats to the states with the largest remainders, each state receives one extra seat with a probability equal to its remainder.

\paragraph{Deviation from the house size.} \Cref{thm:variable-size} gives a probabilistic bound on the deviation between the total number of seats allocated via a randomized fixed-divisor method with suitable shifts and the house size $H$. This bound can be stated as follows: for every $\varepsilon>0$, $\PP(|H - \sum_{i=1}^{n} F_i(p,H)| \geq \sqrt{(n/2)\ln(2/\varepsilon)}) \leq \varepsilon.$
For instance, considering the number of seats and states taken into account for the US House of Representatives, with probability $0.8$ the deviation from the house size $H=435$ is at most $7$, and with probability $0.95$ is at most $9$.

\section{A Network Flow Approach to Monotonicity and Quota}\label{sec:HM}

In this section, we revisit the characterization of house-monotone and quota-compliant methods.
Recently, \citet{goelz2022apportionment} provided a characterization based on a matching polytope, while \citet{still1979} and \citet{balinski1979} had shown a characterization based on a non-polyhedral recursive construction. 
We provide a different characterization based on a simple network flow LP and show tight bounds of the size of the linear program needed.

To study house-monotone methods in a simpler way, we first show that we can actually restrict to methods outputting a single vector for every instance, as any house-monotone method can be easily expressed in terms of them. 
Formally, we define a \textit{solution} as a method $f$ such that $|f(p,H)|=1$ for every $p$ and $H$; we write $f(p,H)=x$ and $f(p,H) = \{x\}$ indistinctly when $f$ is a solution. For a solution $f$, it is easy to see that house monotonicity reduces to the following simple property.

\begin{restatable}{proposition}{prophmsolutions}\label{prop:hm-solutions}
    A solution $f$ is house-monotone if and only if for every $p \in \NN^n$ and $H\in \NN$ it holds that $f(p,H)\leq f(p,H+1)$.
\end{restatable}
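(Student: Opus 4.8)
The statement is a clean equivalence, so I would prove both directions, with the forward direction being essentially immediate from the definition and the reverse direction being the one requiring a short telescoping argument. For the ``only if'' direction: suppose $f$ is house-monotone in the sense of the general definition (axiom (c) from the preliminaries), applied to a solution. Given $p$ and $H$, I want $f(p,H)\le f(p,H+1)$. The subtlety is that the axiom as stated involves three house sizes $H_1<H_2<H_3$; to extract a two-consecutive-house-sizes statement I would instantiate it with $H_1=H$, $H_2=H+1$, $H_3=H+2$ (or any valid triple), read off that there must exist $x\in f(p,H)$ with $x\le y$ where $y$ is the unique vector in $f(p,H+1)$, and then use that $f$ is a solution so $f(p,H)=\{x\}$. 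This gives $f(p,H)\le f(p,H+1)$ directly. (One should double-check the edge cases where $H$ is so small that a valid triple with $H_1=H$ barely fits — but since house sizes range over all of $\NN$ this is not an issue, and if one is worried, the definition can be applied with $H_2=H$ or $H_2=H+1$ as convenient.)

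\textbf{The reverse direction} is where the real content lies. Assume $f(p,H)\le f(p,H+1)$ for all $p,H$. I must verify axiom (c): for all $p$ and all $H_1<H_2<H_3$, and the unique $y\in f(p,H_2)$, there exist $x\in f(p,H_1)$ and $z\in f(p,H_3)$ with $x\le y\le z$. Since $f$ is a solution, $x$ and $z$ are forced to be $f(p,H_1)$ and $f(p,H_3)$ respectively, so what I really need is $f(p,H_1)\le f(p,H_2)\le f(p,H_3)$. This follows by composing the consecutive inequalities: $f(p,H_1)\le f(p,H_1+1)\le \cdots \le f(p,H_2)$ by transitivity of the componentwise order, applied $H_2-H_1$ times, and similarly $f(p,H_2)\le \cdots \le f(p,H_3)$. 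So the whole argument is: chain the hypothesis along the integer interval $[H_1,H_2]$ and $[H_2,H_3]$ and invoke transitivity.

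\textbf{Main obstacle.} There is no deep obstacle; the only thing to be careful about is matching the formal definition of house monotonicity (which is phrased with an existential quantifier over $x$ and $z$ and a ``$y\in f(p,H_2)$'' that in the multi-valued setting need not be unique) to the solution setting, where everything collapses to a single vector. I would state explicitly at the start that since $f$ is a solution we may write $f(p,H)=x_H$ for the unique output, so that both conditions become statements purely about the sequence $(x_H)_{H\in\NN}$ being monotone nondecreasing in $H$ componentwise, and then the equivalence is just ``a sequence is monotone iff consecutive terms are comparable,'' i.e., transitivity of $\le$ on $\NN_0^n$. I would keep the write-up to a few lines and perhaps remark that this reduction is what justifies focusing on solutions for the rest of \Cref{sec:HM}.
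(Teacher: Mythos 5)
Your proof is correct and takes essentially the same approach as the paper: instantiate the house-monotonicity axiom with a suitable triple $(H_1,H_2,H_3)$ to extract the consecutive-step inequality, and conversely chain the consecutive inequalities by transitivity to recover the axiom for any $H_1<H_2<H_3$. The paper's write-up is terser (it leaves the chaining implicit and does not belabor the edge case of small $H$), but the argument is the same.
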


Let $\calF^{\text{HM}}$ be the set of house-monotone methods and $\calF^{\text{Q}}$ be the set of quota-compliant methods. We obtain the following lemma, whose proof is deferred to \Cref{app:lem-hm-methods-solutions}.

\begin{restatable}{lemma}{lemhmmethodssolutions}\label{lem:hm-methods-solutions}
    For every $p\in \NN^n$ and $H\in \NN$,
    \[
        \{f(p,H): f \text{ is a house-monotone and quota-compliant solution}\} = \bigcup_{f \in \calF^{\text{HM}} \cap \calF^{Q}} f(p,H).
    \]
\end{restatable}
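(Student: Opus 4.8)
\textbf{Proof plan for Lemma~\ref{lem:hm-methods-solutions}.}
The plan is to prove the two set inclusions separately, using Proposition~\ref{prop:hm-solutions} to translate house monotonicity of a solution into the pointwise monotonicity condition $f(p,H)\le f(p,H+1)$.

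For the inclusion ``$\subseteq$'', I would take a house-monotone and quota-compliant solution $f$ and simply observe that $f$, viewed as a (single-valued) apportionment method, lies in $\calF^{\text{HM}}\cap\calF^{Q}$: quota compliance is immediate from the definition, and house monotonicity follows because the general definition of house monotonicity (with $y\in f(p,H_2)$ and the existence of $x\le y\le z$ for $H_1<H_2<H_3$) is satisfied trivially when $f$ is single-valued and satisfies $f(p,H)\le f(p,H+1)$ for all $H$ — one just chains the inequalities $f(p,H_1)\le\cdots\le f(p,H_2)\le\cdots\le f(p,H_3)$. Hence $f(p,H)$ is one of the sets appearing in the union on the right-hand side, so the left-hand side is contained in the right-hand side.

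For the inclusion ``$\supseteq$'', I would fix $f\in\calF^{\text{HM}}\cap\calF^{Q}$ and an instance $(p,H)$, pick an arbitrary $y\in f(p,H)$, and construct a house-monotone, quota-compliant \emph{solution} $g$ with $g(p,H)=y$. The idea is to select a single apportionment vector for every house size in a way that respects the monotone chain structure guaranteed by house monotonicity of $f$: anchor the chain at $H$ with $g(p,H):=y$, then extend it upward by repeatedly invoking house monotonicity of $f$ to find $g(p,H+1)\in f(p,H+1)$ with $g(p,H)\le g(p,H+1)$, and similarly downward to find $g(p,H-1)\in f(p,H-1)$ with $g(p,H-1)\le g(p,H)$, continuing down to $H=1$ (or the smallest relevant house size) and up indefinitely. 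For population vectors $p'\ne p$, one simply fixes any house-monotone selection from $f(p',\cdot)$ independently (the instances with different population vectors do not interact in the definitions of the axioms). Each $g(p,\cdot)$ is then quota-compliant because every selected vector was drawn from $f(p,\cdot)$, which is quota-compliant; and $g$ is house-monotone by Proposition~\ref{prop:hm-solutions} since by construction $g(p,H')\le g(p,H'+1)$ for all $H'$. Therefore $y=g(p,H)$ belongs to the left-hand side.

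The main obstacle is the careful extraction of a \emph{consistent} monotone selection in the ``$\supseteq$'' direction: house monotonicity of $f$ only guarantees, for a \emph{fixed} $y\in f(p,H_2)$, the existence of suitable $x\in f(p,H_1)$ and $z\in f(p,H_3)$, so one must build the selection incrementally (greedily fixing $g$ one house size at a time, each time feeding the previously chosen vector into the house-monotonicity guarantee) rather than all at once, and check that this greedy chain never gets stuck — which is exactly what the existential statement in the house-monotonicity axiom provides at each step. A minor point to address is the behavior at the smallest house sizes and the fact that the union on the right-hand side ranges over methods that need not be solutions, so one must also confirm that the vectors produced by a general $f\in\calF^{\text{HM}}\cap\calF^{Q}$ are all captured by \emph{some} solution, which the above construction does by letting $y$ range over all of $f(p,H)$.
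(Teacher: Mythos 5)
Your proposal is correct and follows essentially the same route as the paper: the easy inclusion by observing that a house-monotone, quota-compliant solution is itself a member of $\calF^{\text{HM}}\cap\calF^{Q}$, and the harder inclusion by anchoring at the given vector $y\in f(p,H)$ and greedily building a monotone chain of selections from $f(p,\cdot)$ both downward to house size~$1$ and upward indefinitely, then invoking Proposition~\ref{prop:hm-solutions}. The paper formalizes the chain construction via two short induction claims (one for each direction), but the argument is identical to what you describe.
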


This lemma implies that, in order to characterize house-monotone and quota-compliant methods, it suffices to characterize house-monotone and quota-compliant solutions. We will focus on solutions throughout this section and state the implication of our characterization for methods at the end.

Given a population vector $p$ and an integer value $H$, consider the following linear program:
\begin{align}
\textstyle\sum_{i=1}^nx(i,t)&=1 \quad\quad\quad \text{ for every }t\in [H],\label{hm-seat}\\
\textstyle\sum_{\ell=1}^tx(i,\ell)&\ge \lfloor tp_i/P  \rfloor\; \text{ for every }i\in [n]\text{ and every }t\in [H],\label{hm-lower}\\ 
\textstyle\sum_{\ell=1}^tx(i,\ell)&\le \lceil tp_i/P  \rceil\; \text{ for every }i\in [n]\text{ and every }t\in [H]\label{hm-upper}\\ 
x(i,t)&\ge 0 \quad\quad\quad \text{ for every }i\in [n]\text{ and every }t\in [H].\label{hm-positive}
\end{align}
The key idea behind constructing this linear program is to model the procedure of assigning seats to states in a house-monotone and quota-compliant way.
Consider a feasible integer vector $x$ satisfying \eqref{hm-seat}-\eqref{hm-positive}.
In particular, we have that $x(i,t)\in \{0,1\}$
for every $i\in [n]$ and every $t\in [H]$, and constraint \eqref{hm-seat} guarantees that for every $t\in [H]$ there exists a unique $i_t\in [n]$ such that $x(i_t,t)=1$.
Consider the following apportionment solution: The first seat is assigned to state $i_1$, the second seat to state $i_2$, and so on, until the last seat $H$ is assigned to state $i_H$.
One by one, this assignment of the seats guarantees the number of each state's seats to be non-decreasing during the procedure, and, furthermore, constraints \eqref{hm-lower}-\eqref{hm-upper} guarantee that quota compliance is also satisfied.
Moreover, this linear program can be seen as the projection of a network flow linear program.
In \Cref{fig:flow}, we provide an example of the network flow associated with the linear program \eqref{hm-seat}-\eqref{hm-positive} in an instance with $n=3$ states and $H=6$ seats. 
We obtain the following lemma.
\begin{figure}[t]
    \centering
    \includegraphics[scale=0.42]{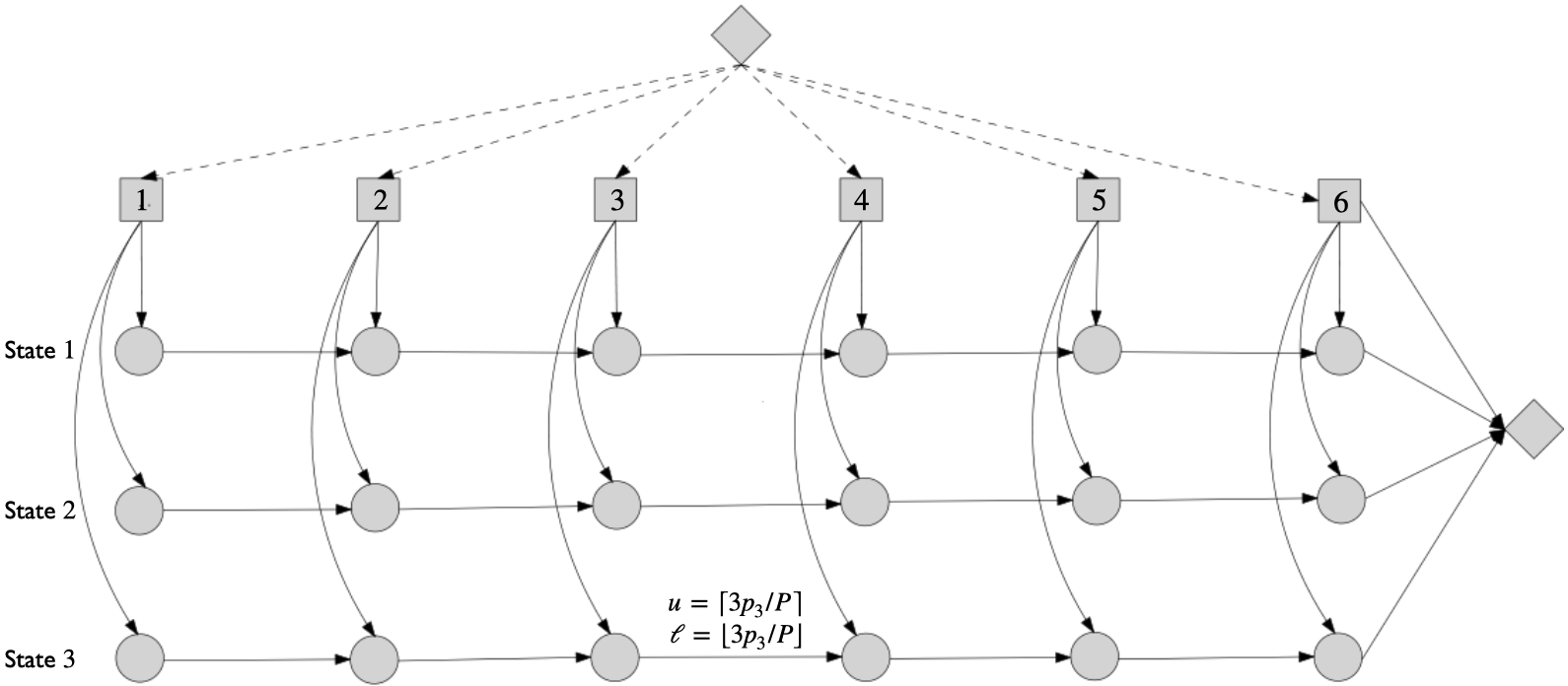}
    \caption{An example of the network flow formulation with $n=3$ states and $H=6$ seats. 
    The two rhombuses represent the source and the sink. 
    The source sends exactly one unit to each of the square nodes representing each of the $H=6$ seats, for which purpose we fix a lower and an upper capacity of exactly $1$ for the dashed edges. Then, for each of the three states, the circular nodes on the vertical layer corresponding to seat $t$ keep track of how many seats the state has been allocated up to that point. 
    The edges connecting the squared nodes to the circular nodes of the same vertical layer have a lower capacity of zero and an upper capacity of one, and every square node sends precisely one unit of flow to those circular nodes (representing the assignment of the seat to a state).
    The edge connecting two consecutive circular nodes in the same horizontal layer (representing the states) have lower and upper capacities of $\lfloor p_it/P\rfloor$ and $\lceil p_it/P\rceil$ for each state $i$ and seat $t$, respectively, to make sure that the allocation is quota-compliant.}
    \label{fig:flow}
\end{figure}
\begin{restatable}{lemma}{lemlpintegral}\label{lem:lp-integral}
For every instance $(p,H)$, the linear program \eqref{hm-seat}-\eqref{hm-positive} is integral.
\end{restatable}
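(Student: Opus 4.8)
The plan is to prove integrality of the polytope defined by \eqref{hm-seat}--\eqref{hm-positive} by exhibiting it as the projection (onto the $x(i,t)$ variables) of the feasible region of a network flow problem with integral capacities, as suggested by \Cref{fig:flow}, and then invoking the classical total unimodularity of node–arc incidence matrices of directed graphs. Since projections of integral polyhedra need not be integral in general, the key point that must be argued carefully is that every (rational) extreme point of the system \eqref{hm-seat}--\eqref{hm-positive} lifts to an extreme point of the flow polytope; equivalently, that the flow polytope has integral vertices and that the projection map sends vertices to vertices (or at least that every vertex of the projection is the image of some vertex of the flow polytope). I expect this lifting argument to be the main obstacle, so I would set the construction up so that the lift is essentially forced.

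Concretely, first I would formally describe the network $N = (V,A)$ from \Cref{fig:flow}: a source $\sigma$, a sink $\rho$, seat-nodes $u_1,\dots,u_H$, and state-seat nodes $v_{i,t}$ for $i\in[n]$, $t\in\{0,1,\dots,H\}$ (with $v_{i,0}$ identified with $\sigma$ or fed from it with zero lower bound). The arcs are: $\sigma \to u_t$ with lower and upper capacity $1$ (forcing exactly one unit into each seat-node); $u_t \to v_{i,t}$ with capacities $[0,1]$, carrying the variable $x(i,t)$; the ``horizontal'' arcs $v_{i,t-1}\to v_{i,t}$ with lower capacity $\lfloor t p_i/P\rfloor$ and upper capacity $\lceil t p_i/P\rceil$, carrying the cumulative sum $\sum_{\ell\le t} x(i,\ell)$; and finally $v_{i,H}\to\rho$. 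Flow conservation at $v_{i,t}$ reads (incoming horizontal flow) $+\,x(i,t)=$ (outgoing horizontal flow), which exactly encodes that the flow on $v_{i,t-1}\to v_{i,t}$ equals $\sum_{\ell\le t}x(i,\ell)$; conservation at $u_t$ together with the $[1,1]$ arc from $\sigma$ gives \eqref{hm-seat}; and the capacity bounds on the horizontal arcs give precisely \eqref{hm-lower}--\eqref{hm-upper}, while $x(i,t)\ge 0$ is the lower capacity on $u_t\to v_{i,t}$. Thus the map sending a feasible flow to its vector of $u_t\to v_{i,t}$ coordinates is an \emph{affine isomorphism} onto the feasible region of \eqref{hm-seat}--\eqref{hm-positive}: given any feasible $x$, the horizontal flows are uniquely determined by the cumulative sums, so the lift is unique and affine in $x$. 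This is the crucial structural observation that disposes of the projection worry — the correspondence is a bijection, not a many-to-one projection, so it carries vertices to vertices in both directions.

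With that isomorphism in hand, the proof concludes quickly: the constraint matrix of the network flow feasibility problem (flow conservation at internal nodes, plus lower/upper capacity box constraints) is the node–arc incidence matrix of a directed graph, which is totally unimodular; the right-hand sides — namely $0$ at conservation rows and the integers $1$, $0$, $\lfloor tp_i/P\rfloor$, $\lceil tp_i/P\rceil$ as capacity bounds — are all integral. Hence the flow polytope is integral (all vertices integral), and by the affine isomorphism so is the polytope \eqref{hm-seat}--\eqref{hm-positive}. I would state TU of incidence matrices and the fact that an integral-RHS system with a TU matrix defines an integral polyhedron as standard facts (citing, say, Schrijver), so the only real content is setting up the network and verifying the isomorphism, the latter amounting to checking that flow conservation at the $v_{i,t}$ and $u_t$ nodes reproduces \eqref{hm-seat} and the cumulative-sum interpretation. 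A minor technical point to handle cleanly is the boundary case $t=0$ (take $v_{i,0}$ to have a single incoming arc of capacity $0$, or simply start the horizontal chain at $t=1$ with $\sum_{\ell\le 0}x(i,\ell)=0$), and the observation that $\lfloor tp_i/P\rfloor \le \lceil tp_i/P\rceil$ so the box constraints are consistent; neither of these poses difficulty.
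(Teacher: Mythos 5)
Your approach is correct and genuinely different from the paper's. The paper acknowledges the network-flow interpretation but then proves total unimodularity of the constraint matrix of \eqref{hm-seat}--\eqref{hm-positive} directly, by a fairly long hands-on application of the Ghouila--Houri sign-assignment criterion. You instead work with the flow formulation itself, observe that the $x$-variables determine the rest of the flow (so the projection is in fact an affine bijection), and then invoke the standard fact that node-arc incidence matrices, augmented with identity rows for box capacities, are TU. This is more modular: it outsources the TU verification to a textbook result instead of re-deriving it, and it makes explicit the structural reason the LP is integral. The tradeoff is that the paper's direct argument is self-contained, whereas yours relies on correctly wiring up the network and its capacities.

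Two small points worth fixing. First, your indexing of the horizontal arcs is off by one: if $v_{i,t-1}\to v_{i,t}$ carries the cumulative sum $\sum_{\ell\le t}x(i,\ell)$, flow conservation at $v_{i,t}$ would force $x(i,t)=x(i,t+1)$. The consistent choice (matching the paper's figure) is that the arc leaving $v_{i,t}$ carries $\sum_{\ell\le t}x(i,\ell)$ and has capacity interval $[\lfloor tp_i/P\rfloor,\lceil tp_i/P\rceil]$, with $v_{i,0}$ having zero inflow so the chain initializes at $0$. You flag the $t=0$ boundary, but it is the interior indices that need the shift. Second, the ``projection worry'' you set up at the start is a non-issue: for a \emph{coordinate} projection $\pi$, every vertex of $\pi(P)$ is the image of some vertex of $P$ (the preimage of a vertex of $\pi(P)$ is a face of $P$, hence contains a vertex), so the image of an integral polytope under a coordinate projection is always integral, without any bijectivity. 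Your bijection observation is still a nice clean way to present it -- and it buys you a face-lattice isomorphism for free -- but it is not needed to plug a gap, because no gap was there.
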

The proof can be found in \Cref{app:lem-lp-integral}. 
The key observation is that \eqref{hm-seat}-\eqref{hm-positive} is the projection of a network flow linear program, and integrality is preserved after performing this projection. 
We provide a direct proof of the integrality based on total unimodularity.
Since every feasible solution of \eqref{hm-seat}-\eqref{hm-positive} is contained in the unit hypercube, \Cref{lem:lp-integral} implies that the set of extreme points in the linear program is exactly the set of integer feasible solutions.
We denote this set as $\calE(p,H)$.

Given a feasible solution $x$ of the linear program \eqref{hm-seat}-\eqref{hm-positive}, let $A_i(x,0)=0$ for each state $i\in [n]$, and when $H\ge 1$ let $A(x,H)$ be the vector such that 
$A_i(x,H)=\sum_{\ell=1}^Hx(i,\ell)$
for every $i\in [n]$.
In simple words, $A_i(x,H)$ is the number of seats that state $i$ has been allocated in the solution $x$ up to seat $H$.
Given a population vector $p$ and a house size $H$, we denote by $\calA(p,H)$ the set of apportionments for this instance that are generated by a house-monotone and quota-compliant solution, i.e.,
\[
    \calA(p,H) = \{f(p,H): f \text{ is a house-monotone and quota-compliant solution}\}.
\]
We now introduce a definition closely related to the one proposed by \citet{balinski1979,balinski2010fair} to characterize the necessary {\it lookahead} that a house-monotone and quota-compliant solution should consider when allocating each seat in order to ensure that it fulfills these properties when adding more seats.
Given an instance $(p,H)$, a non-negative integer vector $y=(y_1,\ldots,y_n)$ and an integer $k\ge 1$, we denote $S_k(p,H,y)=\{i\in [n]:\left\lfloor p_i(H+k)/P \right\rfloor>y_i\}$, and let $\tau(p,H,y)$ be defined as follows.
Let $k^{\star}(p,H,y)$ be the minimum integer $k\ge 1$ such that 
\begin{equation*}\label{eq:tau-def}
     \sum_{i\in S_k(p,H,y)} \bigg(\left\lfloor \frac{p_i(H+k)}{P} \right\rfloor - y_i \bigg) \geq k.
\end{equation*}
If $S_{k^{\star}}(p,H,y)\ne [n]$, let $\tau(p,H,y)=k^{\star}(p,H,y)$; otherwise, let $\tau(p,H,y)=1$.
Consider the function $\Phi$ defined recursively as follows: 
\begin{align*}
\Phi(p,0)&=0,\; \Phi(p,1)=\min\bigg\{ k \in \NN : \sum_{i=1}^n \left\lfloor \frac{p_ik}{P} \right\rfloor\geq k \bigg\},\\
\Phi(p,H+1)&=\max_{T\in [H]}\left\{T+\max\Big\{\tau(p,T,A(x,T)):x\in \calE(p,\Phi(p,H))\Big\}\right\} \text{ for every }H.
\end{align*}
The idea behind this definition is the following.
When $y=A(x,H)$ for some $x\in \calE(p,H)$ and the method has to allocate the next seat, it must consider the states $S_k(p,H,y)$ that will demand additional seats when adding $k$ extra seats, in order to be lower quota-compliant.
If the total number of demanded additional seats is strictly greater than $k$, the house-monotone method will not be quota-compliant with a house size $H+k$; the equality already forces the method to assign the next seat to a state in $S_k(p,H,y)$.
If this set of states is not $[n]$, we take $\tau(p,H,y)$ as the minimum $k$ for which the equality holds, which we denote as $k^*$, since for any $k>k^*$ the set $S_k(p,H,y)$ can only have additional states and thus considering quota compliance up to $H+k^*$ seats is restrictive enough.
If $S_k(p,H,y)=[n]$ for the smallest $k$ for which the equality holds, due to the monotonicity of this set in $k$ there is no restriction on which states can receive the next seat in order to have quota compliance, and thus including quota compliance constraints up to $H+1$ seats is enough.

The value $\Phi(p,H+1)$ captures the worst-case horizon---number of allocated seats $T$ for $T\leq H$, plus necessary additional seats at this point $\tau(p,T,A(x,T))$---that the method must consider in order to allocate the seat $H+1$ in a way that house monotonicity and quota compliance are not violated for any higher number of seats.
We now formally state our main result of this section.

\begin{restatable}{theorem}{thmhmlpcharacterization}\label{thm:hm-LP-characterization}
For every instance $(p,H)$, we have
$\calA(p,H)=\big\{A(x,H):x\in \calE(p,\Phi(p,H))\big\}.$
\end{restatable}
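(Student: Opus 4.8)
The plan is to prove the two inclusions $\calA(p,H) \subseteq \{A(x,H) : x \in \calE(p,\Phi(p,H))\}$ and $\supseteq$ separately, using \Cref{lem:hm-methods-solutions} to reduce to solutions and \Cref{prop:hm-solutions} to treat house monotonicity as the simple chain condition $f(p,H)\le f(p,H+1)$.

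For the inclusion $\supseteq$, I would start with an extreme point $x \in \calE(p,\Phi(p,H))$ and show that $A(x,H)$ is realized by some house-monotone, quota-compliant solution. The idea is to build the solution $f$ so that $f(p,t) = A(x,t)$ for $t \le \Phi(p,H)$ and, crucially, so that this partial assignment can be extended to all larger house sizes without violating either property; here the definition of $\Phi$ does the work, since it was engineered precisely to be the worst-case lookahead guaranteeing extendability. Concretely, I would argue by induction on $H$: the base cases $H=0,1$ follow from the definitions of $\Phi(p,0)$ and $\Phi(p,1)$ together with constraints \eqref{hm-lower}--\eqref{hm-seat}, and for the inductive step I would take an extreme point of $\calE(p,\Phi(p,H+1))$, restrict it to the first $\Phi(p,H)$ seats, observe (using \Cref{lem:lp-integral}) that the restriction is an integral feasible point of the smaller LP, hence in $\calE(p,\Phi(p,H))$, apply the inductive hypothesis to get a house-monotone quota-compliant solution agreeing with it up to house size $\Phi(p,H)$, and then show that the remaining seats can be filled in following $x$ while preserving both properties for all house sizes beyond $\Phi(p,H+1)$ — this last point is exactly where the characterization of $\tau(p,T,A(x,T))$ as the minimal safe horizon is invoked.

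For the inclusion $\subseteq$, I would take any house-monotone quota-compliant solution $f$ and define $x(i,t) = f_i(p,t) - f_i(p,t-1)$ for $t \in [\Phi(p,H)]$ (with $f_i(p,0)=0$). By \Cref{prop:hm-solutions} each such difference is in $\{0,1\}$, and since exactly one state gains a seat when the house grows by one, constraint \eqref{hm-seat} holds; quota compliance of $f$ at every intermediate house size $t$ gives constraints \eqref{hm-lower}--\eqref{hm-upper}; non-negativity \eqref{hm-positive} is immediate. Thus $x$ is an integral feasible point of the LP for house size $\Phi(p,H)$, i.e., $x \in \calE(p,\Phi(p,H))$ by \Cref{lem:lp-integral}, and $A(x,H) = f(p,H)$ by telescoping. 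The only subtlety is that $f$ must indeed remain quota-compliant and house-monotone up to house size $\Phi(p,H)$ (not merely $H$), which holds because $f$ is, by definition, a solution defined for all house sizes and belonging to $\calF^{\text{HM}}\cap\calF^{\text{Q}}$.

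The main obstacle I anticipate is the forward-extension argument in the $\supseteq$ direction: showing that an extreme point of the LP at horizon $\Phi(p,H)$ can always be completed to a house-monotone quota-compliant solution for every larger house size. This requires proving that $\Phi(p,H)$ genuinely captures the worst-case lookahead — that is, that after committing to $A(x,T)$ for $T \le \Phi(p,H)$, the lower-quota demands $S_k(p,T,A(x,T))$ never outpace the available seats, so that a greedy continuation (always feeding a state in $S_{k^\star}$ when forced) never gets stuck. This is essentially a re-derivation, in the LP language, of the recursive feasibility arguments of \citet{balinski1979,still1979}, and carrying it out carefully — including the case analysis on whether $S_{k^\star}(p,T,y) = [n]$ — is where the real work lies.
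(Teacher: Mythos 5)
Your $\subseteq$ direction matches the paper's argument and is correct: you take a house-monotone, quota-compliant solution $f$, form the difference vector $x(i,t)=f_i(p,t)-f_i(p,t-1)$ for $t\in[\Phi(p,H)]$, verify feasibility in the LP from \Cref{prop:hm-solutions} and quota compliance at each intermediate house size, and conclude $x\in\calE(p,\Phi(p,H))$ by integrality with $A(x,H)=f(p,H)$.

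The $\supseteq$ direction, however, is not actually carried out, and the induction you sketch does not close. You write that you would ``apply the inductive hypothesis to get a house-monotone quota-compliant solution agreeing with it up to house size $\Phi(p,H)$,'' but the theorem (and hence any inductive hypothesis of the same shape) only gives agreement at a \emph{single} house size, namely $A(x,H)\in\calA(p,H)$; it does not give you a solution that tracks $x$ over the entire prefix $[\Phi(p,H)]$. So you cannot splice the inherited solution with the remaining coordinates of $x$ and expect house monotonicity across the seam. What is actually needed — and what the paper proves — is a lemma saying that any extreme point $x\in\calE(p,\Phi(p,H))$, viewed as a seat-assignment sequence, makes only ``legal'' moves at every step $T\le H-1$: that is, $x(i,T+1)=0$ for every $i\notin\calL(p,T,A(x,T))\cap\calU(p,T,A(x,T))$ (\Cref{lem:x-to-f}). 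Once this is established, the $\supseteq$ direction is immediate from the recursive characterization of \citet{balinski1979} (restated at the top of \Cref{app:HM-proof}), which already guarantees that any sequence of legal moves can be continued forever — so no separate ``forward-extension'' induction is needed. Your proposal correctly identifies this as the crux (``where the real work lies'') and even names the relevant case split on whether $S_{k^\star}=[n]$, but it stops short of formulating and proving the lemma, leaving the heart of the theorem unproved.
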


For a population vector $p$ and a house size $H$, this result 
states that all apportionments a house-monotone and quota-compliant solution can obtain are captured by the extreme points of \eqref{hm-seat}-\eqref{hm-positive} with a house size equal to $\Phi(p,H)$.
An apportionment solution obtained from any such extreme point $x$ can be implemented by assigning the $\ell$-th seat to the unique state $i_{\ell}$ such that $x(i_{\ell},\ell)=1$.

We defer the proof of \Cref{thm:hm-LP-characterization} to \Cref{app:HM-proof} and now address the natural questions that arise regarding the value $\Phi(p,H)$. 
The following proposition, whose proof can be found in \Cref{app:prop-phi}, provides three important properties of this function.

\begin{restatable}{proposition}{propphi}\label{prop:phi}
For every instance $(p,H)$ and positive integer $c$, the following holds:
\begin{enumerate}[label=(\alph*)]
    \item If $H\leq cP$, then $\Phi(p,H)\leq cP$.\label{phi-ub-cP}
    \item $\Phi(p,cP)=cP$.\label{phi-cP}
    \item If $H\geq 2$, then $\Phi(p,H)-H\le \max_{i\in [n]}\lceil P/p_i\rceil$.\label{phi-ub}
\end{enumerate}
\end{restatable}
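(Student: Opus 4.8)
The plan is to prove the three parts in the order \ref{phi-cP}, \ref{phi-ub-cP}, \ref{phi-ub}, since \ref{phi-cP} is essentially a sanity-check computation that feeds into the other two, and \ref{phi-ub} refines the bound of \ref{phi-ub-cP} for $H$ much larger than $P$.

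For part~\ref{phi-cP}, I would first observe that when $H=cP$ every quota $q_i=p_i(cP)/P=cp_i$ is an integer, so the ``floor'' and ``ceiling'' constraints in \eqref{hm-lower}--\eqref{hm-upper} coincide at $t=cP$ and force $A_i(x,cP)=cp_i$ for every $x\in\calE(p,cP)$. The key point is to show that $\Phi(p,cP)=cP$ by induction on $c$ (or more precisely an induction that walks $H$ up from $1$ to $cP$). The base case $\Phi(p,1)$ is handled by the explicit formula. For the inductive step I would unfold the recursion $\Phi(p,H+1)=\max_{T\in[H]}\{T+\max\{\tau(p,T,A(x,T)):x\in\calE(p,\Phi(p,H))\}\}$ and argue that each term $T+\tau(p,T,A(x,T))$ is at most $cP$ whenever $T\le cP-1$: here one uses that if $y=A(x,T)$ then $\sum_i y_i=T$, and that $S_k(p,T,y)$ together with the defining inequality of $k^\star$ forces $k^\star(p,T,y)\le cP-T$ because at $k=cP-T$ the left-hand side $\sum_{i\in S_k}(\lfloor p_i(T+k)/P\rfloor-y_i)=\sum_i(cp_i-y_i)=cP-T=k$ holds with equality (all states with $cp_i>y_i$ contribute, and states with $cp_i=y_i$ contribute zero, so the sum is exactly $cP-T$). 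Care is needed with the tie-break in the definition of $\tau$: if $S_{k^\star}=[n]$ then $\tau=1$, which only helps; otherwise $\tau=k^\star\le cP-T$. Combining with the inductive hypothesis $\Phi(p,H)\le cP$ (so the extreme points range over $\calE(p,\Phi(p,H))\subseteq$ a valid LP) gives $\Phi(p,H+1)\le cP$, and monotonicity of $\Phi$ together with the forcing at $t=cP$ gives equality.

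Part~\ref{phi-ub-cP} then follows almost immediately: if $H\le cP$, monotonicity of $\Phi$ in its second argument — which I would note follows directly from the recursion, since increasing $H$ can only enlarge the set over which the outer $\max$ is taken — gives $\Phi(p,H)\le\Phi(p,cP)=cP$ by part~\ref{phi-cP}. For part~\ref{phi-ub}, the goal is the sharper additive bound $\Phi(p,H)-H\le\max_{i\in[n]}\lceil P/p_i\rceil=:M$. The natural approach is again induction on $H$, showing $\Phi(p,H+1)\le\Phi(p,H)+1$ would be too weak, so instead I would directly bound, for any $T\in[H]$ and any $y=A(x,T)$ with $\sum_i y_i=T$, the quantity $T+\tau(p,T,y)$ against $H+1+M$. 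The crucial estimate is that $k^\star(p,T,y)\le (H+1-T)+M$: one shows that at $k=(H+1)-T+M$ the defining inequality holds, using that for each state $i$, $\lfloor p_i(T+k)/P\rfloor-y_i$ is large once $k$ exceeds roughly $(\text{deficit of }i)\cdot P/p_i$, and summing the deficits. The honest obstacle here is making the counting tight: one must show $\sum_{i\in S_k}(\lfloor p_i(T+k)/P\rfloor - y_i)\ge k$ at this value of $k$, which requires comparing $\sum_i \lfloor p_i(T+k)/P\rfloor$ to $T+k$ and controlling the rounding losses, exactly $n$ of them each less than $1$, against the slack $M$ coming from the $\lceil P/p_i\rceil$ term. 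I expect this rounding bookkeeping — reconciling the ``$\le T+k$'' lower bound on $\sum_i\lfloor p_i(T+k)/P\rfloor$ with the fact that $\Phi(p,T)$ already sits within $M$ of $T$ by the inductive hypothesis — to be the main technical hurdle; everything else is unwinding definitions and invoking monotonicity of $\Phi$ and of the sets $S_k(p,\cdot,\cdot)$ in $k$.
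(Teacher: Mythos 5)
Your plan for parts~\ref{phi-cP} and~\ref{phi-ub-cP} uses the same key computation as the paper: at $k = cP - T$ every term $\lfloor p_i(T+k)/P\rfloor - y_i$ equals $cp_i - y_i \ge 0$ (using $y_i \le \lceil p_i T/P\rceil \le cp_i$), so the sum over $S_k$ is exactly $\sum_i(cp_i - y_i) = cP - T = k$, giving $\tau \le cP - T$. The paper runs this directly for~\ref{phi-ub-cP} (unwinding the recursion at $T \le H-1$ with $x \in \calE(p,\Phi(p,H-1))$) and then deduces~\ref{phi-cP} by computing $\tau(p,cP-1,\cdot) = 1$ and using $\Phi(p,H) \ge H$. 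Your reordering is fine in principle, but the way you pass from~\ref{phi-cP} to~\ref{phi-ub-cP} is \emph{not}: monotonicity of $\Phi$ in $H$ does not ``follow directly from the recursion.'' Increasing $H$ does enlarge the outer $\max$ over $T$, but it also replaces the inner maximization domain $\calE(p,\Phi(p,H-1))$ by $\calE(p,\Phi(p,H))$, and if $\Phi(p,H) > \Phi(p,H-1)$ then (by the quota constraints for the extra seats) the set of reachable prefixes $\{A(x,T)\}$ can only shrink, so the inner $\max$ can \emph{decrease}. This is a real gap. It is also unnecessary: your induction for~\ref{phi-cP} maintains $\Phi(p,H)\le cP$ for all $H\le cP$ as the inductive invariant, which \emph{is}~\ref{phi-ub-cP}; you should just cite that rather than an unproved monotonicity property.

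For part~\ref{phi-ub}, your route diverges from the paper's, and the hurdle you flag is genuine and, with your choice of $k$, not surmountable. You want to exhibit a $k \le (H+1-T)+M$ with $\sum_{i\in S_k}(\lfloor p_i(T+k)/P\rfloor - y_i)\ge k$. The unavoidable lower bound is $\sum_i \lfloor p_i(T+k)/P\rfloor \ge (T+k) - (n-1)$ (the fractional parts sum to an integer $\le n-1$), so $\sum_i(\lfloor p_i(T+k)/P\rfloor - y_i) \ge k - (n-1)$, which is generically strictly less than $k$; nothing about $M=\max_i\lceil P/p_i\rceil$ cancels this $n-1$ loss at an arbitrary $k$. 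The paper argues in the opposite direction: set $K := \max_i\lceil y_iP/p_i - T\rceil$, use $y_i\le\lceil p_iT/P\rceil$ to get $K\le M$, and observe that for every integer $k>K$ one has $p_i(T+k)/P > y_i$ for all $i$, hence $\sum_{i\in S_k}(\lfloor p_i(T+k)/P\rfloor - y_i) = \sum_i(\lfloor p_i(T+k)/P\rfloor - y_i) \le \sum_i(p_i(T+k)/P - y_i) = k$, i.e., the defining inequality can hold at such $k$ only with equality (all fractional parts zero), in which case $S_k=[n]$ and the tie-break gives $\tau=1$. Combined with the fact that $k^\star$ exists by definition, this yields $\tau\le M$ without ever having to \emph{construct} a $k$ where the inequality holds. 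So the right move for~\ref{phi-ub} is the contrapositive/exclusion argument rather than the direct exhibition you sketched.
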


Combining part \ref{phi-cP} of this proposition with \Cref{thm:hm-LP-characterization}, we conclude that $\calE(p,P)$ provides a full polyhedral description of all apportionments that are obtained from house-monotone and quota-compliant solutions up to a house size equal to $P=\sum_{j=1}^np_j$.
Part \ref{phi-ub} provides an upper bound on the lookahead value $\Phi(p,H)-H$, that describes the number of extra seats needed in the construction of the linear program \eqref{hm-seat}-\eqref{hm-positive} so that we can have a polyhedral description of all possible apportionments with house size $H$ that can be obtained by house-monotone and quota-compliant solutions.
In the worst case, this upper bound can be as large as $\calO(P)$; in the following proposition, we show that this is actually tight.

\begin{restatable}{proposition}{proptightness}\label{prop:tightness}
There is an instance $(p,H)$ with $\Phi(p,H)-H= \Omega(P)$.
\end{restatable}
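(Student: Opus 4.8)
The plan is to exhibit a single explicit instance $(p,H)$ for which the lookahead $\Phi(p,H)-H$ is forced to be linear in $P$, by constructing a configuration where allocating the next seat safely requires ``seeing'' an additional $\Omega(P)$ seats in order to avoid a future lower-quota violation. The upper bound $\Phi(p,H)-H \le \max_{i\in[n]}\lceil P/p_i\rceil$ from \Cref{prop:phi}\ref{phi-ub} suggests the right shape: the obstruction should come from a state with a very small population, so that $P/p_i$ is large. Concretely, I would take one ``tiny'' state, say $p_n = 1$, together with a few larger states whose populations sum to $P-1$ and are chosen so that the tiny state's quota crosses an integer only after many additional seats, i.e.\ $\lfloor n(H+k)/P\rfloor$ first exceeds its current allocation only when $k = \Theta(P)$.

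First I would fix the instance. A clean choice is $n=2$ (or a small constant number of states) with $p_1$ large and $p_2 = 1$, and $H$ chosen so that at house size $H$ the proportional share $q_2 = H/P$ has just dropped below an integer threshold, while the floor $\lfloor (H+k)/P\rfloor$ does not increase until $k$ is on the order of $P$. Then I would trace through the recursive definition of $\Phi$: pick an extreme point $x \in \calE(p,\Phi(p,H-1))$ together with an intermediate $T$ such that $A(x,T)$ is a ``near-miss'' vector — one where the tiny state $n$ has been allocated fewer seats than it will eventually need, and where the set $S_k(p,T,A(x,T))$ does not equal $[n]$ for the relevant $k$. For such $y=A(x,T)$, the quantity $k^\star(p,T,y)$ — the least $k$ with $\sum_{i\in S_k}(\lfloor p_i(T+k)/P\rfloor - y_i)\ge k$ — is exactly $\Theta(P)$, because the only way to accumulate enough ``demand'' is to wait until the tiny state's floor increments, which happens only after $\Theta(P)$ more seats. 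Since $\tau(p,T,y)=k^\star(p,T,y)$ in this case, and $\Phi(p,H) \ge T + \tau(p,T,A(x,T))$, this yields $\Phi(p,H) - H = \Omega(P)$.

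The main obstacle — and where the argument needs care — is verifying that the relevant near-miss vector $y = A(x,T)$ is genuinely realized by \emph{some} extreme point $x\in\calE(p,\Phi(p,H-1))$, rather than being excludable a priori by the quota constraints. In other words, I must confirm that a house-monotone, quota-compliant solution can actually put itself in the position where it has ``underfed'' the tiny state up to seat $T$, so that the $\tau$ term in the definition of $\Phi$ is actually triggered with its large value. This amounts to checking feasibility of the linear program \eqref{hm-seat}-\eqref{hm-positive} with the prescribed partial allocation, which should follow from the slack available in the large state's quota interval (its floor and ceiling differ, leaving room to delay giving seats to the tiny state). I would also double-check the side condition $S_{k^\star}(p,T,y)\neq[n]$, which is what makes $\tau$ equal to $k^\star$ rather than collapsing to $1$; with a large state present whose floor keeps increasing steadily, this state never joins $S_k$ for the relevant range of $k$, so $S_{k^\star}\subsetneq[n]$ as needed. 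Assembling these pieces gives the claimed instance.
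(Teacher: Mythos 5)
Your guiding intuition---use a state with very small population so that $\max_i\lceil P/p_i\rceil$ is large, and exploit the long delay before that state's floor increments---points in the right direction, and the paper's instance ($p = (P-6,2,1,1,1,1)$, $H = P/3+2$, with a specific extreme point) implements exactly this idea. But your sketch has two linked errors that would stall the argument if you carried it out. First, the ``clean choice'' $n=2$, say $p=(P-1,1)$, cannot work through the $\tau$ mechanism at all: a direct check shows that for every $T\ge 1$ and every quota-compliant partial allocation $y$ at $T$, either $k^\star(p,T,y)=1$, or the first $k$ for which the demand inequality is met is precisely the $k$ at which the small state's floor ticks over, and at that $k$ one has $S_{k^\star}=\{1,2\}=[n]$, so $\tau$ collapses to $1$. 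Hence with two states $\Phi(p,H)=H$ for all $H\ge 2$, and any working construction needs $n\ge 3$.

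Second, and this is the conceptual gap, your justification of the side condition $S_{k^\star}\neq[n]$ is backwards. You assert that ``a large state whose floor keeps increasing steadily\ldots\ never joins $S_k$.'' The opposite is true: a state with $p_i$ close to $P$ has $\lfloor p_i(T+k)/P\rfloor$ growing by roughly one per unit increase in $k$, so it enters $S_k$ within one or two steps, and its accumulating demand ($\approx k-1$) is exactly what eventually makes the sum reach $k$; the small ``trigger'' state only supplies the last unit at $k^\star$. What must stay out of $S_{k^\star}$ are therefore \emph{additional} low-population states that, at house size $T$, already sit at their ceiling quota, or whose floor has not yet reached their (zero) allocation, so that their slowly moving floors cannot overtake their allocations within the window $k\le k^\star$. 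This is precisely why the paper uses six states: a large state driving the demand, a medium state $p_2=2$ firing at $k^\star=P/6-2$, and four unit-population states arranged (two over-fed at ceiling, two under-fed at floor $0$) so that none of them enters $S_k$ before $k^\star$. Once the mechanism is corrected, the remaining work---which you do correctly flag---is the reachability check that the chosen partial allocation really arises as $A(x,T)$ for some $x\in\calE(p,\Phi(p,H-1))$; most of the paper's proof is devoted to exactly that verification.
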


The proof of this proposition can be found in \Cref{app:prop-tightness}; it consists of an example with $n=6$ states but it can be extended to an arbitrarily large number of states.
We remark that the worst-case lower bound $\Omega(P)$ for the value $\Phi(p,H)-H$ is achieved when we have one state with almost all the population, and the rest of the states are all tiny in comparison.
Typically, when the states represent districts or regions of a districting map, they are designed in a way that 
$\max_{i\in [n]}p_i/\min_{i\in [n]}p_i$ is a small value. 
In the case where this ratio is $\calO(1)$, the upper bound in \Cref{prop:phi} shows that $\Phi(p,H)-H=\calO(n)$, which in turns implies that $\Phi(p,H)= \calO(n)$ when $H=\Theta(n)$.
In particular, to find a polyhedral description of all the apportionments for a house size $H$ that can be obtained by house-monotone and quota-compliant solutions using \Cref{thm:hm-LP-characterization}, we need to write a linear program of size $\calO(nH)$, which is reasonable in practice.

Due to the correspondence between house-monotone methods and solutions, our theorem also fully characterizes the set of house-monotone and quota-compliant methods. Recall that $\calF^{\text{HM}}$ denotes the set of house-monotone methods and $\calF^{\text{Q}}$ denotes the set of quota-compliant methods. The following corollary is a direct consequence of \Cref{thm:hm-LP-characterization} and \Cref{lem:hm-methods-solutions}.

\begin{corollary}
    For every instance $(p,H)$, we have
\[ \bigcup_{f\in \calF^{\text{HM}} \cap \calF^{\text{Q}}} f(p,H) =\Big\{A(x,H):x\in \calE(p,\Phi(p,H))\Big\}.\]
\end{corollary}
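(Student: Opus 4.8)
The plan is to simply chain the two results that precede the corollary in the excerpt, so the argument is a two-line substitution with no genuine obstacle of its own. First I would recall the notation $\calA(p,H) = \{f(p,H): f \text{ is a house-monotone and quota-compliant solution}\}$ introduced before \Cref{thm:hm-LP-characterization}. By \Cref{lem:hm-methods-solutions}, the set of apportionment vectors attained by house-monotone and quota-compliant \emph{solutions} on the instance $(p,H)$ coincides with the union $\bigcup_{f\in \calF^{\text{HM}}\cap\calF^{\text{Q}}} f(p,H)$ taken over all (possibly multi-valued) house-monotone and quota-compliant \emph{methods}; that is, $\calA(p,H) = \bigcup_{f\in \calF^{\text{HM}}\cap\calF^{\text{Q}}} f(p,H)$. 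Then I would invoke \Cref{thm:hm-LP-characterization}, which states precisely that $\calA(p,H) = \{A(x,H): x\in \calE(p,\Phi(p,H))\}$. Combining the two equalities gives
\[
\bigcup_{f\in \calF^{\text{HM}}\cap\calF^{\text{Q}}} f(p,H) \;=\; \calA(p,H) \;=\; \big\{A(x,H): x\in \calE(p,\Phi(p,H))\big\},
\]
which is the claim.

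Since this is a direct corollary, there is essentially no hard step: all the substance lives in \Cref{lem:hm-methods-solutions} (reducing multi-valued house-monotone methods to single-valued solutions) and in \Cref{thm:hm-LP-characterization} (the polyhedral characterization via the network-flow LP and the lookahead $\Phi$). The only thing worth being careful about is that the reduction in \Cref{lem:hm-methods-solutions} is applied at the level of the \emph{set of attainable vectors} for the \emph{fixed} instance $(p,H)$, not instance-by-instance in a way that would require re-checking monotonicity across house sizes — but that is exactly what the lemma provides. Hence the proof is just the display above together with the two citations.
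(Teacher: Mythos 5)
Your proposal is correct and matches the paper exactly: the paper states this corollary as "a direct consequence of \Cref{thm:hm-LP-characterization} and \Cref{lem:hm-methods-solutions}," which is precisely the two-step substitution you perform. No further comment is needed.
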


Another consequence of \Cref{thm:hm-LP-characterization} is that we characterize the whole family of randomized methods that are house-monotone, quota-compliant, and ex-ante proportional up to a house size $H$ equal to the total population $P$.
We remark that this is also possible using the characterization by \citet{goelz2022apportionment}, but the simplicity of our network flow LP allows for a rather intuitive manner of recovering the apportionment solution.
Indeed, the quota $q$ can be easily mapped to a feasible solution of the linear program \eqref{hm-seat}-\eqref{hm-positive}.
Therefore, it can be written as a convex combination of the extreme points of the network flow LP. The randomized method that samples each of these extreme points with a probability equal to its coefficient in the convex combination is, by \Cref{thm:hm-LP-characterization}, a house-monotone, quota-compliant, and ex-ante proportional method. 

Formally, consider the family of randomized methods $\calM$ on the restricted domain in which the population vector $p$ and the house size $H$ satisfy $H\le P$.
We say that a method in $\calM$ is house-monotone, quota-compliant, and ex-ante proportional if the three properties hold for every $p$ and $H$ in their domain and denote by $\calM^{\star}$ this family of methods.
Given a method $F\in \calM$ and a population vector $p$, let $X_F:[n]\times [P]\to \{0,1\}$ be the random function such that $X_F(i,H)=F_i(p,H)-F_i(p,H-1)$ for every $H\in [P]$ and $i\in [n]$. 
For every $i\in [n]$ and every $t\in [P]$, let $Q(i,t)=p_i/P$.
Observe that $\sum_{i=1}^nQ(i,t)=\sum_{i=1}^np_i/P=1$ and $\sum_{\ell=1}^tQ(i,\ell)=p_it/P$.
Therefore, $Q$ is a feasible solution in the convex hull of $\calE(p,P)$.
For a set $S\subseteq \calE(p,P)$, we define
\[
    \Theta(S) = \left\{ \theta\in (0,1]^S: \sum_{x\in S}\theta_x=1,~ \sum_{x\in S}\theta_x x =Q\right\}
\] 
as the set of strictly positive coefficients such that the convex combination of the points in $S$ according to these coefficients is equal to $Q$.
We now let $\calS(p) = \{S\subseteq \calE(p,P): \Theta(S)\not=\emptyset\}$ be the set containing the subsets of $\calE(p,P)$ such that $Q$ can be obtained as a strictly positive convex combination of the elements in the subset.
\Cref{thm:hm-LP-characterization} implies the following result.

\begin{restatable}{theorem}{thmrandcharacterization}\label{thm:rand-characterization}
Let $F$ be a method in $\calM$.
Then, $F\in \calM^{\star}$ if and only if for every $p$ there exists $S_p\in\calS(p)$ and $\theta\in \Theta(S_p)$ such that for every $x\in S_p$ it holds $\PP(X_F=x)=\theta_x$.
\end{restatable}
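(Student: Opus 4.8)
The plan is to prove the two directions separately, using \Cref{thm:hm-LP-characterization} with $H = P$ together with \Cref{prop:phi}\ref{phi-cP}, which gives $\Phi(p,P) = P$, so that the extreme-point set relevant for the whole domain $\{H \le P\}$ is exactly $\calE(p,P)$.

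\emph{The ``if'' direction.} Suppose that for every $p$ there are $S_p \in \calS(p)$ and $\theta \in \Theta(S_p)$ with $\PP(X_F = x) = \theta_x$ for every $x \in S_p$. I would first observe that, since each $x \in \calE(p,P)$ is a $0/1$ vector satisfying \eqref{hm-seat}, reading off $X_F = x$ and setting the realized apportionment at house size $H$ to $A(x,H)$ yields a deterministic house-monotone and quota-compliant solution: house monotonicity is immediate from \Cref{prop:hm-solutions} because $A_i(x,H)$ is non-decreasing in $H$, and quota compliance follows from constraints \eqref{hm-lower}--\eqref{hm-upper}. Hence $F$ restricted to $\{H \le P\}$ is a randomization over house-monotone, quota-compliant solutions, so $F$ is house-monotone and quota-compliant on this domain. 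For ex-ante proportionality, fix $p$ and $H \le P$; by linearity of expectation $\EE(F_i(p,H)) = \sum_{x \in S_p} \theta_x\, A_i(x,H) = \sum_{\ell=1}^H \sum_{x \in S_p} \theta_x\, x(i,\ell) = \sum_{\ell=1}^H Q(i,\ell) = H p_i / P = q_i$, using $\sum_{x \in S_p}\theta_x x = Q$ and the definition $Q(i,\ell) = p_i/P$. Thus $F \in \calM^\star$.

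\emph{The ``only if'' direction.} Suppose $F \in \calM^\star$. Fix $p$. Since $F$ is house-monotone and quota-compliant, for each realization the map $H \mapsto F(p,H)$ is a house-monotone, quota-compliant solution, so $X_F$ (defined by the increments $F_i(p,H) - F_i(p,H-1)$) takes values in $\calE(p,P)$ with probability one: the increments are $0/1$, sum to one over $i$ for each seat by the house-size constraint, and satisfy \eqref{hm-lower}--\eqref{hm-upper} by quota compliance. Let $S_p = \{x \in \calE(p,P) : \PP(X_F = x) > 0\}$ and $\theta_x = \PP(X_F = x)$ for $x \in S_p$; then $\theta \in (0,1]^{S_p}$ and $\sum_{x \in S_p}\theta_x = 1$. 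It remains to check $\sum_{x \in S_p}\theta_x x = Q$, i.e. $S_p \in \calS(p)$ and $\theta \in \Theta(S_p)$. For any $i \in [n]$ and $\ell \in [P]$, ex-ante proportionality at house sizes $\ell$ and $\ell-1$ gives $\EE(X_F(i,\ell)) = \EE(F_i(p,\ell)) - \EE(F_i(p,\ell-1)) = q_i^{(\ell)} - q_i^{(\ell-1)} = \ell p_i/P - (\ell-1)p_i/P = p_i/P = Q(i,\ell)$, where I write $q_i^{(H)} = Hp_i/P$. Since $\EE(X_F(i,\ell)) = \sum_{x\in S_p} \theta_x\, x(i,\ell)$, this is exactly the coordinate-wise statement $\sum_{x \in S_p}\theta_x x = Q$. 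Hence $\theta \in \Theta(S_p)$, completing the proof.

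\emph{Main obstacle.} The routine parts — linearity of expectation and the telescoping of quotas — are straightforward; the one point that needs care is arguing that $X_F$ lands in $\calE(p,P)$ almost surely, i.e. that the ``only if'' direction genuinely produces a distribution supported on extreme points rather than on arbitrary feasible fractional points. This is where \Cref{thm:hm-LP-characterization} (via $\Phi(p,P) = P$) and \Cref{lem:lp-integral} do the work: every house-monotone, quota-compliant solution's seat-by-seat increments form a $0/1$ feasible vector of \eqref{hm-seat}--\eqref{hm-positive}, hence an extreme point by integrality. I would make sure the statement that a randomized method is house-monotone/quota-compliant ``if $F$ is such with probability one'' (from the preliminaries) is invoked cleanly so that the per-realization argument is legitimate, and also note explicitly that nothing outside the domain $H \le P$ is constrained, which is why the characterization is stated only up to $H = P$.
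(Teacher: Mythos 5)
Your proof is correct and takes essentially the same route as the paper: the ``only if'' direction lands $X_F$ in $\calE(p,P)$ via integrality (the paper invokes \Cref{thm:hm-LP-characterization}; you re-derive the relevant step directly) and reads off $\theta$ from the law of $X_F$, while the ``if'' direction inlines the content of the paper's \Cref{lem:ex-ante} rather than citing it. The only cosmetic difference is that you spell out the telescoping of expected quotas to verify $\sum_x \theta_x x = Q$, which the paper states more tersely.
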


We provide the proof and technical details of this result in Appendix \ref{app:rand-methods}.

\section{Discussion}

Our work underlines a hierarchy among the three studied classes of apportionment methods. 
First, we have studied in \Cref{sec:pop-mon} the well-established class of stationary divisor methods, showing that they have a small outcome space and every outcome satisfies lower or upper quota. However, this class is too restrictive to ensure strong ex-ante proportional outcomes (\Cref{subsec:deviations-quota}).
Second, for the class of population-monotone rules, we obtained strong ex-ante and ex-post proportionality guarantees if we allow to violate the house size (\Cref{subsec:variable-H}). Whether a similar result holds without violating the house size is open.
Lastly, the class of house-monotone rules is extremely rich. Even if we require ex-post quota, the space of ex-ante proportional rules can be large, as exemplified by our characterization (\Cref{sec:HM}). Identifying additional constraints capturing fairness or monotonicity that we can add to our network flow construction is an interesting direction for future work. 

Beyond these points, we believe that our structural insights can contribute to an ever-growing body of literature on generalizations of apportionment---such as committee elections, weighted fair allocation, or multidimensional apportionment---and best-of-both-worlds type guarantees for these settings, whose study has only begun recently.

\newpage
\appendix

\section{Proofs Deferred from Section \ref{sec:pop-mon}}

\subsection{Proof of \Cref{obs:convexity}}\label{app:obs-convexity}

\obsconvexity* 
    
    Let $p,H,\delta_1,\delta_2$ be as in the statement, and let $x\in f(p,H;\delta_1)\cap f(p,H;\delta_2)$ and $\delta\in [\delta_1,\delta_2]$ be arbitrary. Let $\lambda_1$ and $\lambda_2$ be multipliers corresponding to the output $x$ for the $\delta_1$- and $\delta_2$-divisor methods, respectively, i.e., $\lambda_1 \in \Lambda(x;\delta_1)$ and $\lambda_2\in \Lambda(x;\delta_2)$. 
    This is equivalent to
    \begin{align}
        x_i-1+\delta_1 & \leq \lambda_1 p_i \leq x_i+\delta_1 \quad \text{ for every } i\in [n], \label{eq:delta1}\\
        x_i-1+\delta_2 & \leq \lambda_2 p_i \leq x_i+\delta_2 \quad \text{ for every } i\in [n]. \label{eq:delta2}
    \end{align}
    We define
    \[
        \lambda = \lambda_1 + \frac{\delta - \delta_1}{\delta_2 - \delta_1}(\lambda_2 - \lambda_1).
    \]
    We then have, for every $i\in [n]$, that
    \begin{align*}
        \lambda p_i &= \lambda_1 p_i + \frac{\delta - \delta_1}{\delta_2 - \delta_1} (\lambda_2 - \lambda_1) p_i
         = \frac{\delta_2 - \delta}{\delta_2 - \delta_1}\lambda_1 p_i +  \frac{\delta - \delta_1}{\delta_2 - \delta_1}\lambda_2 p_i \\ 
        & \geq \frac{\delta_2 - \delta}{\delta_2 - \delta_1}(x_i - 1 + \delta_1) + \frac{\delta - \delta_1}{\delta_2 - \delta_1} (x_i - 1 + \delta_2)
        = x_i - 1 + \delta,
    \end{align*}
    where the inequality follows from \eqref{eq:delta1} and \eqref{eq:delta2}. Similarly, for each $i\in [n]$ we have that
        \begin{align*}
        \lambda p_i & = \lambda_1 p_i + \frac{\delta - \delta_1}{\delta_2 - \delta_1} (\lambda_2 - \lambda_1) p_i
         = \frac{\delta_2 - \delta}{\delta_2 - \delta_1}\lambda_1 p_i +  \frac{\delta - \delta_1}{\delta_2 - \delta_1}\lambda_2 p_i \\ 
        & \leq \frac{\delta_2 - \delta}{\delta_2 - \delta_1}(x_i + \delta_1) + \frac{\delta - \delta_1}{\delta_2 - \delta_1} (x_i + \delta_2)
        = x_i + \delta,
    \end{align*}
    where the inequality follows again from \eqref{eq:delta1} and \eqref{eq:delta2}. This shows that $x_i \in \llbracket \lambda p_i\rrbracket_{\delta}$. Since $\sum_{i=1}^n x_i=H$ follows immediately from the fact that $x\in f(p,H;\delta_1)\cap f(p,H;\delta_2)$, we conclude that $x \in f(p,H;\delta)$. \qed

\subsection{Proof of \Cref{prop:partition}}\label{app:prop-partition}

\quotaintervals* 

Let $(p,H) \in \NN^n\times \NN$ be arbitrary and consider
\begin{align*}
    \tau &= \min\{\delta\in [0,1]: \lambda_H(\delta)p_i \geq \lfloor q_i \rfloor \text{ for every } i\in [n] \},\\
    \bar{\tau} &= \max\{\delta\in [0,1]: \lambda_H(\delta)p_i \leq \lceil q_i \rceil \text{ for every } i\in [n] \}.
\end{align*}
We first observe that these values are well defined, i.e., that the sets over which the minimum and maximum are taken are non-empty. Indeed, suppose towards a contradiction that, for every $\delta\in [0,1]$, it holds that $\lambda_H(\delta)p_i < \lfloor q_i \rfloor$ for some $i\in [n]$. This yields that, for every $\delta\in [0,1]$, $\lambda_H(\delta)<H/P$. 
Applying this inequality for $\delta=1$, we obtain that
\[
    \sum_{i=1}^{n} | \{t\in \NN_0 : t + 1 \leq \lambda_H(1) p_i\} | = \sum_{i=1}^{n} \lfloor \lambda_H(1) p_i \rfloor < \sum_{i=1}^{n} q_i \leq H.
\]
On the other hand, the definition of $\lambda_H(\delta)$ states that
\[
    \sum_{i=1}^{n} | \{t\in \NN_0 : t + 1 \leq \lambda_H(1) p_i\} | \geq H,
\]
a contradiction.
We conclude that $\tau$ is well defined; the proof for $\bar{\tau}$ is completely analogous.

The fact that $\tau \leq \bar{\tau}$ follows directly from the definitions of these values and the fact that $\lambda_H(\delta)$ is a (piecewise linear) increasing function.

We finally proceed to show properties \ref{partition-uq} and \ref{partition-lq} in the statement. Since the proofs are, again, analogous one to the other, we only include the proof of property \ref{partition-lq}. 

We first consider the corner case $\delta=1$, i.e., we aim to show that for every $x\in f(p,H;1)$ and $i\in [n]$ it holds that $x_i \geq \lfloor q_i \rfloor$. This particular result is well known (see, e.g., \citet{balinski2010fair}); we prove it here for completeness. Let $x\in f(p,H;1)$ be arbitrary. 
 We already observed that $\lambda_H(1) \geq H/P$, so \Cref{obs:lineSets} implies that, for every $i\in [n]$,
\begin{equation}
    x_i \geq \lambda_H(1)p_i-1 \geq q_i-1.\label{eq:lb-xi}
\end{equation}
If we had equality throughout for some state $i$, we would have that $\lambda_H(1)=H/P$, thus \Cref{obs:lineSets} would imply $x_j\leq \lambda_H(1)p_j = q_j$ for every $j\in [n]$, which together with the fact that $x_i=q_i-1$ yields $\sum_{i\in[n]} x_i< H$, a contradiction. We conclude that, for each state $i\in [n]$, at least one of the inequalities in \eqref{eq:lb-xi} is strict, i.e., $x_i\geq \lfloor q_i \rfloor$.

We finally consider the case with $\tau<1$ and we let $\delta\in [\tau,1)$ and $x\in f(p,H;\delta)$ be arbitrary. We claim that, for every $i\in [n]$,
\[
    x_i \geq \lambda_H(\delta)p_i-\delta > \lambda_H(\delta)p_i-1 \geq \lambda_H(\tau)p_i-1 \geq \lfloor q_i \rfloor -1.
\]
Indeed, the first inequality follows from \Cref{obs:lineSets}, the second one from the fact that $\delta<1$, the third one from the fact that $\lambda_H$ is an increasing function, and the last one from the definition of $\tau$. Since $x\in \NN^n_0$, we conclude that $x_i \geq \lfloor q_i \rfloor$ for every $i\in [n]$.

\subsection{Proof of \Cref{prop:BreakingPointsAndTies}} \label{app:breakingTies}

\tiesAndBreaking*
\begin{proof}
We start by proving the first statement. To this end, let $\tau \in [0,1]$ be a breaking point. In the following we will show that (a) $\lambda_H(\delta)$ has a vertex at $\tau$ and (b) $|f(p,H;\tau)| >1$. By the definition of a breaking point, we know that (i) there exists $\varepsilon >0$ such that $f(p,H;\delta)$ is equal for any $\delta \in [\tau-\varepsilon,\tau)$ but is different for $\delta = \tau$, or (ii) there exists $\varepsilon >0$ such that the $f(p,H;\delta)$ is equal for $\delta \in (\tau,\tau +\varepsilon]$ but different for $\delta =\tau$. We assume (i) without loss of generality, as the proof is analogous for (ii). 
\begin{itemize}
    \item To show (a), we assume for contradiction that $f(p,H;\tau - \varepsilon) \setminus f(p,H;\tau)$ is non-empty and let $x$ be a vector in this set. By \Cref{obs:lineSets}, this implies that $\ell_{i,x_{i}-1}(\tau - \varepsilon') \leq \lambda_{H}(\tau - \varepsilon')$ for all $i \in [n]$ and $0<\varepsilon'<\varepsilon$. Continuity of the (piecewise) linear functions yields $\ell_{i,x_{i}-1}(\tau) \leq \lambda(\tau)$ for all $i \in [n]$, hence $x \in f(p,H;\tau)$, a contradiction. Therefore, $f(p,H;\tau) \setminus f(p,H;\tau - \varepsilon)$ is non-empty. Let $x'$ be an element of that set and $x \in f(p,H;\tau - \varepsilon)$ be arbitrary. Since there exists some $i \in [n]$ for which $x'_i > x_i$ and $x'$ is not a feasible outcome for $\tau - \varepsilon$, we know by \Cref{obs:lineSets} that $\ell_{i,x'_i - 1}(\tau - \varepsilon') > \lambda_H(\tau - \varepsilon')$ for all $0 < \varepsilon'<\varepsilon$ and $\ell_{i,x'_i - 1}(\tau) \leq \lambda_H(\tau)$. Thus, $\ell_{i,x'_i -1 }(\delta)$ intersects with $\lambda_H(\delta)$ at $\tau$ and therefore $\lambda_H$ has a vertex at $\tau$. 
    \item To show (b), let $x \in f(p,H;\tau-\varepsilon)$ and $x' \in f(p,H;\tau)$ such that $x \neq x'$. Then, there exists $i\in [n]$ such that $x_i > x'_i$. By \Cref{obs:lineSets}, it holds that $\ell_{i,x'_i}(\tau-\varepsilon') \leq \lambda_{H}(\tau-\varepsilon')$ for any $\varepsilon' < \varepsilon$. Thus, by the continuity of $\lambda_H$ and $\ell_{i,x'_i}$ this also implies $\ell_{i,x'_i}(\tau) \leq \lambda_H(\tau)$. Moreover, since $\mathcal{L}_{\leq H}(\tau)$ already contains at least $H$ lines different from $\ell_{i,x'_i}$ (namely, the lines $\ell_{j,0}, \dots, \ell_{j,x'_j-1}$ for each $j \in [n]$), this implies $|\mathcal{L}_{\leq H}(\tau)| > H$, which directly yields $|f(p,H;\tau)| >1$ by \Cref{obs:lineSets}.  
\end{itemize}

We now turn to prove the second statement, i.e., under the assumption that $p_i \neq p_j$ for all $i,j \in [n]$, the fact that $|f(p,H;\tau)| > 1$ for some $\tau \in [0,1]$ implies that $\tau$ is a breaking point. Note that $|f(p,H;\tau)|>1$ yields $|\mathcal{L}_{\leq H}(\tau)|>H$ by \Cref{obs:lineSets}, which in particular implies that $|\mathcal{L}_{H}(\tau)|>1$. Moreover, since no two lines from $\mathcal{L}$ corresponding to the same state intersect within the interval $[0,1]$ and $p_i \neq p_j$ for all $i,j \in [n]$, all lines in $\mathcal{L}_H(\tau)$ have different slopes. Fix $i \in [n]$ to be the state with the smallest population that has a line in $\mathcal{L}_H(\tau)$. This corresponds to the line with the steepest slope in $\mathcal{L}_H(\tau)$. Let $t \in \mathbb{N}_0$ be such that $\ell_{i,t} \in \mathcal{L}_H(\tau)$. Then, there exists $\varepsilon > 0$ such that for all $\varepsilon > \varepsilon' >0$ it holds that $\ell_{i,t} \in \mathcal{L}_{< H}(\tau-\varepsilon')$ and $\ell_{i,t} \not \in \mathcal{L}_{\leq H}(\tau + \varepsilon')$. Thus, any apportionment vector in $f(p,H,\tau - \varepsilon')$ gives $i$ at least $t+1$ seats but any apportionment vector in $f(p,H,\tau+\varepsilon')$ gives $i$ at most $t$ seats. Thus, $\tau$ is a breaking point.  
\end{proof}

\subsection{Proof of \Cref{obs:number-solutions}}
\label{app:obs-number-solutions}

\obsnumbersolutions* 

Let $n\in \NN$ be arbitrary and consider $\delta=0.5$, $p\in \NN^n$ defined as $p_i=2i-1$ for each $i\in [n]$, and $H= \lfloor n^2/2 \rfloor$. We claim that, for every $S\subseteq [n]$ with $|S|=\lfloor n/2 \rfloor$, $x\in \NN^n_0$ defined as $x_i=i-1+\chi(i\in S)$ satisfies $x\in f(p,H;\delta)$.\footnote{As before, $\chi$ denotes the indicator function, meaning that $\chi(u)=1$ if the condition $u$ holds and zero otherwise.} Indeed, fix such $S$ and $x$ arbitrarily. We have that this apportionment vector respects the house size as
\[
    \sum_{i=1}^{n} x_i = \sum_{i=1}^{n} i -n+|S| = \frac{n(n-1)}{2} +\left\lfloor \frac{n}{2}\right\rfloor = H.
\]
Taking a multiplier $\lambda = 1/2$, we have that $\lambda p_i=i-1/2$ for each $i\in [n]$. Thus, for $i\in S$ we have that
\[
    i-\frac{1}{2} = x_i-1+\delta \leq \lambda p_i \leq x_i+\delta = i+\frac{1}{2}.
\]
Similarly, for $i\in [n]\setminus S$ we have that
\[
    i-\frac{3}{2} = x_i-1+\delta \leq \lambda p_i \leq x_i+\delta = i-\frac{1}{2}.
\]
We obtain that $x\in f(p,H;\delta)$. Since $S$ can be any subset of $\lfloor n/2 \rfloor$ elements of $[n]$, we conclude the result.
\qed

\subsection{Proof of \Cref{lem:wlog-form-of-arrangement}}
\label{app:lem-wlog-form-of-arrangement}

\lemwlogFormArrangement*
\begin{proof}
    Let $\mathcal{L} = \{\ell_1, \dots, \ell_n\}$ be a line arrangement of $n \in \mathbb{N}$ non-vertical lines on the interval $[0,1]$. We also write $\ell_i(\delta) = m_i \delta + c_i$ for all $i \in [n]$. 

    We start by showing that we can apply a linear function to the slope of $\ell_i, i \in [n]$ and do not change the complexity of the $k$-level for each $k\in [n]$. That is, consider the line arrangement $\mathcal{L}'= \{\ell'_1, \dots, \ell'_n\}$, where, for every $i\in [n]$, $\ell'_i(\delta) = (\alpha m_i + \beta)\delta + c_i$ for some $\alpha, \beta \in \mathbb{R}$ with $\alpha\neq 0$. We claim that, for any distinct $i,j,k\in [n]$, the intersection point of lines $\ell_i$ and $\ell_j$ is above the line $\ell_k$ if and only the intersection point of $\ell'_i$ and $\ell'_j$ is above $\ell'_k$. 
    Indeed, the intersection point of $\ell_i$ and $\ell_j$ is given by $$\Big(\delta_0 = \frac{c_j-c_i}{m_i-m_j}, m_i \frac{c_j-c_i}{m_i-m_j} + c_i\Big).$$
    Moreover, $$\ell_k(\delta_0) = m_k\frac{c_j-c_i}{m_i-m_j} + c_k.$$ Thus, the intersection point of $\ell_i$ and $\ell_j$ is above $\ell_k$ if and only if $$(m_i-m_k) \frac{c_j-c_i}{m_i-m_j} > c_k-c_i.$$
    The intersection point of $\ell'_i$ and $\ell'_j$, on the other hand, is given by $$\Big( \delta'_0 = \frac{1}{\alpha}\frac{c_j-c_i}{m_i-m_j}, m_i \frac{c_j-c_i}{m_i-m_j} + \frac{\beta}{\alpha}\frac{c_j-c_i}{m_i-m_j} + c_i \Big).$$
    Moreover, $$\ell'_k(\delta'_0) = m_k\frac{c_j-c_i}{m_i-m_j} + \frac{\beta}{\alpha}\frac{c_j-c_i}{m_i-m_j}+ c_k.$$ Thus, the intersection point of $\ell'_i$ and $\ell'_j$ is above $\ell'_k$ if and only if $$(m_i-m_k) \frac{c_j-c_i}{m_i-m_j} > c_k-c_i,$$ which proves the claim. 

    We can now achieve the property \ref{en:con1} by first applying a transformation such that all slopes are 
    strictly positive, e.g., $\alpha = 1$ and $\beta = -\min\{m_i \mid i \in [n]\} + 0.01$, and then applying a transformation that compresses all slopes into the interval $(1,2)$, e.g., $m_i \mapsto 0.99\cdot \frac{m_i}{\max\{m_i \mid i \in [n]\}} + 1$.
    
    To obtain property \ref{en:con2}, it is easy to show via an analogous argument to the one above that we can apply any linear transformation with a positive factor to $c_i$, for each $i \in [n]$, and do not change the $k$-level. More precisely, we replace the lines obtained via the aforementioned transformation of the slopes, $\ell_i(\delta) = m_i \delta + c_i$ for each $i\in [n]$, by $\ell'_i(\delta) = m_i \delta + \alpha (c_i + \beta)$, where $\alpha > 0$. We first ensure that the coefficients $c_i'= c_i+\beta$ are strictly positive by choosing, e.g., $\beta = - \min \{c_i \mid i \in [n]\} + 0.01$. Then, we choose the smallest $\alpha \in \mathbb{Q}$ such that $\alpha \frac{c'_i}{m_i} \in \mathbb{N}$, for all $i \in [n]$. The resulting line arrangement satisfies properties \ref{en:con1}-\ref{en:con3}.
\end{proof}

\subsection{Power-mean Divisor Methods}
\label{app:power-mean}

In this section, we provide more details about the extension of the upper bound on the number of breaking points of an instance to the case of power-mean divisor methods. For $q\in \RR$, we define $s_q \colon \NN \to \RR_+$ by
\[
    s_q(t) = \begin{cases}
        \lim_{q\to 0} \big(\frac{t^q}{2}+\frac{(t+1)^q}{2}\big)^{1/q} = \sqrt{t(t+1)} & \text{if } q=0,\\
        0 & \text{if } t=0,~ q<0,\\
        \big(\frac{t^q}{2}+\frac{(t+1)^q}{2}\big)^{1/q} & \text{otherwise.} 
    \end{cases}
\]
Note that for every $q\in \RR$ we have $s_q(t)\in [t,t+1]$ for each $t\in \NN$, and that the sequence $s_q(0),s_q(1),s_q(2),\ldots$ is strictly increasing. The rounding rule parameterized by $q\in \RR$ is then defined as
\[
    \llbracket r \rrbracket_q = \begin{cases}
        \{0 \} & \text{ if } r < s_q(0),\\
        \{t \} & \text{ if } s_q(t-1) < r < s_q(t) \text{ for some } t\in \NN,\\
        \{t,t+1\} & \text{ if } r = s_q(t) \text{ for some } t\in \NN_0.
    \end{cases}
\]
For $q\in \RR$, the \textit{$q$-power-mean divisor method} is a family of functions $f(\cdot,\cdot;q)$ (one function for each number $n \in \mathbb{N}$)
such that for every $p\in \NN^n$ and $H\in \NN$
\[
    f(p,H;q) = \bigg\{ x\in \NN^n_0 ~\bigg|~ \text{there exists } \lambda>0 \text{ s.t.\ } x_i \in \llbracket \lambda p_i\rrbracket_q \text{ for every }i\in [n] \text{ and } \sum_{i=1}^{n} x_i = H \bigg\}.
\]
It is not hard to see that $q=-\infty$ yields the Adams method, $q=-1$ the Dean method, $q=0$ the Huntington-Hill method, $q=1$ the Webster/Sainte-Lagu\"e method, and $q=\infty$ the Jefferson/D'Hondt method \citep{marshall2002majorization}. Similarly to the stationary case,\footnote{We have replaced the inductive definition given for stationary divisor methods with an alternative one since the domain of $q$ is not compact. This new definition could be equivalently employed for stationary divisor methods; we used the inductive one for simplicity.} the \textit{breaking points} of an instance $(p,H)$ are all values $\tau\in \RR$ for which there exists $\varepsilon>0$ such that, for all $\varepsilon' \in (0,\varepsilon]$, we have that
\[
    f(p,H;\tau-\varepsilon) = f(p,H;\tau-\varepsilon') \neq f(p,H;\tau+\varepsilon') = f(p,H;\tau+\varepsilon).
\]

For $i \in [n], t \in \{0,\dots,H-1\}$, we consider the functions 
$\ell^+_{i,t}\colon \RR_{++} \to \RR_+$ and $\ell^-_{i,t}\colon \RR\setminus\RR_{+} \to \RR_+$, both equal to $\frac{s_q(t)}{p_i}$ when evaluated at~$q$ but with different domains.
We also consider the families
\[
    \mathcal{L}^+(p,H) = \big\{\ell^+_{i,t}(q) \;\vert\; i \in [n], t \in \{0,\dots,H-1\}\big\},\ \mathcal{L}^-(p,H) = \big\{\ell^-_{i,t}(q) \;\vert\; i \in [n], t \in \{1,\dots,H-1\}\big\}.
\]
The following lemma states the key observation for our extension to power-mean divisor methods.

\begin{lemma}\label{lem:pseudolines}
    Let $(p,H)$ be an instance with $p_i\neq p_j$ for all $i,j\in [n]$ with $i\neq j$. Then, $\mathcal{L}^+(p,H)$ and $\mathcal{L}^-(p,H)$ are pseudoline arragements.
\end{lemma}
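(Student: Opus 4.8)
The plan is to reduce the pseudoline property to the strict monotonicity of a single real function of $q$. Fix two distinct index pairs $(i,t) \neq (j,t')$ and set $\phi_t(q) := \ln s_q(t)$. Since all quantities involved are positive, the curves $\ell^+_{i,t}$ and $\ell^+_{j,t'}$ (and likewise $\ell^-_{i,t}$ and $\ell^-_{j,t'}$) meet at a point with first coordinate $q$ exactly when $s_q(t)/p_i = s_q(t')/p_j$, i.e.\ when $\phi_t(q) - \phi_{t'}(q) = \ln(p_i/p_j)$. So it is enough to show that $q \mapsto \phi_t(q) - \phi_{t'}(q)$ attains each value at most once on the relevant domain — $q > 0$ for $\mathcal{L}^+(p,H)$, $q < 0$ for $\mathcal{L}^-(p,H)$. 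When $t = t'$ one has $i \neq j$, hence $p_i \neq p_j$ by the hypothesis on $p$, while $\phi_t - \phi_{t'} \equiv 0 \neq \ln(p_i/p_j)$, so these two curves never meet; thus the crux is the case $t \neq t'$, where I will argue that $q \mapsto \phi_t(q) - \phi_{t'}(q)$ is \emph{strictly monotone} (by symmetry I may assume $t > t'$).

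The key move is a normalization of the signposts: writing $\mu_q$ for the power mean of exponent $q$, I would use $s_q(t) = (t+1)\,\mu_q\!\big(\tfrac{t}{t+1}, 1\big)$, so that $\phi_t(q) = \ln(t+1) + G(q; v_t)$ with $v_t := t/(t+1) \in [0,1)$ and $G(q; v) := \tfrac1q \ln\tfrac{v^q + 1}{2}$; since $t \mapsto t/(t+1)$ is increasing, $t > t'$ gives $v_t > v_{t'}$. For $v, v' \in (0,1)$ I would write $G$ as an integral of its $v$-derivative, $G(q;v) - G(q;v') = \int_{v'}^{v} \frac{\xi^{q-1}}{\xi^q + 1}\, d\xi$, and then check, via the logarithmic derivative, that $\partial_q\!\left(\frac{\xi^{q-1}}{\xi^q+1}\right) = \frac{\xi^{q-1}\ln\xi}{(\xi^q+1)^2}$, which is strictly negative for every $\xi \in (0,1)$ and every $q \neq 0$. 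Differentiating under the integral sign — valid because, for $t' \geq 1$, the integrand and its $q$-derivative are continuous on the compact interval $[v_{t'}, v_t] \subset (0,1)$ — would then give $\partial_q\big(\phi_t(q) - \phi_{t'}(q)\big) < 0$ on the whole domain, so $\phi_t - \phi_{t'}$ is strictly decreasing and the two curves cross at most once.

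The remaining case is $v_{t'} = 0$, i.e.\ $t' = 0$, which arises only for $\mathcal{L}^+(p,H)$ (there $q > 0$, and $\mathcal{L}^-(p,H)$ excludes $t = 0$). Here $s_q(0) = 2^{-1/q}$, and after the cancellation $\tfrac1q\ln\tfrac{v^q+1}{2} - \tfrac1q\ln\tfrac12 = \tfrac1q\ln(v^q+1)$ one is left with the closed form $\phi_t(q) - \phi_0(q) = \ln(t+1) + \tfrac1q\ln(v_t^q + 1)$; differentiating this directly shows $\tfrac{d}{dq}\big[\tfrac1q\ln(v_t^q+1)\big] = \tfrac{1}{q^2}\big[\tfrac{q\,v_t^q\ln v_t}{v_t^q+1} - \ln(v_t^q+1)\big] < 0$ for $q > 0$, because $v_t \in (0,1)$ makes both resulting summands negative. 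Collecting the three cases shows that any two curves in $\mathcal{L}^+(p,H)$, and any two in $\mathcal{L}^-(p,H)$, intersect at most once, which is exactly the claim (and since the differences $\phi_t - \phi_{t'}$ are non-constant, no two of the curves coincide).

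I expect the main obstacle to be identifying the normalization $s_q(t) = (t+1)\,\mu_q(t/(t+1),1)$ that recasts the question as the constancy of sign of the $q$-derivative of the single-variable integrand $\xi^{q-1}/(\xi^q+1)$; once that is found, the remaining computations — the logarithmic derivative and the $t=0$ boundary case — are routine. A secondary point to handle with care is the justification of differentiation under the integral sign and the behaviour of the integrand near $\xi = 0$, which is the reason I would treat the $t' = 0$ case through the explicit formula rather than the integral representation.
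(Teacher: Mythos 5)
Your proof is correct and takes a genuinely different route from the paper's. The paper isolates the purely algebraic claim that for fixed $a<b$, $c<d$ in $\RR_{++}$ with $(a,b)\neq(c,d)$, the equation $a^q+b^q=c^q+d^q$ has at most one solution $q>0$; it proves this by assuming two solutions $q<r$, expressing the $r$-equation as $g(a^q)=g(c^q)$ with $g(w)=w^{r/q}+(z-w)^{r/q}$, and using strict monotonicity of $g$ on $(0,z/2)$ to derive a contradiction. It then feeds the pairs $(t/p_i,(t+1)/p_i)$ (for $\mathcal{L}^+$, $q>0$) and $(p_i/(t+1),p_i/t)$ (for $\mathcal{L}^-$, rewritten with $-q>0$) into that claim. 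You instead factor out the scale so that $\ln s_q(t)=\ln(t+1)+G(q;v_t)$ with $v_t=t/(t+1)$ and $G(q;v)=\tfrac1q\ln\tfrac{v^q+1}{2}$, and then establish the strictly stronger statement that $\phi_t-\phi_{t'}$ is strictly monotone in $q$ on each half-line, via the integral representation $G(q;v)-G(q;v')=\int_{v'}^{v}\xi^{q-1}/(\xi^q+1)\,d\xi$ and the observation that the integrand's $q$-derivative $\xi^{q-1}\ln\xi/(\xi^q+1)^2$ is uniformly negative for $\xi\in(0,1)$. Both arguments are at bottom monotonicity arguments, but the paper's bypasses any differentiation under the integral sign and is arguably more elementary, whereas yours delivers the extra structural fact that $q\mapsto\ln\!\big(s_q(t)/s_q(t')\big)$ is strictly monotone (so each pair of curves crosses at most once for a transparent reason), and it avoids a minor sore spot in the paper's application: the paper's claim is stated for $a,b,c,d\in\RR_{++}$ yet is invoked with $a=t/p_i=0$ when $t=0$ (harmless, but technically outside the stated hypothesis), whereas you treat $t'=0$ cleanly by a separate closed-form derivative computation. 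One small imprecision to fix: your closing parenthetical says non-coincidence of the curves follows because "the differences $\phi_t-\phi_{t'}$ are non-constant," but for $t=t'$ the difference is identically zero; non-coincidence in that case comes instead from $\ln(p_i/p_j)\neq 0$, which you already handle correctly in the body — just adjust the parenthetical so it does not overclaim.
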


\begin{proof}
    We need to prove that, for any pair of curves $\ell$ and $\ell'$ in either $\mathcal{L}^+(p,H)$ or $\mathcal{L}^-(p,H)$, the curves intersect at most once. We will make use of the following claim.

    \begin{claim}\label{claim:exp-eq-1sol}
        For every $a,b,c,d\in \RR_{++}$ with $a<b$, $c<d$, and $(a,b)\neq (c,d)$, the equation in $q\in \RR_{++}$
        \[
            a^q+b^q = c^q+d^q
        \]
        has at most one solution.
    \end{claim}

    \begin{proof}
        Let $a,b,c,d$ be as in the statement. Suppose toward a contradiction that there are two solutions $q,r$ with $0<q<r$, i.e., 
        \begin{align}
            a^q+b^q & = z = c^q+d^q,\label{eq:power-mean-q}\\
            a^r+b^r & = c^r+d^r\label{eq:power-mean-r}
        \end{align}
        for some $z\in \RR$. If $a=c$, any of these equations yields $b=d$, hence $(a,b) = (c,d)$, a contradiction. In the following, we thus assume $a\neq c$.
        From equation \eqref{eq:power-mean-q}, we know that
        \begin{equation}
            a^r+b^r = (a^q)^{r/q} + (z-a^q)^{r/q}, \qquad
            c^r+d^r = (c^q)^{r/q} + (z-c^q)^{r/q}.\label{eq:power-mean-abcd}
        \end{equation}
        We further know that
        \begin{equation}
            a^q < \frac{z}{2},\qquad c^q < \frac{z}{2}\label{eq:power-mean-bounds-ac}
        \end{equation}
        due to $a<b$ and $c<d$.

        We now observe that the function $g\colon (0,z/2) \to \RR$ defined as $g(w)=w^{r/q} + (z-w)^{r/q}$ is strictly decreasing. Indeed, for $w\in (0,z/2)$ we have
        \[
            g'(w) = \frac{r}{q}\big(w^{r/q-1} + (z-w)^{r/q-1}\big) < 0,
        \]
        where the last inequality uses that $w<z/2$ and $q<r$. Therefore, inequalities \eqref{eq:power-mean-bounds-ac} and $a\neq c$ yield 
        \[
            (a^q)^{r/q} + (z-a^q)^{r/q} \neq (c^q)^{r/q} + (z-c^q)^{r/q}.
        \]
        Equalities \eqref{eq:power-mean-abcd} then imply $a^r+b^r \neq c^r+d^r$, a contradiction to equation \eqref{eq:power-mean-r}.
    \end{proof}
    
    To show that $\calL^+(p,H)$ is a pseudoline arrangement, we need to prove that, for every $i,j\in [n]$ and $t,u\in \{0,\dots,H-1\}$ with $(i,t) \neq (j,u)$, the equation in $q>0$
    \[
        \frac{1}{p_i}\bigg(\frac{t^q}{2}+\frac{(t+1)^q}{2}\bigg)^{\frac{1}{q}} = \frac{1}{p_j}\bigg(\frac{u^q}{2}+\frac{(u+1)^q}{2}\bigg)^{\frac{1}{q}},
    \]
    has at most one solution. For this domain of~$q$, the equation is equivalent to
    \[
        \bigg(\frac{t}{p_i}\bigg)^q+\bigg(\frac{t+1}{p_i}\bigg)^q = \bigg(\frac{u}{p_j}\bigg)^q+\bigg(\frac{u+1}{p_j}\bigg)^q.
    \]
    Furthermore, $\big(\frac{t}{p_i},\frac{t+1}{p_i}\big) \neq \big(\frac{u}{p_j},\frac{u+1}{p_j}\big)$: If we had $p_jt=p_iu$ and $p_j(t+1)=p_i(u+1)$ we would have $(p_i,t) = (p_j,u)$ and thus $(i,t) = (j,u)$, a contradiction. Thus, \Cref{claim:exp-eq-1sol} directly implies that the equation has at most one solution. 
    
    Similarly, to show that $\calL^-(p,H)$ is a pseudoline arrangement, we need to prove that, for every $i,j\in [n]$ and $t,u\in \{1,\dots,H-1\}$ with $(i,t) \neq (j,u)$, the equation in $q<0$
    \[
        \frac{1}{p_i}\bigg(\frac{t^q}{2}+\frac{(t+1)^q}{2}\bigg)^{\frac{1}{q}} = \frac{1}{p_j}\bigg(\frac{u^q}{2}+\frac{(u+1)^q}{2}\bigg)^{\frac{1}{q}},
    \]
    has at most one solution. For this domain of~$q$, the equation is equivalent to
    \[
        \bigg(\frac{p_i}{t+1}\bigg)^{-q} + \bigg(\frac{p_i}{t}\bigg)^{-q} = \bigg(\frac{p_j}{u+1}\bigg)^{-q} + \bigg(\frac{p_j}{u}\bigg)^{-q}.
    \]
    The fact that $\big(\frac{p_i}{t+1},\frac{p_i}{t}\big) \neq \big(\frac{p_j}{u+1},\frac{p_j}{u}\big)$ follows from the previous case. Thus, \Cref{claim:exp-eq-1sol} again implies that the equation has at most one solution.
\end{proof}

We define, similarly to the stationary case,
\begin{align*}
\lambda^+_H(q) & = \min \big\{ \lambda\in \RR \;\big\vert\; |\{\ell \in \mathcal{L}^+(p,H) \mid \ell(q) \leq \lambda\}| \geq H \big\} \text{ for } q\in \RR_{++},\\ 
\lambda^-_H(q) & = \min \big\{ \lambda\in \RR \;\big\vert\; |\{\ell \in \mathcal{L}^-(p,H) \mid \ell(q) \leq \lambda\}| \geq H \big\} \text{ for } q\in \RR\setminus \RR_{+}.
\end{align*}
As before, for all $q\in \RR_{++}$, $\lambda^+_H(q)$ is equal to $\ell(q)$ for some $\ell\in \calL^+(p,H)$, and for all $q\in \RR\setminus \RR_{+}$, $\lambda^-_H(q)$ is equal to $\ell(q)$ for some $\ell\in \calL^-(p,H)$. Thus, the number of breaking points of an instance is at most the number of vertices of the piecewise linear function $\lambda^+_H$ plus the number of vertices of the piecewise linear function $\lambda^-_H$ plus one (due to a potential breaking point at $q=0$). We make use of the following property analogous to that used in the proof of the upper bound of \Cref{thm:main}: For every instance $(p,H)$, there exist at most $2n-1$ lines in $\calL^+(p,H)$ that intersect with $\lambda^+_H$ and at most $2n-1$ lines in $\calL^-(p,H)$ that intersect with $\lambda^-_H$. This property can be shown in a completely analogous way to the case of line arrangements detailed in \Cref{subsec:ub-bp}, as its proof only uses that, for every state $i$ and integer $t$, the function $\ell_{i,t}$ is increasing and its largest value is at most the smallest value of $\ell_{i,t+1}$, and both of these properties hold for the functions in $\calL^+(p,H)$ and for the functions in $\calL^-(p,H)$.

The last ingredient we need is a bound on the complexity of the $k$-level of a pseudoline arrangement. \citet{tamaki2003characterization} extended the result by \citet{dey1998improved} and showed that, for an arrangement of $m$ pseudolines, the complexity of the $k$-level is bounded by $\calO(m^{4/3})$ for any $k\in \{0,\ldots,m\}$. Combining the previous two observations, we can conclude that the number of breaking points for both $q<0$ and $q>0$ is upper bounded by $\calO(n^{4/3})$, thus the total number of breaking points is upper bounded by $\calO(n^{4/3})$ as well. Note that this holds even when the populations of different states coincide, as having coincident curves cannot increase the complexity of the arrangement.\footnote{This is the same reason why having coincident lines given by the constant functions $\ell_{i,0}(q)=0$ for all $i\in [n]$ when $q<0$ is not an issue for the complexity either.}

\section{Proofs Deferred from Section \ref{sec:randomization-variable}}

\subsection{Proof of \Cref{prop:det-dev-quota}}\label{app:prop-det-dev-quota}

\propdetdevquota*

Fix $H\in \NN$ and $\varepsilon>0$. To show part \ref{prop:det-dev-quota-i}, we consider $n=H$, $M = \left\lceil \frac{1}{\varepsilon}(H-1)(H-\varepsilon)\right\rceil$, and define $p\in \NN^n$ as $p_1=M$ and $p_i=1$ for $i\in \{2,\ldots,H\}$. 
We claim that $x_1\leq 1$ for every $x\in f(p,H;0)$. Indeed, if $x_1\geq 2$ for some $x\in f(p,H;0)$, then for every $\lambda \in \Lambda(x;\delta)$ we would have $\lambda p_1 \geq 1$, i.e., $\lambda\geq 1/M$. This would imply that, for every $i \in \{2,\ldots,H\}$, $\lambda p_i >0$ and thus $x_i\geq 1$. But this yields $\sum_{i=1}^n x_i > H$, a contradiction. 
We conclude that, for every $x \in f(p,H;0)$,
\[
    | x_1 - q_1 | \geq \frac{M}{M+H-1}H - 1 \geq \frac{\frac{1}{\varepsilon}(H-1)(H-\varepsilon)}{\frac{1}{\varepsilon}(H-1)(H-\varepsilon)+H-1}H - 1 = H-1-\varepsilon.
\]

To see part \ref{prop:det-dev-quota-ii}, we let $\delta\in (0,1]$ be arbitrary, we fix $M=\left\lceil \frac{H}{\varepsilon}-1 \right\rceil $, and we let $n\in \NN$ with $n>M+1$ be such that
\[
    \frac{H-1}{n-1-M}M< \delta.
\]
Observe that this implies 
\begin{equation}
    H-1+\delta < \frac{n-1}{M}\delta.\label{eq:non-emtpy-interval-lambda-det}
\end{equation}
Consider now $p=\left(n-1,M,\ldots,M\right)$. We claim that $x_1=H$ for every $x \in f(p,H;\delta)$. Indeed, if $x_1\leq H-1$ for some $x\in f(p,H;\delta)$, then for every $\lambda \in \Lambda(x;\delta)$ we would have $\lambda p_1 \leq H-1+\delta$, i.e., $\lambda< \frac{\delta}{M}$. This would imply that, for every $i \in \{2,\ldots,H\}$, $\lambda p_i < \delta$ and thus $x_i=0$. But this yields $\sum_{i=1}^n x_i < H$, a contradiction. 
We conclude that, for every $x \in f(p,H;\delta)$,
\[
    | x_1 - q_1 | = H-\frac{1}{M+1}H \geq H - \frac{1}{\frac{H}{\varepsilon}-1+1}H = H-\varepsilon.\eqno\qed
\]

\subsection{Proof of \Cref{prop:rand-dev-quota}}\label{app:prop-rand-dev-quota}

\propranddevquota* 

    Let $G,B,H$, and $\varepsilon$ be as in the statement, and let $F^B$ be the $G$-randomized divisor method with tie-breaking distribution $B$. If $G$ is a distribution with all probability mass in one point, then the result follows directly from \Cref{prop:det-dev-quota}, so we assume that this is not the case in the following. 
    We prove the proposition by making use of the following two claims, which provide large expected deviations for cases when there is a considerable probability mass below or above a value $\xi\in [0,1]$, respectively. We use similar instances as in the proof of \Cref{prop:det-dev-quota}, but additional conditions are required in order to ensure these large expected deviations.

    \begin{claim}\label{claim:bad-ins-F0-large}
        For every $\xi\in (0,1)$ such that 
        $G(\xi) > \frac{H}{\left\lceil 1/\xi-1 \right\rceil +H-1}$, there exist $n\in \NN$ and $p\in \NN^n$ such that
        \[
        | \EX(F^B_1(p,H)) - q_1 | \geq 
        G(\xi)(H-1) - \frac{H-1}{\left\lceil 1/\xi-1 \right\rceil +H-1}H.
        \]
    \end{claim}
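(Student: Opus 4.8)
The plan is to reuse the worst-case instance from part~\ref{prop:det-dev-quota-i} of \Cref{prop:det-dev-quota} in a slightly modified form, where the ``large'' state~$1$ is so dominant that, whenever the realized $\delta$ falls below $\xi$, state~$1$ receives only one seat even though its quota is close to $H$. Concretely, I would set $n = H$ and $p_1 = M$, $p_i = 1$ for $i \in \{2,\ldots,H\}$, choosing $M = \lceil 1/\xi - 1\rceil$ (so that $1/(M+1) \le \xi$, i.e., the fractional part of $\lambda p_1$ at the relevant small multipliers stays below $\xi$). The first step is to show that for every realized $\delta < \xi$ and every $x \in f(p,H;\delta)$ we have $x_1 = 1$: if $x_1 \ge 2$, any feasible multiplier $\lambda$ satisfies $\lambda p_1 \ge 2 - 1 + \delta$, hence $\lambda \ge (1+\delta)/M > \delta$ (using $M = \lceil 1/\xi-1\rceil < (1+\delta)/\delta$ when $\delta < \xi$, which needs a short check), so $\lambda p_i = \lambda > \delta$ for all $i \ge 2$, forcing $x_i \ge 1$ and thus $\sum_i x_i \ge H+1$, a contradiction; on the other hand $x_1 \ge 1$ always holds since the remaining $H-1$ states can absorb at most $H-1$ seats. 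So conditioned on $\delta < \xi$, state~$1$ gets exactly one seat regardless of the tie-breaking distribution~$B$.

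**The estimate.** The second step is a direct computation of $\EE(F_1^B(p,H))$ and the resulting deviation. Since $F_1^B(p,H) \ge 1$ always and $F_1^B(p,H) = 1$ whenever $\delta < \xi$ (an event of probability $G(\xi^-) \ge$ \dots; here one should be slightly careful whether $G(\xi)$ denotes $\PP(\delta < \xi)$ or $\PP(\delta \le \xi)$ — I would adopt the paper's convention, presumably $G(\xi) = \PP(\delta \le \xi)$, and handle the boundary by taking $M$ strictly large enough that the argument also covers $\delta = \xi$, or by a limiting argument), and $F_1^B(p,H) \le H$ always, we get
\[
\EE(F_1^B(p,H)) \le G(\xi)\cdot 1 + (1 - G(\xi))\cdot H = H - G(\xi)(H-1).
\]
The quota is $q_1 = \frac{M}{M+H-1}H$, so
\[
q_1 - \EE(F_1^B(p,H)) \ge \frac{M}{M+H-1}H - H + G(\xi)(H-1) = G(\xi)(H-1) - \frac{H-1}{M+H-1}H,
\]
which, since $M = \lceil 1/\xi - 1\rceil$, is exactly the claimed bound $G(\xi)(H-1) - \frac{H-1}{\lceil 1/\xi - 1\rceil + H - 1}H$; the hypothesis $G(\xi) > \frac{H}{\lceil 1/\xi-1\rceil + H-1}$ guarantees this quantity is positive, so $|\EE(F_1^B(p,H)) - q_1|$ equals it. The third (bookkeeping) step is just to note $q_1 \ge \EE(F_1^B(p,H))$ so the absolute value can be dropped, and to confirm $n = H > M+1$ is not required here (unlike in part~\ref{prop:det-dev-quota-ii}), so no extra constraint on $n$ is needed.

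**Main obstacle.** I expect the only genuinely delicate point to be the interaction between the strict/weak inequality in the event $\{\delta < \xi\}$ versus $\{\delta \le \xi\}$ and whether $G(\xi)$ in the statement is the left-continuous or right-continuous value of the CDF — getting the constant $M = \lceil 1/\xi - 1\rceil$ to work on the closed boundary $\delta = \xi$ requires that $M p_1$'s fractional behavior at $\lambda = \xi/M$ still forces $x_1 = 1$, which is true since $\lambda p_1 = \xi \le \delta$ implies $x_1 < 1 + \delta$... actually one must double-check the edge case $\delta = \xi$ with ties. I would resolve this either by using a slightly larger $M$ (which only improves the bound, replacing $\lceil 1/\xi-1\rceil$ by any larger integer and weakening the conclusion negligibly) or by observing that the stated bound already builds in the matching $M$, so the cleanest route is to prove it for the event $\{\delta \le \xi\}$ with $M = \lceil 1/\xi - 1\rceil$ and verify the tie case $\delta = \xi$, $\lambda p_1 = \xi$ gives $x_1 \in \{0,1\}$ hence $x_1 = 1$ by the lower bound, which works. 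Everything else is routine arithmetic of the same flavor as the proof of \Cref{prop:det-dev-quota}.
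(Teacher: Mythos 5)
Your proof is correct and uses essentially the same construction as the paper: set $n=H$, $p_1=\lceil 1/\xi-1\rceil$, $p_i=1$ for $i\geq 2$, show $x_1\le 1$ when $\delta\le\xi$, and bound the expectation. One small simplification the paper uses and you could adopt: instead of establishing $\lambda\ge(1+\delta)/M>\delta$ (which requires checking $M<(1+\delta)/\delta$), the paper notes more directly that $x_1\ge 2$ forces $\lambda\ge 1/p_1$, and then $1/p_1 = 1/\lceil 1/\xi-1\rceil > \xi \ge \delta'$ for every $\delta'\in[0,\xi]$, which handles the closed endpoint $\delta'=\xi$ with no extra case analysis — the boundary worry you flag dissolves. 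Also, for the estimate you only need the one-sided bound $F^B_1\le 1$ on the event $\{\delta\le\xi\}$ (the paper does not bother to prove $x_1=1$ exactly, and you do not need it either).
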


    \begin{proof}

        We consider $n = H$ and define $p\in \NN^n$ with 
        $p_1=\left\lceil 1/\xi-1 \right\rceil$  and $p_i=1$ for $i\in \{2,\ldots,H\}$. 
        We claim that for every $\delta'\in [0,\xi]$ it holds that $x_1\leq 1$ for every $x\in f(p,H;\delta')$. Indeed, fix $\delta'\in [0,\xi]$ arbitrarily and suppose towards a contradiction that $x_1\geq 2$ for some $x\in f(p,H;\delta')$. Then, for every $\lambda \in \Lambda(x;\delta')$ we have that $\lambda p_1 \geq 1+\delta'\geq 1$, i.e., $\lambda\geq 1/p_1$. This would imply that, for every $i \in \{2,\ldots,H\}$, $\lambda p_i \geq 1/p_1 > \xi \geq \delta'$ and thus $x_i\geq 1$. But this yields $\sum_{i=1}^n x_i > H$, a contradiction. 
        Since $G(\xi) > \frac{H}{\left\lceil 1/\xi-1 \right\rceil +H-1}$, it holds that 
        \[
            G(\xi) + (1-G(\xi))H < \frac{\left\lceil 1/\xi-1 \right\rceil}{\left\lceil 1/\xi-1 \right\rceil+H-1}H = q_1,
        \]
        and thus
        \begin{align*}
            | \EX(F^B_1(p,H)) - q_1 | & \geq \frac{\left\lceil 1/\xi-1 \right\rceil}{\left\lceil 1/\xi-1 \right\rceil+H-1}H - G(\xi) - (1-G(\xi))H \\
            & \geq G(\xi)(H-1) - \frac{H-1}{\left\lceil 1/\xi-1 \right\rceil +H-1}H.
            \tag*{\raisebox{-.5\baselineskip}{\qedhere}}
        \end{align*}
    \end{proof}

    \begin{claim}\label{claim:bad-ins-F0-small}
        For every $\xi\in (0,1)$ such that $G(\xi)\in (0,1)$ and every $M>\frac{G(\xi)}{1-G(\xi)}$, there exist $n\in \NN$ and $p\in \NN^n$ such that
        \[
        | \EX(F^B_1(p,H)) - q_1 | \geq \left(1-G(\xi) -\frac{1}{M+1}\right)H.
        \]
    \end{claim}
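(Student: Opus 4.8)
The plan is to mirror the construction behind part~\ref{prop:det-dev-quota-ii} of \Cref{prop:det-dev-quota} and its companion \Cref{claim:bad-ins-F0-large}, but this time exploiting the probability mass that $G$ places \emph{above} $\xi$. The idea is to build a population vector for which, whenever the sampled parameter satisfies $\delta>\xi$ --- an event of probability $1-G(\xi)$ --- the $\delta$-divisor method allocates \emph{all} $H$ seats to state~$1$, even though the quota $q_1$ of state~$1$ is at most $H/(M+1)$. Taking the difference then yields the bound.

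Concretely, I would set $\tilde{M}=\lceil M\rceil$, choose $n\in\NN$ large enough that $n-1>\tilde{M}\cdot\frac{H-1+\xi}{\xi}$, and take $p=(n-1,\tilde{M},\tilde{M},\ldots,\tilde{M})\in\NN^n$, so that state~$1$ has population $p_1=n-1$ and each of the remaining $n-1$ states has population $\tilde{M}$. The central step is to show that $f(p,H;\delta')=\{(H,0,\ldots,0)\}$ for every $\delta'\in(\xi,1]$. Indeed, suppose some $x\in f(p,H;\delta')$ had $x_1\le H-1$. Then every multiplier $\lambda\in\Lambda(x;\delta')$ satisfies $\lambda p_1\le x_1+\delta'\le H-1+\delta'$, hence $\lambda\le\frac{H-1+\delta'}{n-1}$; since $\delta'\mapsto\frac{H-1+\delta'}{\delta'}$ is non-increasing and $\delta'>\xi$, the choice of $n$ gives $\lambda<\frac{\delta'}{\tilde{M}}$, so $\lambda p_i=\lambda\tilde{M}<\delta'$ and therefore $x_i=0$ for all $i\ge2$. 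But then $\sum_{i=1}^n x_i=x_1\le H-1<H$, a contradiction. Thus the only apportionment vector returned for $\delta'>\xi$ is $(H,0,\ldots,0)$, so no tie-breaking is involved and $F_1^B(p,H)=H$ on this event regardless of $B$.

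Given this, conditioning on $\{\delta\le\xi\}$ versus $\{\delta>\xi\}$ and using $F_1^B\ge0$ yields $\EX(F_1^B(p,H))\ge(1-G(\xi))\cdot H$. On the other hand $q_1=\frac{p_1}{p_1+(n-1)\tilde{M}}H=\frac{H}{1+\tilde{M}}\le\frac{H}{M+1}$, using $\tilde{M}\ge M$. The hypothesis $M>\frac{G(\xi)}{1-G(\xi)}$ rearranges to $\frac{1}{M+1}<1-G(\xi)$, so $q_1<(1-G(\xi))H\le\EX(F_1^B(p,H))$, and hence
\[
    \bigl|\EX(F_1^B(p,H))-q_1\bigr| = \EX(F_1^B(p,H))-q_1 \ge (1-G(\xi))H-\frac{H}{M+1} = \Bigl(1-G(\xi)-\frac{1}{M+1}\Bigr)H,
\]
which is exactly the claimed bound.

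The only delicate point --- and the step I expect to need the most care --- is the choice of $n$: we must make $p_1=n-1$ large enough to dominate \emph{all} $\delta'>\xi$ simultaneously, not just a single fixed $\delta$. This works because $\delta'\mapsto\frac{H-1+\delta'}{\delta'}=1+\frac{H-1}{\delta'}$ is non-increasing, so it is at most $\frac{H-1+\xi}{\xi}$ for every $\delta'\ge\xi$; hence the strict inequality $n-1>\tilde{M}\cdot\frac{H-1+\xi}{\xi}$ already guarantees $n-1>\tilde{M}\cdot\frac{H-1+\delta'}{\delta'}$ for every $\delta'>\xi$. Rounding $M$ up to the integer $\tilde{M}$ (so that it can legitimately serve as a population while keeping $q_1=H/(1+\tilde{M})\le H/(M+1)$) is the remaining bookkeeping; everything else is a direct computation with the definition of the $\delta$-divisor method, exactly as in the deterministic case.
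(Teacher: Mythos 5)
Your proof is correct and follows essentially the same route as the paper's: one large state $p_1 = n-1$ against many identical small states, the observation that for every $\delta' > \xi$ every feasible multiplier must push all small states to zero seats, and the algebraic rearrangement of $M > G(\xi)/(1-G(\xi))$. Your use of $\tilde{M} = \lceil M\rceil$ is in fact a small improvement over the paper, which writes $p_i = M$ for $i\geq 2$ even though $M$ need not be an integer; and your slightly sharper multiplier bound $\lambda\le (H-1+\delta')/(n-1)$ together with monotonicity of $\delta'\mapsto (H-1+\delta')/\delta'$ is a harmless variant of the paper's coarser bound $\lambda\le H/(n-1)$ with $HM<(n-1)\xi$.
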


    \begin{proof}
        Let $\xi$ and $M$ be as in the statement and let $n\in \NN$ be such that $n>M$ and $HM < (n-1)\xi$. 
        We define $p\in \NN^n$ as $p_1=n-1$  and $p_i=M$ for $i\in \{2,\ldots,n\}$. We claim that for every $\delta'\in [\xi,1]$ it holds that $x_1=H$ for every $x\in f(p,H;\delta')$. Indeed, fix $\delta'\in [\xi,1]$ arbitrarily and suppose towards a contradiction that $x_1\leq H-1$ for some $x\in f(p,H;\delta')$. Then, for every $\lambda \in \Lambda(x;\delta')$ we have that $\lambda p_1 \leq H-1+\delta'\leq H$, i.e., $\lambda\leq H/p_1=H/(n-1)$. This would imply that, for every $i \in \{2,\ldots,H\}$, $\lambda p_i \leq HM/(n-1) < \xi \leq \delta'$ and thus $x_i=0$. But this yields $\sum_{i=1}^n x_i < H$, a contradiction. 
        Since $M>\frac{G(\xi)}{1-G(\xi)}$, it holds that \[(1-G(\xi))H + G(\xi)\cdot 0 > \frac{1}{M+1}H = q_1,\] and thus
        \[
            | \EX(F^B_1(p,H)) - q_1 | \geq (1-G(\xi))H + G(\xi)\cdot 0 -\frac{1}{M+1}H = 
            \left(1-G(\xi) -\frac{1}{M+1}\right)H.\tag*{\raisebox{-.5\baselineskip}{\qedhere}}
        \]
    \end{proof}

    We now finish the proof by combining these two results.
    Let $\xi \in (0,1)$ be arbitrary; we will take it arbitrarily close to $0$ to conclude later.
    If 
    $G(\xi) \leq \frac{H}{\left\lceil 1/\xi-1 \right\rceil +H-1}$, then 
    we obtain from \Cref{claim:bad-ins-F0-small} that
    \begin{equation}
        \max_{n\in \NN, p\in \NN^n} | \EX(F^B_1(p,H)) - q_1 | \geq \left(1-\frac{H}{\left\lceil 1/\xi-1 \right\rceil +H-1} -\frac{1}{M+1}\right)H.\label{eq:lb-dev-G-small}
    \end{equation}
    for every $M>\frac{G(\xi)}{1-G(\xi)}$. 
    Otherwise, we  obtain from \Cref{claim:bad-ins-F0-large} and \Cref{claim:bad-ins-F0-small} that
    \[
        \max_{\substack{n\in \NN,\\ p\in \NN^n}} | \EX(F^B_1(p,H)) - q_1 | \geq \max\left\{ G(\xi)(H-1) - \frac{H-1}{\left\lceil 1/\xi-1 \right\rceil +H-1}H,~ \left(1-G(\xi) -\frac{1}{M+1}\right)H \right\}
    \]
    for every $M>\frac{G(\xi)}{1-G(\xi)}$.
    Since the first expression in the maximum is increasing in $G(\xi)$ and the second is decreasing in this value, a lower bound for the maximum is obtained by taking $G(\xi)$ such that both are equal. Denoting this value of $G(\xi)$ as $\rho$, such that
    \begin{align*}
         & \rho(H-1) - \frac{H-1}{\left\lceil 1/\xi-1 \right\rceil +H-1}H = \left(1-\rho -\frac{1}{M+1}\right)H\\
         \Longleftrightarrow \ & \rho = \left( 1 - \frac{1}{M+1} + \frac{H-1}{\left\lceil 1/\xi-1 \right\rceil +H-1} \right)\frac{H}{2H-1},
    \end{align*}
    we obtain that
    \begin{equation}
        \max_{n\in \NN, p\in \NN^n} | \EX(F^B_1(p,H)) - q_1 | \geq \left(1-\rho-\frac{1}{M+1}\right)H\label{eq:lb-dev-G-large}
    \end{equation}
    for every $M>\frac{G(\xi)}{1-G(\xi)}$.
    Putting expressions \eqref{eq:lb-dev-G-small} and \eqref{eq:lb-dev-G-large} together, we obtain that
    \begin{align*}
        \max_{\substack{n\in \NN,\\ p\in \NN^n}} | \EX(F^B_1(p,H)) - q_1 | \geq \min \Bigg\{ & \left(1-\frac{H}{\left\lceil 1/\xi-1 \right\rceil +H-1} -\frac{1}{M+1}\right)H,\\
        &\left(1-\left( 1 - \frac{1}{M+1} + \frac{H-1}{\left\lceil 1/\xi-1 \right\rceil +H-1} \right)\frac{H}{2H-1}-\frac{1}{M+1}\right)H \Bigg\}
    \end{align*}
    for every $\xi\in (0,1)$ and $M>\frac{G(\xi)}{1-G(\xi)}$. Taking $\xi$ arbitrarily close to $0$ and $M$ arbitrarily large, the first term in the maximum tends to $H$, whereas the second term in the maximum tends to \[ \left(1-\frac{H}{2H-1} \right)H = \frac{H-1}{2H-1}H = \left(1-\frac{1}{2H-1}\right) \frac{H}{2}. \] That is, for any $\varepsilon>0$ we can reach a deviation of $(1-1/(2H-1))H/2 - \varepsilon$, as claimed.

\subsection{Proof of \Cref{prop:jeff-adams}}
\label{app:prop-jeff-adams}

\propjeffadams*

Let $\delta, H, n,p$, and $i$ be as in the statement.
We know from \Cref{prop:partition} that $0 \leq x_i \leq \lceil q_i\rceil$ for every $x\in f(p,H;0)$ and that $\lfloor q_i\rfloor \leq x_i \leq H$ for every $x\in f(p,H;1)$.
Therefore, for every $i\in [n]$
\begin{align*}
    | \EX(F^B(p,H)) - q_i | & \leq \max_{x\in f(p,H;0),y\in f(p,H;1)} \Big|\frac{1}{2}x_i+\frac{1}{2}y_i - q_i \Big| \\
    & \leq \max \left\{ \frac{1}{2}q_i+\frac{1}{2}(q_i-\lfloor q_i\rfloor), \frac{1}{2}(\lceil q_i\rceil-q_i) + \frac{1}{2}(H-q_i) \right\}\\
    & = \max \left\{ \frac{1}{2}(2q_i-\lfloor q_i\rfloor), \frac{1}{2}(H+\lceil q_i\rceil-2q_i) \right\} < \frac{1}{2}(H+1),
\end{align*}
where the maximum in the first step accounts for both possible worst-cases, either $x_i=0$ and $y_i=\lfloor q_i\rfloor$ for $x\in f(p,H;0)$ and $y\in f(p,H;1)$, or $x_i=\lceil q_i \rceil$ and $y_i=H$ for $x\in f(p,H;0)$ and $y\in f(p,H;1)$, and we used in the last step that $q_i\leq H$.
\qed

\subsection{Proof of \Cref{prop:dev-quota-fixed-pop}}\label{app:prop-dev-quota-fixed-pop}

\propdevquotafixedpop*

    Let $n,~p$, and $H$ be as in the statement. We let $\Lambda: [0,1] \rightarrow 2^\mathbb{R}$ be a function that maps every $\delta \in [0,1]$ to the set of multipliers producing an apportionment via the $\delta$-divisor method, i.e., \[ \Lambda(\delta) = \bigcup_{x\in f(p,H;\delta)} \Lambda(x;\delta). \] We refer to such values of $\lambda$ as \textit{feasible} for the instance.
    Consider the functions $\lambda_{\min}(\delta)$ and $\lambda_{\max}(\delta)$, pointwise minimal and maximal for any $\delta \in [0,1]$, respectively. Specifically, for $\delta\in [0,1]$ we define $\lambda_{\min}(\delta) = \min \Lambda(\delta)$ and $\lambda_{\max}(\delta) = \max \Lambda(\delta)$. 
    
    Suppose towards a contradiction that there exists $\delta\in [0,1]$ such that for every $i\in [n]$, we have that $p_i\lambda_{\min}(\delta) > q_i +\delta$.
    Fixing $x \in f(p,H;\delta)$ arbitrarily, this would imply that for every $i\in [n]$ it holds that $x_i \geq p_i\lambda_{\min} - \delta > q_i$,
    and thus $\sum_{i\in [n]} x_i > H$, a contradiction. We conclude that 
    \begin{equation}
        \lambda_{\min}(\delta) \leq \frac{H}{P} +\max_{j\in[n]} \frac{\delta}{p_j} = \frac{H}{P} + \frac{\delta}{\min_{j\in[n]} p_j}.\label{eq:ub-delta-min}
    \end{equation}
    Similarly, suppose that there exists $\delta\in [0,1]$ such that for every $i\in [n]$, $p_i\lambda_{\max}(\delta) < q_i - (1-\delta)$.
    Fixing $x \in f(p,H;\delta)$ arbitrarily, this would imply that for every $i\in [n]$ it holds that $x_i \leq p_i\lambda_{\max} + 1-\delta < q_i$,
    and thus $\sum_{i\in [n]} x_i < H$, a contradiction. We conclude that 
    \begin{equation}
        \lambda_{\max}(\delta) \geq \frac{H}{P} -\max_{j\in[n]} \frac{1-\delta}{p_j} = \frac{H}{P} - \frac{1-\delta}{\min_{j\in[n]} p_j}.\label{eq:lb-delta-max}
    \end{equation}
    
    We now fix $G$ and $B$ as in the statement, and $\delta\sim G$. We make use of the following claim.

    \begin{claim} \label{lem:expected-delta}
        For every function $\lambda: [0,1] \to \RR$ such that $\lambda(\delta)\in \Lambda(\delta)$ for each $\delta\in [0,1]$, and for every state $i \in [n]$, the $G$-randomized divisor method $F^B$ satisfies
        \[
        \EX(F^B_i(p,H)) \in [p_i \EX(\lambda(\delta)) - 1/2, p_i \EX(\lambda(\delta)) + 1/2].
        \]
    \end{claim}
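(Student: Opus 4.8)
The plan is to prove the pointwise (in $\delta$) sandwich $-\delta \le F^B_i(p,H) - p_i\lambda(\delta) \le 1-\delta$ and then integrate against $G$, exploiting that the distribution $G$ fixed in \Cref{prop:dev-quota-fixed-pop} satisfies $\EX(\delta)=1/2$.

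First I would condition on a realization of the randomness defining $F^B$, i.e., fix $\delta\in[0,1]$ together with the tie-breaking outcome, so that $F^B(p,H)$ is a concrete vector $x\in f(p,H;\delta)$. The key preliminary observation — recorded in the paragraph preceding \Cref{obs:lineSets} — is that the feasible-multiplier set $\Lambda(y;\delta)=[\lambda_H(\delta),\lambda_{H+1}(\delta)]$ does not depend on the particular $y\in f(p,H;\delta)$; hence $\Lambda(\delta)=\bigcup_{y\in f(p,H;\delta)}\Lambda(y;\delta)=\Lambda(x;\delta)$, and the hypothesis $\lambda(\delta)\in\Lambda(\delta)$ yields $\lambda(\delta)\in\Lambda(x;\delta)$. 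By definition of $\Lambda(x;\delta)$ this means $x_i\in\llbracket\lambda(\delta)p_i\rrbracket_\delta$ for every $i\in[n]$.

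Next I would unwind the stationary rounding rule $\llbracket\cdot\rrbracket_\delta$. Going through its three cases — including the degenerate one $\lambda(\delta)p_i<\delta$, where $x_i=0$ and one uses $0\le\lambda(\delta)p_i$ together with $\delta\le1$ — gives in every case the two-sided bound $x_i-1+\delta\le\lambda(\delta)p_i\le x_i+\delta$, equivalently $-\delta\le x_i-p_i\lambda(\delta)\le1-\delta$. Since this holds for every realization and $\EX(F^B_i(p,H))=\EX(x_i)$ by definition of the randomized method, I would take expectations over $\delta\sim G$ (and the tie-breaking) and use linearity to obtain $-\EX(\delta)\le\EX(F^B_i(p,H))-p_i\EX(\lambda(\delta))\le1-\EX(\delta)$. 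Substituting $\EX(\delta)=1/2$ gives exactly $\EX(F^B_i(p,H))\in[p_i\EX(\lambda(\delta))-1/2,\,p_i\EX(\lambda(\delta))+1/2]$.

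The only genuinely subtle point is the first step: the realized apportionment $x$ is the one selected by the tie-breaking rule, not necessarily the one ``naturally attached'' to the value $\lambda(\delta)$, so a priori $\lambda(\delta)$ might fail to be a feasible multiplier for $x$; this is exactly what the instance-wide constancy of $\Lambda(\cdot;\delta)$ over $f(p,H;\delta)$ rescues. Everything else — the rounding-rule case analysis and the passage from the pointwise inequality to its expectation (measurability is harmless, since the selections $\lambda$ and the map $\delta\mapsto x_i$ of interest are piecewise linear on $[0,1]$) — is routine.
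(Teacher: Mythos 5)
Your proof is correct and follows essentially the same route as the paper's: establish the pointwise sandwich $x_i - 1 + \delta \le \lambda(\delta)\,p_i \le x_i + \delta$ from the definition of the rounding rule, then take expectations over $\delta\sim G$ using linearity and $\EX(\delta)=1/2$. You are merely more explicit than the paper about two points it leaves implicit — that $\Lambda(\cdot;\delta)$ is constant over $f(p,H;\delta)$ (so the sandwich survives tie-breaking) and the degenerate rounding-rule case $x_i=0$ — both of which you resolve correctly.
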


    \begin{proof}
        We know that for any $\delta' \in [0,1]$ and every $x\in f(p,H,\delta')$ it holds that \[ x_i + \delta' - 1 \leq \lambda(\delta') p_i \leq x_i + \delta'. \]
        Taking expectation over $\delta\sim G$ and using linearity of expectation, together with the fact that $\EX(\delta)=1/2$, we get that 
        $\EX(F^B_i(p,H)) - 1/2 \leq \EX(\lambda(\delta))p_i \leq \EX(F^B_i(p,H)) + 1/2.$
    \end{proof}

    Taking expectation over $\delta\sim G$ on both sides, \eqref{eq:ub-delta-min} and \eqref{eq:lb-delta-max} imply
    \[
        \EX(\lambda_{\min}(\delta)) \leq \frac{H}{P} + \frac{1}{2\min_{j\in[n]} p_j},\quad \EX(\lambda_{\max}(\delta)) \geq \frac{H}{P} - \frac{1}{2\min_{j\in[n]} p_j}.
    \]
    Therefore, for every function $\lambda:[0,1]\to \mathbb{R}$ such that $\lambda_{\min}(\delta') \leq \lambda(\delta') \leq \lambda_{\max}(\delta')$ for every $\delta'\in [0,1]$, i.e., that is feasible in the aforementioned way, it holds that
    \[
        \frac{H}{P} - \frac{1}{2\min_{j\in[n]} p_j} \leq \EX(\lambda(\delta)) \leq \frac{H}{P} + \frac{1}{2\min_{j\in[n]} p_j}.
    \]
    Then, we conclude from \Cref{lem:expected-delta} that
    \[
        \EX(F^B(p,H)) \in \left[ q_i - \frac{1}{2} \left(\frac{p_i}{\min_{j\in[n]} p_j} +1\right),\ q_i + \frac{1}{2} \left(\frac{p_i}{\min_{j\in[n]} p_j} +1\right) \right].\eqno\qed
    \]

\subsection{Proof of \Cref{thm:variable-size}}
\label{app:thm-variable-size}

\thmvariablesize*

We start with the first claim. Let $\delta_i(k)$ be random variables with marginal distribution $U[0,1]$ and $s_i(k)=k+\delta_i(k)$ for every $i\in[n]$ and $k\in \NN_0$, and let $F$ denote the randomized fixed-divisor method with signpost sequences $s_i(0),s_i(1),s_i(2),\ldots$ for every $i\in [n]$.
    Consider an arbitrary instance given by $p=(p_1,\ldots,p_n)$ and a house size $H$.
    
    In order to prove quota compliance and ex-ante proportionality, we also fix an arbitrary state $i\in [n]$. 
    For every $k\ge 1$ and $r\in [k,k+1]$ we have $\PP(s_i(k)< r) = r-k$.
    If $q_i$ is integer, this implies $\PP(N_{s_i}(q_i) = q_i) = 1$. Otherwise, we obtain
    \begin{align*}
        \PP\left(N_{s_i}\left(q_i\right) = \left\lceil q_i\right\rceil\right) & = \PP\left(s_i\left(\left\lfloor q_i \right\rfloor\right) < q_i \right) = q_i - \left\lfloor q_i \right\rfloor, \\
        \PP\left(N_{s_i}\left(q_i\right) = \left\lfloor q_i\right\rfloor\right) & = \PP\left(s_i\left(\left\lfloor q_i \right\rfloor\right) \geq q_i \right) = \left\lceil q_i \right\rceil - q_i, \\
        \PP\left(N_{s_i}\left(q_i\right) = r \right) & = 0 \text{ for every } r\not\in \left\{ \left\lfloor q_i\right\rfloor, \left\lceil q_i\right\rceil \right\}.
    \end{align*}
    These properties directly imply $\PP(F_i(p,H)\in \{\lfloor q_i \rfloor, \lceil q_i \rceil\}) = 1$, i.e., $F$ is quota-compliant.
    If $q_i$ is integer, since $\PP(F_i(p,H)=q_i)=1$ we immediately obtain $\EE(F_i(p,H)) = q_i$.
    On the other hand, if $q_i$ is fractional, we have that $\EE(F_i(p,H)) = \EE\left(N_{s_i}\left(q_i\right) \right) = \left\lceil q_i\right\rceil \left( q_i - \left\lfloor q_i \right\rfloor \right) + \left\lfloor q_i\right\rfloor \left( \left\lceil q_i \right\rceil - q_i \right) = q_i,$
    so we conclude that $F$ is ex-ante proportional.
    
    We now show that $F$ satisfies population monotonicity.
    Let $p,p'$ be two population vectors, $H,H'$ two house sizes, and let $P' = \sum_{\ell=1}^n p'_{\ell}$ be the total population in the instance $(p',H')$. Let $i,j\in [n]$ be two states such that $p'_i/p'_j \geq p_i/p_j$, or equivalently, $p'_i/p_i \geq p'_j/p_j$.
    Suppose that $\PP(F_i(p,H)> F_i(p',H'))>0$ and $\PP(F_j(p,H)< F_j(p',H'))>0$.
    This implies that $\PP( N_{s_i}(q_i) > N_{s_i}(q'_i)) > 0$ and $\PP( N_{s_j}(q_j) < N_{s_j}(q'_j)) > 0,$
    where $q'_\ell=p'_\ell H'/P'$ for $\ell\in [n]$.
    Since the signpost sequences are the same for both instances, these expressions imply $q_i > q'_i$ and $q_j < q'_j$.
    Putting these inequalities together yields $p'_i/p_i < (H/H')\cdot P'/P < p'_j/p_j,$
    a contradiction.
    We conclude that $\PP(F_i(p,H)> F_i(p',H'))=0$ or $\PP(F_j(p,H)< F_j(p',H'))=0$, i.e., $F$ is population-monotone.

    We now prove the second claim. To do so, we define the random variables
    \[
        \delta_i = \begin{cases}
            U[0,1] & \text{ if } $i$ \text{ is odd,}\\
            1-\delta_{i-1} & \text{ if } $i$ \text{ is even,}
        \end{cases} \quad \text{for every } i\in [n],
    \]
    we define $s_i(k)=k+\delta_i$ for every $i\in [n]$ and $k\in \NN_0$, and we denote as $F$ the randomized fixed-divisor method with signpost sequences $s_i(0), s_i(1), s_i(2),\ldots$ for every $i\in [n]$.

    For each $i\in [n]$, we let $r_i = q_i - \lfloor q_i \rfloor$ denote the fractional part of state $i$'s quota. Since these signpost sequences are a subclass of those studied in the previous paragraph, for every $i\in [n]$ it holds that $\PP(\lfloor q_i \rfloor \leq F_i(p,H) \leq  \lceil q_i \rceil)=1$ and that $F_i(p,H) =  \lceil q_i \rceil$ if and only if $s_i(\lfloor q_i \rfloor) < q_i$, which is equivalent to $\delta_i < r_i$. To see that the deviation from the house size is never greater than $n/2+ n\bmod 2$, we observe that
    \begin{align}
        \Bigg|\sum_{i=1}^{n} F_i(p,H) - H \Bigg| = \Bigg| \sum_{i=1}^{n} \left( F_i(p,H) - q_i \right) \Bigg| = \Bigg| |\{i\in [n]: r_i> \delta_i\}| - \sum_{i=1}^{n} r_i \Bigg|.\label{eq:dev-house-size}
    \end{align}
    On the other hand, from the definition of $\delta_i$ for each $i\in [n]$ we have that
    \begin{align}
        \frac{n-n \bmod 2}{2} \leq \sum_{i=1}^{n} \delta_i \leq \frac{n+n \bmod 2}{2}.\label{eq:sum-delta}
    \end{align}
    Let $\pi:[n]\to[n]$ be a bijection such that $\delta_{\pi(1)}\leq \delta_{\pi(2)}\leq \cdots \leq \delta_{\pi(n)}$. We now distinguish two cases to conclude the bound. If $|\{i\in [n]: r_i> \delta_i\}| \geq (n+n \bmod 2)/2$, we define $k = |\{i\in [n]: r_i> \delta_i\}| - (n+n \bmod 2)/2$ and we observe that
    \[
        \sum_{i=1}^{n} r_i > \sum_{i=1}^{\frac{n+n \bmod 2}{2}+k}\delta_{\pi(i)} = \sum_{i=1}^{n}\delta_{\pi(i)} - \sum_{i=\frac{n+n \bmod 2}{2}+k+1}^{n}\delta_{\pi(i)} \geq k,
    \]
    where the last inequality comes from \eqref{eq:sum-delta} and the fact that $\delta_i\leq 1$ for every $i\in [n]$. Replacing $k$ in the last expression and accounting for both cases $|\{i\in [n]: r_i> \delta_i\}| \geq \sum_{i=1}^{n} r_i$ and $|\{i\in [n]: r_i> \delta_i\}| < \sum_{i=1}^{n} r_i$, we conclude that
    \[
        \Bigg| |\{i\in [n]: r_i> \delta_i\}| - \sum_{i=1}^{n} r_i \Bigg| \leq \max\left\{ \frac{n+n\bmod 2}{2}, n-1-\frac{n+n\bmod 2}{2}\right\} = \frac{n+n\bmod 2}{2}.
    \]
    On the other hand, if $|\{i\in [n]: r_i> \delta_i\}| \leq (n-n\bmod 2)/2$, we define $k = (n-n\bmod 2)/2 - |\{i\in [n]: r_i> \delta_i\}|$ and we observe that
    \begin{align*}
        \sum_{i=1}^{n} r_i & \leq \frac{n-n\bmod 2}{2}-k + \sum_{i=\frac{n-n\bmod 2}{2}-k+1}^{n}\delta_{\pi(i)} = \sum_{i=1}^{n}\delta_{\pi(i)} + \sum_{i=1}^{\frac{n-n\bmod 2}{2}-k}(1-\delta_{\pi(i)}) \\
        & \leq
        \frac{n+n \bmod 2}{2} + \frac{n-n \bmod 2}{2}-k = 
        n- k,
    \end{align*}
    where the last inequality comes from \eqref{eq:sum-delta} and the fact that $\delta_i\geq 0$ for every $i\in [n]$. Replacing $k$ in the last expression and accounting again for both cases $|\{i\in [n]: r_i> \delta_i\}| \geq \sum_{i=1}^{n} r_i$ and $|\{i\in [n]: r_i> \delta_i\}| < \sum_{i=1}^{n} r_i$, we conclude that
    \[
        \Bigg| |\{i\in [n]: r_i> \delta_i\}| - \sum_{i=1}^{n} r_i \Bigg| \leq \max\left\{ \frac{n-n\bmod 2}{2}, \frac{n+n\bmod 2}{2}\right\} = \frac{n+ n\bmod 2}{2}.
    \]
    In either of these cases, we conclude the desired bound from \eqref{eq:dev-house-size}.

    Let now $Z=\sum_{i=1}^{n}F_i(p,H) - \sum_{i=1}^{n} \lfloor q_i \rfloor$ be the random variable equal to the number of seats that are allocated above the sum of the floor of the quotas.
    Then, we have that
    \[
        Z = \big| \left\{ i\in [n]: s_i\left(\left\lfloor q_i \right\rfloor \right) < q_i \right\} \big| = \sum_{i=1}^{n} \chi \left( s_i\left(\left\lfloor q_i \right\rfloor \right) < q_i \right) = \sum_{i=1}^{n} Y_i,
    \]
    where $\chi$ denotes the indicator function, meaning that $\chi(u)=1$ if the condition $u$ holds and zero otherwise; and for every $i\in [n]$, the random variables $Y_i$ distribute as Bernoulli random variables with a success probability equal to $q_i-\lfloor q_i\rfloor$. These variables are independent except for pairs of variables $Y_i, Y_{i+1}$ with $i\leq n-1$ odd. In these cases, these variables are negatively correlated. To see this, let $i\in [n-1]$ be an arbitrary odd number and observe that
    \begin{align*}
        \PP\left[ Y_i=0 \wedge Y_{i+1}=0 \right] & = \PP\left[ \delta_i \geq r_i \wedge 1-\delta_i \geq r_{i+1} \right] = \PP\left[ r_i \leq \delta_i \leq 1-r_{i+1} \right]
        = 1-r_i-r_{i+1} \\
        & \leq (1-r_i)(1-r_{i+1}) = \PP[\delta_i \geq r_i]\PP[1-\delta_i \geq r_{i+1}] = \PP\left[ Y_i=0 \right] \PP\left[Y_{i+1}=0 \right],
    \end{align*}
    if $r_i+r_{i+1}\leq 1$, and the first expression is 0, otherwise.
    Similarly,
    \begin{align*}
        \PP\left[ Y_i=1 \wedge Y_{i+1}=1 \right] & = \PP\left[ \delta_i < r_i \wedge 1-\delta_i < r_{i+1} \right] = \PP\left[ 1-r_{i+1} < \delta_i < r_i\right]
        = r_i+r_{i+1}-1 \\
        & \leq r_ir_{i+1} = \PP[\delta_i< r_i]\PP[1-\delta_i< r_{i+1}] = \PP\left[ Y_i=1 \right] \PP\left[Y_{i+1}=1 \right],
    \end{align*}
    if $r_i+r_{i+1}\geq 1$, and the first expression is 0, otherwise.
    The expected value of $Z=\sum_{i=1}^{n} Y_i$ is $H-\sum_{i=1}^{n} \lfloor q_i\rfloor$.
    Applying the Hoeffding's concentration bound on $Z$ \citep{hoeffding1994probability}, which is valid for negatively correlated variables too as proven by \citet{panconesi1997randomized}, we obtain that for every $\Delta>0$
    \[
        \PP\left( \Bigg| H - \sum_{i=1}^{n} \left\lfloor q_i \right\rfloor - Z \Bigg| \geq \Delta \right) \leq 2\exp\left( -\frac{2\Delta^2}{n}\right).
    \]
    This implies the result, as $H-\sum_{i=1}^n F_i(p,H) = H - \sum_{i=1}^n \lfloor q_i\rfloor - Z$.

\subsection{Proof of \Cref{prop:lb-dev-fixed-divisor}}\label{app:prop-lb-dev-fixed-divisor}

\proplbdevfixeddivisor*

    Let $n\in \NN$ be arbitrary and let $f$ be a fixed-divisor method with signpost sequences $s_i(k)$ for each $i\in[n]$ and $k\in \NN_0$ satisfying quota. Let also $\varepsilon>0$ be an arbitrary value.
    Since $f$ satisfies quota, for every $i\in [n]$ we have that $s_i(0)\in [0,1]$; otherwise, if $\PP[s_{i'}(0)>1]>0$ for some $i'\in [n]$, taking $p\in \NN^n$ defined as $p_i=1$ for every $i\in [n]$ and $H=n$ we would have $f_{i'}(p,H)=0$, a contradiction since $q_{i'}=1$. We denote $\delta_i = s_i(0)\in [0,1]$ in what follows.
    
    We distinguish two cases for the proof. We first consider the case with $\sum_{i=1}^n \delta_i \leq n/2$. We construct an instance as follows. For $i\in [n-1]$, we let $\varepsilon_i> 0$ be such that $\sum_{i=1}^{n-1} \varepsilon_i \leq \varepsilon$
    and such that $r_i = \delta_i+\varepsilon_i$ is a rational number, say $r_i = a_i/b_i$ for some integers $a_i,b_i$ for each $i\in [n]$. We further define
    \[
        r_n = \left\lceil \sum_{i=1}^{n-1} r_i \right\rceil - \sum_{i=1}^{n-1} r_i,
    \]
    which is also a rational value, so we let $a_n,b_n$ be integers such that $r_n = a_n/b_n$.
    We let $M$ be the least common multiple of all values $b_1,\ldots,b_n$ and let $p\in \NN^n$ and $H\in \NN$ be defined as
    \[
        p_i = Mr_i \text{ for every } i\in [n], \quad H=\sum_{i=1}^{n}r_i.
    \]
    It is straightforward from the previous definitions that all these values are indeed integers and that, for every $i\in [n]$, it holds that $q_i = r_i$. 
    Since $r_i>\delta_i$ for every $i\in [n-1]$, we have that $f_i(p,H)=1$ for every $i\in [n-1]$. If $f_n(p,H)=0$, we further have that $r_n \leq \delta_n$ and thus
    \[
        \sum_{i=1}^{n}r_i \leq \sum_{i=1}^{n} \delta_i + \sum_{i=1}^{n-1}\varepsilon_i \leq \frac{n}{2} + \varepsilon.
    \]
    Therefore, we obtain
    \[
        \sum_{i=1}^{n} f_i(p,H) - H = n-1 - \sum_{i=1}^{n} r_i \geq n-1-\left( \frac{n}{2} + \varepsilon \right) = \frac{n}{2}-1-\varepsilon.
    \]
    Otherwise, if $f_n(p,H)=1$, since $r_n< 1$ we know that
    \[
        \sum_{i=1}^{n}r_i = \sum_{i=1}^{n-1}(\delta_i+\varepsilon_i) + r_n < \frac{n}{2} + \varepsilon + 1,
    \]
    and thus
    \[
        \sum_{i=1}^{n} f_i(p,H) - H = n - \sum_{i=1}^{n} r_i > n-\left( \frac{n}{2} + \varepsilon + 1 \right) = \frac{n}{2}-1-\varepsilon.
    \]

    We now consider the case with $\sum_{i=1}^n \delta_i > \frac{n}{2}$, whose proof is analogous. We construct an instance as follows. For $i\in [n-1]$, we let $\varepsilon_i\geq 0$ be such that $\sum_{i=1}^{n-1} \varepsilon_i \leq \varepsilon$
    and such that $r_i = \delta_i-\varepsilon_i$ is a non-negative rational number, say $r_i = a_i/b_i$ for some integers $a_i,b_i$ for each $i\in [n]$. We further define
    \[
        r_n = \left\lceil \sum_{i=1}^{n-1} r_i \right\rceil - \sum_{i=1}^{n-1} r_i,
    \]
    which is also a rational value, so we let $a_n,b_n$ be integers such that $r_n = a_n/b_n$.
    We let $M$ be the least common multiple of all values $b_1,\ldots,b_n$ and let $p\in \NN^n$ and $H\in \NN$ be defined as
    \[
        p_i = Mr_i \text{ for every } i\in [n], \quad H=\sum_{i=1}^{n}r_i.
    \]
    It is straightforward from the previous definitions that all these values are indeed integers and that, for every $i\in [n]$, it holds that $q_i = r_i$. 
    Since $r_i\leq \delta_i$ for every $i\in [n-1]$, we have that $f_i(p,H)=0$ for every $i\in [n-1]$. If $f_n(p,H)=1$, we further have that $r_n > \delta_n$ and thus
    \[
        \sum_{i=1}^{n}r_i > \sum_{i=1}^{n} \delta_i - \sum_{i=1}^{n-1}\varepsilon_i > \frac{n}{2} - \varepsilon.
    \]
    Therefore, we obtain
    \[
        H - \sum_{i=1}^{n} f_i(p,H) = \sum_{i=1}^{n} r_i - 1 > \frac{n}{2} - 1 - \varepsilon.
    \]
    Otherwise, if $f_n(p,H)=0$, since $r_n \geq 0$ and $\delta_n\leq 1$ we know that
    \[
        \sum_{i=1}^{n}r_i = \sum_{i=1}^{n-1}(\delta_i-\varepsilon_i) + r_n > \frac{n}{2} - 1 - \varepsilon,
    \]
    and thus
    \[
        H - \sum_{i=1}^{n} f_i(p,H) = \sum_{i=1}^{n} r_i > \frac{n}{2} - 1-\varepsilon. \eqno\qed
    \]
    
\section{Proofs Deferred from Section \ref{sec:HM}}

\subsection{Proof of \Cref{prop:hm-solutions}}
\label{app:prop-hm-solutions}

\prophmsolutions*

Let $f$ be a house-monotone solution, and fix $p$ and $H$ arbitrarily. Since $f$ outputs a single vector for every instance, we denote $f(p,H)=\{x^H\}$. As $f$ is house-monotone, there exists $y \in f(p,H+1)$ such that $x^H\leq y$. But $f(p,H+1)$ contains a single vector as well, so we conclude. 

For the converse, we let $f$ be a solution such that $f(p,H)\leq f(p,H+1)$ for every $p\in \NN^n$ and $H\in \NN$. We fix $p\in \NN^n$ arbitrarily and we consider values $H_1,H_2,H_3 \in \NN$ such that $H_1<H_2<H_3$. The hypothesis about $f$ directly implies that $f(p,H_1)\leq f(p,H_2) \leq f(p,H_3)$, so we conclude that $f$ is house-monotone.

\subsection{Proof of \Cref{lem:hm-methods-solutions}}
\label{app:lem-hm-methods-solutions}

\lemhmmethodssolutions*

In what follows, we let $p\in \NN^n$ and $H\in \NN$ be arbitrary. We first show the simplest inclusion. Let $x\in \{f(p,H): f \text{ is a house-monotone and quota-compliant solution}\}$ and let $f$ be any house-monotone solution such that $x = f(p,H)$. As a house-monotone and quota-compliant solution is trivially a house-monotone and quota-compliant method, $f \in \calF^{\text{HM}} \cap \calF^{\text{Q}}$, hence $x \in \bigcup_{f \in \calF^{\text{HM}} \cap \calF^{\text{Q}}} f(p,H)$.

In order to prove the other inclusion, we consider an arbitrary vector $x \in \bigcup_{f \in \calF^{\text{HM}} \cap \calF^{\text{Q}}} f(p,H)$, and we fix $f\in \calF^{\text{HM}} \cap \calF^{\text{Q}}$ to be any house-monotone and quota-compliant method such that $x \in f(p,H)$. In the following, we let $q_i(h)=p_ih/P$ denote the quota of state $i$ for every $i\in [n]$ and $h\in \NN$, and we denote $x^H=x$. We show the existence of a house-monotone and quota-compliant solution $g$ such that $x =g(p,H)$ by using the next simple claims.

\begin{claim}\label{claim:hm-sequence-below}
    For every $h \in \{1,\ldots,H-1\}$, there exists a sequence $x^{H-1},\ldots,x^h$ such that, for every $k\in \{H-1,\ldots,h\}$ it holds that $x^k \in f(p,k)$, $x^k\leq x^{k+1}$, and $x^k_i \in \{\lfloor q_i(k)\rfloor, \lceil q_i(k)\rceil \}$ for each $i\in [n]$.
\end{claim}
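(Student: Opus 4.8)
The plan is to prove the claim by a downward induction on $k$, running from $k=H-1$ down to $k=h$, where at each stage I invoke house monotonicity of $f$ to obtain a vector dominated by the one produced at the previous stage, and quota compliance of $f$ to obtain the rounding condition. The one point to keep in mind is that the definition of house monotonicity is phrased with a triple $H_1<H_2<H_3$, so every ``step down'' from a house size $k+1$ to $k$ must be carried out by applying the property with $H_1=k$, $H_2=k+1$, $H_3=k+2$; this is legitimate since $k\ge 1$ and all three are positive integers.

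For the base case I fix $x^H=x\in f(p,H)$ and apply house monotonicity with $H_1=H-1$, $H_2=H$, $H_3=H+1$ and $y=x^H$, obtaining some $x^{H-1}\in f(p,H-1)$ with $x^{H-1}\le x^H$; quota compliance of $f$ on the instance $(p,H-1)$ then gives $x^{H-1}_i\in\{\lfloor q_i(H-1)\rfloor,\lceil q_i(H-1)\rceil\}$ for every $i\in[n]$. For the inductive step, suppose $x^{k+1}\in f(p,k+1)$ has been constructed for some $k$ with $h\le k\le H-2$, together with $x^{k+1}\le x^{k+2}$ and the rounding property; apply house monotonicity with $H_1=k$, $H_2=k+1$, $H_3=k+2$ and $y=x^{k+1}$ to get $x^k\in f(p,k)$ with $x^k\le x^{k+1}$, and apply quota compliance of $f$ on $(p,k)$ to obtain $x^k_i\in\{\lfloor q_i(k)\rfloor,\lceil q_i(k)\rceil\}$ for every $i\in[n]$. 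Transitivity of $\le$ then yields the nested chain $x^h\le x^{h+1}\le\cdots\le x^{H-1}\le x^H$, and iterating down to $k=h$ produces the entire sequence with all required properties.

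The main obstacle here is essentially bookkeeping rather than mathematics: one must be careful that house monotonicity always demands an auxiliary larger house size $H_3$, forcing the triple to be padded with $k+2$ (or with $H+1$ at the very top), and one must feed into the $(k,k+1,k+2)$-instance of the property exactly the vector $x^{k+1}$ produced at the previous stage, so that the pairwise dominance relations compose into a single nested chain rather than a loose collection of inequalities. Once the induction is set up with this care, each step is immediate from the two assumed properties of $f$.
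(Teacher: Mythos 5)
Your proof is correct and takes essentially the same downward-induction approach as the paper: at each step, apply house monotonicity to descend from $x^{k+1}\in f(p,k+1)$ to some $x^k\in f(p,k)$ with $x^k\le x^{k+1}$, then use quota compliance at house size $k$ to get the rounding property. Your extra care in choosing the auxiliary triple $(k,k+1,k+2)$ (so that all three house sizes are valid positive integers) is a fair bookkeeping point the paper leaves implicit, but the substance of the argument is the same.
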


\begin{proof}
    We proceed by induction, so let $h$ be as in the statement. For the base case $k=H-1$, we observe that house monotonicity of $f$ implies the existence of a vector $x^{H-1}\in f(p,H-1)$ such that $x^{H-1}\leq x^H$, while quota compliance of $f$ implies that $x^{H-1}_i \in \{\lfloor q_i(H-1)\rfloor, \lceil q_i(H-1)\rceil\}$ for each $i\in [n]$. If we assume that the result holds for every $k \in \{h+1,\ldots,H-1\}$, that is, there exists $x^k\in f(p,k)$ with $x^k \leq x^{k+1}$, then house monotonicity of $f$ implies the existence of $x^{h}\in f(p,h)$ with $x^{h} \leq x^{h+1}$. Quota compliance of $f$ directly yields $x^{h}_i \in \{\lfloor q_i(h)\rfloor, \lceil q_i(h)\rceil \}$ for each $i\in [n]$, so we conclude.
\end{proof}

\begin{claim}\label{claim:hm-sequence-above}
    For every $h \in \{H+1,H+2,\ldots\}$, there exists a sequence $x^{H+1},\ldots,x^h$ such that, for every $k\in \{H+1,\ldots,h\}$ it holds that $x^k \in f(p,k)$, $x^{k-1}\leq x^k$, and $x^k_i \in \{\lfloor q_i(k)\rfloor, \lceil q_i(k)\rceil \}$ for each $i\in [n]$.
\end{claim}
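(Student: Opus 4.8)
The plan is to prove Claim~\ref{claim:hm-sequence-above} by induction on $k$, running from $k=H+1$ up to $k=h$, in direct analogy with the proof of Claim~\ref{claim:hm-sequence-below}, but with the direction of the house-monotonicity applications reversed. Throughout, $f$ denotes the fixed house-monotone and quota-compliant method from the proof of \Cref{lem:hm-methods-solutions}, and $x^H=x$ is the fixed vector of $f(p,H)$ fixed there. The role of each hypothesis is exactly as in Claim~\ref{claim:hm-sequence-below}: house monotonicity of $f$ supplies the chain inequality $x^{k-1}\le x^k$, while quota compliance of $f$ (i.e.\ $f\in\calF^{\text{Q}}$) gives the per-component bound $x^k_i\in\{\lfloor q_i(k)\rfloor,\lceil q_i(k)\rceil\}$ for free, since $q_i(k)=p_ik/P$ is precisely the quota of state $i$ at house size $k$.

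Concretely, for the base case $k=H+1$ I would apply house monotonicity of $f$ to the triple of house sizes $(H-1,H,H+1)$, taking the middle vector to be $y=x^H\in f(p,H)$; among the guaranteed vectors this yields some $z\in f(p,H+1)$ with $x^H\le z$, and I set $x^{H+1}:=z$. Quota compliance then gives the claimed containment of each $x^{H+1}_i$. For the inductive step, assume $x^{H+1},\dots,x^{k-1}$ have been constructed with all the stated properties, where $k-1\ge H+1$. Apply house monotonicity to the triple $(k-2,k-1,k)$ with middle vector $y=x^{k-1}\in f(p,k-1)$; this produces $x^k\in f(p,k)$ with $x^{k-1}\le x^k$, and quota compliance of $f$ finishes the step. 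All house sizes invoked are valid positive integers: in the inductive step one has $k\ge H+2$, hence $k-2\ge H\ge 1$; in the base case the triple $(H-1,H,H+1)$ is valid as soon as $H\ge 2$.

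The only point genuinely requiring attention is the \emph{orientation} of the house-monotonicity triple: unlike Claim~\ref{claim:hm-sequence-below}, where the already-available vector sits at the \emph{top} of the triple and one extracts a smaller vector below it, here the already-available (or just-constructed) vector must sit in the \emph{middle} so that house monotonicity hands back a \emph{larger} vector at the next house size. This is what forces us to have a dummy smaller house size $H_1<k-1$ at our disposal, and hence the small edge case $H=1$, for which no such $H_1$ exists at the first step, must be treated separately; but there it suffices to exhibit a quota-compliant vector above $x^1$ by hand, e.g.\ $x^2=x^1+e_j$ for a suitable $j$ (possible because at most one state can have population at least $P/2$). Apart from this, no obstacle arises: the chain condition is exactly the output of house monotonicity, and the quota condition is immediate from $f\in\calF^{\text{Q}}$, so the induction goes through and, together with Claim~\ref{claim:hm-sequence-below}, provides the ascending and descending parts of a single chain through $x^H$ that will be assembled into the desired house-monotone, quota-compliant solution $g$.
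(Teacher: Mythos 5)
Your main argument is exactly the paper's: induct on $k$, apply house monotonicity with the already-constructed $x^{k-1}$ as the \emph{middle} vector of the triple $(k-2,k-1,k)$ to extract $x^k\in f(p,k)$ with $x^{k-1}\leq x^k$, and read off the component bounds from quota compliance of $f$. The paper leaves the triples implicit; spelling them out is harmless and slightly clearer.

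Your patch for the $H=1$ edge case, however, does not prove the claim. The claim requires $x^2\in f(p,2)$, not merely a quota-compliant vector weakly above $x^1$; your hand-built $x^2=x^1+e_j$ need not lie in $f(p,2)$, and this membership is load-bearing, because the house-monotonicity axiom is quantified over $y\in f(p,H_2)$: feeding an $x^2\notin f(p,2)$ into the next triple blocks the construction of $x^3$ and hence of the entire tail of the chain. (It is true that the eventual use in \Cref{lem:hm-methods-solutions} only needs the chain inequality and the quota bounds, but exploiting that would mean building the whole infinite chain by hand, not just one term.) The clean repair is to adopt the convention used implicitly elsewhere in the paper ($\Phi(p,0)=0$, $A_i(x,0)=0$, the Balinski--Young recursion with $f(p,0)=0$) that $f(p,0)=\{0\}$; then the triple $(0,1,2)$ is legal and the base case goes through uniformly, with no special treatment needed. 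Also, your aside about \Cref{claim:hm-sequence-below} is slightly off: there too the available vector sits in the \emph{middle} of the triple $(k,k+1,k+2)$, since the axiom is quantified over the middle vector. The only asymmetry between the two claims is whether the dummy house size lies above or below the middle, and that is exactly why this claim could look vulnerable at $H=1$ while \Cref{claim:hm-sequence-below} cannot.
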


\begin{proof}
    We proceed by induction, so let $h$ be as in the statement. For the base case $k=H+1$, we observe that house monotonicity of $f$ implies the existence of a vector $x^{H+1}\in f(p,H+1)$ such that $x^{H}\leq x^{H+1}$, while quota compliance of $f$ implies that $x^{H+1}_i \in \{\lfloor q_i(H+1)\rfloor, \lceil q_i(H+1)\rceil\}$ for each $i\in [n]$. If we assume that the result holds for every $k \in \{H+1,\ldots,h-1\}$, that is, there exists $x^k\in f(p,k)$ with $x^{k-1} \leq x^{k}$, then house monotonicity of $f$ implies the existence of $x^{h}\in f(p,h)$ with $x^{h-1} \leq x^{h}$. Quota compliance of $f$ directly yields $x^{h}_i \in \{\lfloor q_i(h)\rfloor, \lceil q_i(h)\rceil \}$ for each $i\in [n]$, so we conclude.
\end{proof}

\Cref{claim:hm-sequence-below} and \Cref{claim:hm-sequence-above} yield the existence of an infinite sequence $x^1,x^2,\ldots$ such that $x^H=x$ and that, for every $h\in \NN$, $x^h \in f(p,h)$, $x^h\leq x^{h+1}$, and $x^{h}_i \in \{\lfloor q_i(h)\rfloor, \lceil q_i(h)\rceil \}$ for each $i\in [n]$. Hence, taking $g(p,h)=x^h$ for every $h\in \NN$ and $g(q,h)$ as an arbitrary house-monotone solution for every $q\in \NN^n\setminus \{p\}$ and $h\in \NN$, we have that $g$ is a house-monotone solution with $x = g(p,H)$, as claimed.

\subsection{Proof of \Cref{lem:lp-integral}}
\label{app:lem-lp-integral}

\lemlpintegral*

We prove that the constraint matrix of \eqref{hm-seat}-\eqref{hm-positive} is totally unimodular.
Let $(p,H)\in \NN^n \times\NN$ be an instance and consider the variable vector $z\in \RR^{nH}$ defined as
\[
    z_j = x\left( \left\lfloor \frac{j-1}{H} \right\rfloor +1, j- \left\lfloor \frac{j-1}{H} \right\rfloor H \right),
\]
i.e., $z=(x(1,1),x(1,2),\ldots,x(1,H),x(2,1),x(2,2),\ldots,x(2,H),\ldots,x(n,1),x(n,2)\ldots,x(n,H))$. \linebreak For every $i\in [n]$ and $t\in [H]$, let $e_{i,t}, e'_t \in \{0,1\}^{nH}$ be defined for each $j\in [nH]$ as
\[
    (e_{i,t})_j = \left\{ \begin{array}{ll}
             1 &  \text{ if } \left\lfloor \frac{j-1}{H}\right\rfloor +1 = i \text{ and } j- \left\lfloor \frac{j-1}{H} \right\rfloor H \leq t, \\[1ex]
             0 &  \text{ otherwise,}
             \end{array}
   \right. \quad
   (e'_{t})_j = \left\{ \begin{array}{ll}
             1 &  \text{ if } j- \left\lfloor \frac{j-1}{H} \right\rfloor H = t, \\[1ex]
             0 &  \text{ otherwise.}
             \end{array}
   \right.
\]
Then, the linear program \eqref{hm-seat}-\eqref{hm-positive} is equivalent to 
\begin{align*}
(e_{i,t})^T z &\le \lceil tp_i/P  \rceil\; \text{ for every }i\in [n]\text{ and every }t\in [H],\\ 
-(e_{i,t})^T z &\le - \lfloor tp_i/P \rfloor\; \text{ for every }i\in [n]\text{ and every }t\in [H],\\  
(e'_t)^Tz &\leq 1 \quad\quad\quad \text{ for every }t\in [H],\\
-(e'_t)^Tz &\leq 1 \quad\quad\quad \text{ for every }t\in [H],\\
z&\ge 0.
\end{align*}
It is thus enough to prove that the matrix with rows $e_{i,t}$ for every $i\in [n],~ t\in [H],~ -e_{i,t}$ for every $i\in [n],~ t\in [H],~ e'_t$ for every $t\in [H]$, and $-e'_t$ for every $t\in [H]$, is totally unimodular.
We define this matrix as $A$, such that defining $i(k)=\lfloor (k-1)/H \rfloor +1$ and $t(k)= k- \lfloor (k-1)/H \rfloor H$ for every $k\in [2(n+1)H]$, we have for each $k\in [2(n+1)H]$ that the $k$-th row of $A$ is
\[
    A_k = \left\{ \begin{array}{ll}
             e_{i(k),t(k)}  &  \text{ if } k\in [nH], \\[1ex]
             -e_{i(k-nH),t(k-nH)} &  \text{ if } k\in \{nH+1,\ldots,2nH\}, \\[1ex]
             e'_{k-2nh} & \text{ if } k\in \{2nH+1,\ldots,(2n+1)H\}, \\[1ex]
             -e'_{k-(2n+1)H} & \text{ if } k\in \{(2n+1)H+1,\ldots,2(n+1)H\}.
             \end{array}
   \right.
\]
To do so, we use the characterization of total unimodularity due to \citet{ghouila-houri1962}: A constraint matrix $A \in \RR^{k \times \ell}$ is totally unimodular if and only if for every subset of rows $R \subseteq \{ 1, \dots, k\}$ there is a sign-assignment, i.e., a function $\sigma: R\to \{-1,1\}$, such that
$\sum_{i \in R} \sigma(i)A_{ij} \in \{-1,0,1\} \text{ for all }j \in [\ell].$

Let $R\subseteq [2(n+1)H]$ denote a subset of the rows indices of the constraint matrix, and consider $R'=R\cap [2nH]$.
For each $i\in [n]$, let $R_i\subseteq R'$ be the subset containing the indices of the first $2nH$ rows of $A$ that are in the set $R$ for which a component corresponding to state $i$ is non-zero, i.e.,
\[
    R_i = \left\{ k\in R \cap [2nH]: A_{kj}\not = 0 \text{ for some } j \text{ such that } \left\lfloor \frac{j-1}{H} \right\rfloor +1=i \right\}.
\]
From the definition of $A$, this last condition holds if and only if $k\in \{(i-1)H+1,\ldots,(i-1)H+H\} \cup \{(n+i-1)H+1,\ldots,(n+i-1)H+H\}$, thus
\begin{equation}
    R_i = R \cap \left( \{(i-1)H+1,\ldots,(i-1)H+H\} \cup \{(n+i-1)H+1,\ldots,(n+i-1)H+H\} \right).\label{eq:rows-state-i}
\end{equation}

We claim that, for every $i\in [n]$, there is a function $\sigma_i: R_i \to \{-1,1\}$ such that, for every $j \in [2nH]$ it holds
$\sum_{k \in R_i} \sigma_i(k)A_{kj} \in \{0,1\}.$
We first conclude the result assuming this claim, and show it afterwards.
From \eqref{eq:rows-state-i}, we know that $R_i\cap R_{i'}=\emptyset$ for every $i\not=i'$, and furthermore, $R\cap [2nH]=\bigcup_{i=1}^{n} R_i$.
If the claim is true, defining $\sigma': R\cap [2nH] \to \{0,1\}$ simply as $\sigma'(k) = \sigma_i(k)$ for the unique $i\in [n]$ such that $k\in R_i$, we conclude 
$\sum_{k \in R\cap [2nH]} \sigma'(k)A_{kj} \in \{0,1\}.$
We now define $\sigma: R\to \{-1,1\}$ as follows. For each $k\in [2(n+1)H]$, we let
\[
    \sigma(k) = \left\{ \begin{array}{rl}
             \sigma'(k)  &  \text{ if } k\in R', \\[1ex]
             -1 & \text{ if } k\in \{2nH+1,\ldots,(2n+1)H\}, \\[1ex]
             1 & \text{ if } k\in \{(2n+1)H+1,\ldots,2(n+1)H\} \text{ and } k-H\not\in R, \\[1ex]
             -1 & \text{ if } k\in \{(2n+1)H+1,\ldots,2(n+1)H\} \text{ and } k-H \in R.
             \end{array}
   \right.
\]
We let $j\in [2nH]$ be arbitrary, and show that
$\sum_{k \in R} \sigma(k)A_{kj} \in \{-1,0,1\}.$
Indeed, from the definition of $A$ we have that, defining $t=j+\lfloor (j-1)/H\rfloor H$,
\[
    \sum_{k \in R\setminus R'} \sigma(k)A_{kj} = \left\{ \begin{array}{ll}
    0  \quad \text{ if } R\cap \{2nH+t,2(n+1)H+t\} = \emptyset, \\[1ex]
    \sigma(2nH+t) \quad  \text{ if } R\cap \{2nH+t,2(n+1)H+t\} = \{2nH+t\}, \\[1ex]
    - \sigma(2(n+1)H+t) \quad  \text{ if } R\cap \{2nH+t,2(n+1)H+t\} = \{2(n+1)H+t\}, \\[1ex]
    \sigma(2nH+t) - \sigma(2(n+1)H+t) \quad  \text{ if } \{2nH+t,2(n+1)H+t\} \subseteq R.
    \end{array}
   \right.
\]
From the way we defined $\sigma(k)$, we obtain that this sum equals $0,-1,-1$ and $0$ for each of the four cases, in their respective order.
Since, in addition, we can write
\[
   \sum_{k \in R} \sigma(k)A_{kj} = \sum_{k \in R'} \sigma(k)A_{kj} + \sum_{k \in R\setminus R'} \sigma(k)A_{kj},
\]
where the value of the first sum is either $0$ or $1$ from the claim and the second sum has been proven to be either $-1$ or $0$, we conclude the result.

We now prove the claim. Let $i\in [n]$ and define the partition $(R^1_i, R^2_i, R^3_i, R^4_i)$ as follows
\begin{align*}
    R^1_i & = \{k\in R_i\cap \{(i-1)H,\ldots,(i-1)H+H\}: k+nH \not\in R_i\},\\
    R^2_i & = \{k\in R_i\cap \{(n+i-1)H,\ldots,(n+i-1)H+H\}: k-nH \not\in R_i\},\\
    R^3_i & = ( R_i \cap \{(i-1)H,\ldots,(i-1)H+H\}) \setminus R^1_i,\\
    R^4_i & = ( R_i \cap \{(n+i-1)H,\ldots,(n+i-1)H+H\}) \setminus R^2_i.
\end{align*}
We define $\sigma_i: R_i\to \{-1,1\}$ as follows.
We fix $\sigma_i(k) = 1$ for every $k\in R^3_i \cup R^4_i$, so that $\sigma_i(k)A_k+\sigma_i(k+nH)A_{k+nH}$ vanishes.
Elements in $R^1_i \cup R^2_i$ are arranged in an auxiliary matrix $A'\in \RR^{nH\times 2nH}$ such that, for every $k\in [nH]$, $A'_k =A_k$ if $k\in R^1_i$, $A'_k=A_{k+nH}$ if $k\in R^2_i$, and $A'_k=0$ otherwise.
The intuition behind this construction is that $A'$ has its non-zero rows sorted in decreasing order of non-zero entries, thus assigning signs in a way that two consecutive rows have only zeros and non-zero values of opposite sign ensures obtaining a vector with entries in $\{-1,0,1\}$. Furthermore, starting this assignment with a vector whose entries are in $\{0,1\}$ guarantees obtaining a vector with the same property.
Formally, for each $k\in R^1_i\cup R^2_i$ we let $\pi(k)$ denote the index of the row $A_k$ in the matrix $A'$ restricted to non-zero rows, i.e., $\pi(k)$ is such that $A'_{\pi(k)+z_k}=A_k$, where for each $k\in [nH],~ z_k=|\{\ell\leq k: A'_{\ell}=0\}|$.
Denote as $k_n$ the total number of non-zero rows of $A'$, i.e., $k_n = [nH] - z_{nH}$.
Denote $r\sim_P s$ if two integers $r$ and $s$ have the same parity, i.e., they are both even or both odd, and $r\not\sim_P s$ otherwise.
We then define $\sigma_i(k)$ as follows for these rows.
For $k\in R^1_i\cup R^2_i$, we let
\[
    \sigma_i(k) = \left\{ \begin{array}{rl}
             1  &  \text{ if } k\in R^1_i \text{ and }\pi(k)\sim_P k_N, \text{ or } k\in R^2_i \text{ and }\pi(k) \not\sim_P k_N,\\[1ex]
             -1 & \text{ otherwise}.
             \end{array}
   \right.
\]
This construction implies that, for every $t\in [H]$, defining $j=(i-1)H+t$ we have
\begin{align*}
    \sum_{k\in R_i} \sigma_i(k)A_{kj} & = \sum_{k=(i-1)H+t}^{iH} \sigma_i(k)A_{kj} + \sum_{k=(n+i-1)H+t}^{(n+i)H} \sigma_i(k)A_{kj}\\
    & = \sum_{\substack{k\in R^1_i:\\ k\geq (i-1)H+t}} \sigma_i(k)A_{kj} + \sum_{\substack{k\in R^2_i:\\ k\geq (n+i-1)H+t}} \sigma_i(k)A_{kj}\\
    & = \sum_{k=(i-1)H+t}^{iH} \sigma_i(k)A'_{kj} \\
    & = \left\{ \begin{array}{rl}
             1  &  \text{ if } |\{k\in R^1_i: k\geq (i-1)H+t\}|+|\{k\in R^2_i: k\geq (n+i-1)H+t\}| \text{ is odd,}\\[1ex]
             0 & \text{ otherwise}.
             \end{array}
   \right.
\end{align*}
The first equality comes from the definition and $R_i$ and $A$, the second one from the fact that the rows in $R^3_i$ and $R^4_i$ vanish due to having opposite signs, and the last two from the definition of $A'$.
This concludes the proof of the claim and the proof of the lemma.\qed

\subsection{Proof of Theorem \ref{thm:hm-LP-characterization}}\label{app:HM-proof}

\thmhmlpcharacterization*

\citet{balinski1979,balinski2010fair} provide a characterization of all house-monotone and quota-compliant solutions based on a recursive construction. 
In their approach, a solution is constructed by describing the procedure in which the seats, one by one, are assigned to the states.
We summarize their approach and introduce some notation that will be useful for our proof.
Given a population vector $p=(p_1,\ldots,p_n)$, a house size $H$, and a vector $y=(y_1,\ldots,y_n)$ such that $\sum_{i=1}y_i=H$, let 
\begin{align*}
\calL(p,H,y)&=\left\{i\in [n]:\left\lfloor\frac{p_i}{P}(H+k^*(p,H,y))\right\rfloor>y_i\right\},\\
\calU(p,H,y)&=\left\{i\in [n]:\frac{p_i}{P}(H+1)>y_i\right\}.
\end{align*}
In words, $\calU(p,H,y)$ is the subset of states that can receive the seat $H+1$ without violating their upper quota $\lceil p_i(H+1)/P\rceil$, whereas $\calL(p,H,y)$ is the subset of states that must receive the seat $H+1$ such that the lower quota of any state $j$ with $k^*(p,H,y)$ extra seats, $\lfloor p_j(H+1)/P\rceil$, is fulfilled.
Balinski and Young proved that $f$ is a house-monotone and quota-compliant solution if and only if the values $f(p,\cdot)$ are constructed as follows: $f(p,0) = 0$, and for every house size $H$, $f_{i}(p,H+1)=f_i(p,H)+1$ for one state $i\in \calL(p,H,f(p,H))\cap \calU(p,H,f(p,H))$, and $f_j(p,H+1)=f_j(p,H)$ for every $j\ne i$.

We denote by $K(p,H)$ the set of feasible solutions of \eqref{hm-seat}-\eqref{hm-positive}.
To prove \Cref{thm:hm-LP-characterization}, we use the result by Balinski and Young together with the following lemma, which provides a structural property of the set of extreme points of the linear program \eqref{hm-seat}-\eqref{hm-positive}.

\begin{lemma}\label{lem:x-to-f}
Let $x\in \calE(p,\Phi(p,H))$.
Then, for every $T\le H-1$, we have $x(i,T+1)=0$ for every $i\notin \calL(p,T,A(x,T))\cap \calU(p,T,A(x,T))$.
\end{lemma}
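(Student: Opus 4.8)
The statement says that any extreme point $x$ of the network flow LP with house size $\Phi(p,H)$, when read off seat by seat, only ever assigns the $(T+1)$-st seat to a state that lies in $\calL(p,T,A(x,T))\cap\calU(p,T,A(x,T))$ — i.e., the LP automatically enforces the Balinski–Young ``lookahead'' rule, provided we have given ourselves enough extra seats. I would split the argument into the upper-quota part (membership in $\calU$) and the lower-quota part (membership in $\calL$), since these have quite different flavors.

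The $\calU$-part should be the easy half. Suppose $x(i,T+1)=1$ but $i\notin\calU(p,T,A(x,T))$, i.e.\ $p_i(T+1)/P\le A_i(x,T)$, which combined with constraint \eqref{hm-upper} at index $T$ forces $A_i(x,T)=\lceil p_iT/P\rceil$ and in fact $\lfloor p_i(T+1)/P\rfloor < A_i(x,T)+1 = A_i(x,T+1)$. Then the partial sum $\sum_{\ell=1}^{T+1} x(i,\ell)$ exceeds $\lceil p_i(T+1)/P\rceil$, directly violating \eqref{hm-upper} with $t=T+1$ (note $T+1\le H\le \Phi(p,H)$, so this constraint is present in the LP). Contradiction; hence $i\in\calU$. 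I would double-check the ceiling/floor bookkeeping here, since $\calU$ was defined with a strict inequality $p_i(H+1)/P>y_i$, but this is a finite computation.

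The $\calL$-part is where the function $\Phi$ and the definition of $\tau$ do the work, and it is the main obstacle. Assume $x(i,T+1)=1$ with $i\notin\calL(p,T,A(x,T))$; write $y=A(x,T)$ and $k^*=k^\star(p,T,y)$, so $S_{k^*}(p,T,y)$ is the set of states that demand extra seats within a lookahead of $k^*$, and by minimality of $k^*$ the demand $\sum_{j\in S_{k^*}}(\lfloor p_j(T+k^*)/P\rfloor - y_j)$ exactly equals $k^*$ (while for $k<k^*$ it is $<k$). The point is that assigning seat $T+1$ to $i\notin S_{k^*}$ ``wastes'' one of the $k^*$ seats available over the window $[T+1,T+k^*]$, so the remaining $k^*-1$ seats cannot meet the demand of the states in $S_{k^*}$, forcing some state to fall below its lower quota at some index $t\le T+k^*$ — contradicting \eqref{hm-lower}. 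To make this rigorous I would: (1) note $T+k^*\le T+\tau(p,T,A(x,T)) \le \Phi(p,T+1)\le \Phi(p,H)$ by the recursive definition of $\Phi$ and monotonicity of $\Phi$ in its second argument (which I'd establish or invoke from \Cref{prop:phi}-style reasoning), so the relevant constraints \eqref{hm-lower} are all present in the LP; (2) sum constraint \eqref{hm-lower} at $t=T+k^*$ over $j\in S_{k^*}$ to get $\sum_{j\in S_{k^*}} A_j(x,T+k^*)\ge \sum_{j\in S_{k^*}}\lfloor p_j(T+k^*)/P\rfloor = |S_{k^*}\text{'s demand}| + \sum_{j\in S_{k^*}} y_j = k^* + \sum_{j\in S_{k^*}} y_j$; (3) on the other hand $\sum_{j\in S_{k^*}} (A_j(x,T+k^*)-y_j)$ counts seats assigned to $S_{k^*}$ among seats $T+1,\dots,T+k^*$, which is at most $k^*$ total minus the one seat given to $i\notin S_{k^*}$ at position $T+1$, i.e.\ at most $k^*-1$ — contradiction. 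The one subtlety is the corner case $S_{k^*}=[n]$, in which $\calL$ is defined via $k^*(p,H,y)$ but $\tau=1$; here $i\notin\calL=[n]$ is impossible for the ``wasted seat'' argument in the same form, but then $i\notin S_{k^*}=[n]$ is vacuously false, so there is nothing to prove. I would also need to handle the boundary where $k^*$ is not well-defined or where $S_{k}(p,T,y)$ grows — but $S_k$ is monotone in $k$, so once the demand-$\ge k$ inequality holds at $k^*$ it is the binding one, which is exactly why the definition of $\tau$ truncates the lookahead at $k^*$.
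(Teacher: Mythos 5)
Your proposal follows the paper's proof exactly: the same case split into the $\calU$-part (violation of \eqref{hm-upper} at $t=T+1$, which is present since $T+1\le H\le \Phi(p,H)$) and the $\calL$-part (sum \eqref{hm-lower} over $S_{k^*}$ at $t=T+k^*$ against the at most $k^*-1$ seats available to $S_{k^*}$ over positions $T+1,\dots,T+k^*$, using $T+k^*\le\Phi(p,H)$ from the definition of $\Phi$, with the corner case $S_{k^*}=[n]$ disposed of via \eqref{hm-seat}). One small slip worth noting: minimality of $k^*$ only yields demand $\ge k^*$, not exact equality (the integer-valued demand can jump by more than one as $k$ increments), but since you use this only as a lower bound, the contradiction $k^*\le k^*-1$ is unaffected.
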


\begin{proof}
Suppose that $x(k,T+1)=1$ for some state $k \notin \calL(p,T,A(x,T))\cap \calU(p,T,A(x,T))$. 
If $k \notin \calU(p,T,A(x,T))$, we have that
\[
    A_k(x,T+1)=A_k(x,T)+x(k,T+1)\ge  \frac{p_i}{P}(T+1)+1 > \left\lceil \frac{p_i}{P}(T+1) \right\rceil,
\]
where the first inequality comes from the definition of the set $\calU(p,T,A(x,T))$.
This implies that the constraint \eqref{hm-upper} for $i=k$ and $t=T+1$ is violated, contradicting the fact that $x\in K(p,\Phi(p,H))$. 
In the following, we suppose that $k \notin \calL(p,T,A(x,T))$.
In particular, since $x(k,T+1)=1$, constraint \eqref{hm-seat} implies that
\begin{equation}
\sum_{i\in \calL(p,T,A(x,T))}x(i,T+1)=0.\label{eq:h-1-zero}
\end{equation} 
Since $T\leq H-1$, from the definition of $\Phi(p,H)$ we have that $T + \tau(p,T,A(x,T))\le \Phi(p,H)$, thus denoting $\bar{T} = T + \tau(p,T,A(x,T))$ we have that the linear program \eqref{hm-seat}-\eqref{hm-positive} includes constraints for every value of $t\in \{T-1,\ldots,\overline{T}\}$.
We obtain 
\begin{equation}
    \sum_{\ell=T+1}^{\overline{T}}\sum_{i\in\calL(p,T,A(x,T))}x(i,\ell) =\sum_{\ell=T+2}^{\overline{T}}\sum_{i\in\calL(p,T,A(x,T))}x(i,\ell)\le \overline{T}-T-1, \label{eq:ub-lower-quota-states}
\end{equation}
where the first equality follows from \eqref{eq:h-1-zero} and the inequality from constraints \eqref{hm-seat}.
On the other hand, we have
\begin{align*}
\sum_{i\in \calL(p,T,A(x,T))}\left(\left\lfloor\frac{p_i}{P}\overline{T}\right\rfloor-A_i(x,T)\right) & \le \sum_{i\in \calL(p,T,A(x,T))}\left(A_i(x,\overline{T})-A_i(x,T)\right) \\
& =\sum_{i\in\calL(p,T,A(x,T))}\sum_{\ell=T+1}^{\overline{T}}x(i,\ell),
\end{align*}
where the inequality holds since $x$ satisfies the constraints \eqref{hm-lower} for every $t\le \overline{T}$.
Together with inequality \eqref{eq:ub-lower-quota-states}, this implies that 
\[
    \sum_{i\in \calL(p,T,A(x,T))}\left(\left\lfloor\frac{p_i}{P}(T+\tau(p,T,A(x,T)))\right\rfloor-A_i(x,T)\right)\le \tau(p,T,A(x,T))-1.
\]
If $\calL(p,T,A(x,T))=[n]$, \eqref{eq:h-1-zero} is an immediate contradiction to constraint \eqref{hm-seat} with $t=T+1$.
Otherwise, we have $\tau(p,T,A(x,T))=k^*(p,T,A(x,T))$ and thus
\[
    \sum_{i\in \calL(p,T,A(x,T))}\left(\left\lfloor\frac{p_i}{P}(T+k^*(p,T,A(x,T)))\right\rfloor-A_i(x,T)\right)\le k^*(p,T,A(x,T))-1,
\] 
a contradiction to the definition of $k^*(p,T,A(x,T))$.
This finishes the proof of the lemma.
\end{proof}

We are now ready to prove \Cref{thm:hm-LP-characterization}.

\begin{proof}[Proof of \Cref{thm:hm-LP-characterization}]
Consider a house-monotone and quota-compliant solution $f$.
For every house size $T\le \Phi(p,H)$ and every $i\in [n]$, let $z(i,T)=f_i(p,T)-f_i(p,T-1)$.
For every $T\in [\Phi(p,H)]$, the house monotonicity of $f$ implies that $\sum_{i=1}^n z(i,T)=\sum_{i=1}^n(f_i(p,T)-f_i(p,T-1))=1$, and therefore constraints \eqref{hm-seat} are satisfied.
For every $i\in [n]$, we have $\sum_{\ell=1}^Tz(i,\ell)=f_i(p,T)$ and therefore constraints \eqref{hm-lower}-\eqref{hm-upper} are all satisfied since $f$ is quota-compliant.
Since $z$ is non-negative, we conclude that $z\in K(p,\Phi(p,T))$, and the integrality of $z$ implies that $z\in \calE(p,\Phi(p,T))$.
Since $A(z,H)=f(p,H)$, we conclude that $f(p,H)\in \{A(x,H):x\in \calE(p,\Phi(p,H))\}$.

For the other direction, let $x\in \calE(p,\Phi(p,H))$.
By \Cref{lem:x-to-f}, we have $A_i(x,T+1)=A_i(x,T)+x(i,T+1)=A_i(x,T)$ for every value $i\notin \calL(p,t,A(x,T))\cap \calU(p,t,A(x,T))$.
Then, the sequence $A(x,0), A(x,1), \ldots, A(x,H)$ can be obtained by the recursive construction of Balinski and Young, from where we conclude that $A(x,H)\in \calA(p,H)$.
This finishes the proof of the theorem.
\end{proof}

\subsection{Proof of \Cref{prop:phi}}\label{app:prop-phi}

\propphi*

Let $p,~H$, and $c$ be as in the statement of the proposition.
    To see \ref{phi-ub-cP}, let $T\in [H-1]$ and note that taking $k=cP-T$, we have that for every $x\in \calE(p,\Phi(p,H-1))$ and $i\in [n]$
    \[
        \left\lfloor \frac{p_i(T+k)}{P} \right\rfloor - A_i(x,T) = cp_i - A_i(x,T) =  \left\lceil \frac{p_i(T+k)}{P} \right\rceil - A_i(x,T) \geq \left\lceil \frac{p_iT}{P} \right\rceil - A_i(x,T) \geq 0,
    \]
    where the last inequality follows from the fact that $x\in \calE(p,\Phi(p,H-1))$ and $T\leq H-1$.
    This yields
    \[
        \sum_{i\in S_k(p,T,A(x,T))} \left(\left\lfloor \frac{p_i(T+k)}{P} \right\rfloor - A_i(x,T) \right) = \sum_{i=1}^{n} (cp_i-A_i(x,T)) = cP-T = k.
    \]
    We conclude that $\tau(p,T,A(x,T))\leq k=cP-T$ for every $x\in \calE(p,\Phi(p,H-1))$ and, therefore,
    \[
        T+\max\Big\{\tau(p,T,A(x,T)):x\in \calE(p,\Phi(p,H-1))\Big\}\leq cP.
    \]
    This implies
    \[
        \Phi(p,H)=\max_{T\in [H-1]}\left\{T+\max\Big\{\tau(p,T,A(x,T)):x\in \calE(p,\Phi(p,H-1))\Big\}\right\} \leq cP,
    \]
    which concludes the proof of part \ref{phi-ub-cP}.

    We now prove part \ref{phi-cP}.
    Taking $x\in \calE(p,\Phi(p,cP-1))$ and $k=1$, we have that for every $i\in [n]$
    \[
        \left\lfloor \frac{p_i(cP-1+k)}{P} \right\rfloor - A_i(x,cP-1) = cp_i-A_i(x,cP-1) =  \left\lceil \frac{p_i(cP-1+k)}{P} \right\rceil - A_i(x,cP-1) \geq 0,
    \]
    where the last inequality follows from the fact that $x\in \calE(p,\Phi(p,cP-1))$.
    This yields
    \[
        \sum_{i\in S_k(p,cP-1,A(x,cP-1))} \left(\left\lfloor \frac{p_i(cP-1+k)}{P} \right\rfloor - A_i(x,cP-1) \right) = \sum_{i=1}^{n} (cp_i-A_i(x,cP-1)) = 1 = k.
    \]
    Since $\tau(p,cP-1,A(x,cP-1))$ is the minimum value of $k$ that satisfies 
    \[
        \sum_{i\in S_k(p,cP-1,A(x,cP-1))} \left(\left\lfloor \frac{p_i(cP-1+k)}{P} \right\rfloor - A_i(x,cP-1) \right) \geq k
    \]
    in case $S_k(p,cP-1,A(x,cP-1))\not=[n]$ and $1$ otherwise, we conclude that $\tau(p,cP-1,A(x,cP-1))=1$ for every $x\in \calE(p,\Phi(p,H))$. Therefore,
    \[
        cP-1+\max\Big\{\tau(p,cP-1,A(x,cP-1)):x\in \calE(p,\Phi(p,cP-1))\Big\}= cP.
    \]
    Putting this together with the result from part \ref{phi-ub-cP}, we conclude that
    \[
        \Phi(p,cP) =  \max_{T\in [cP-1]}\left\{T+\max\Big\{\tau(p,T,A(x,T)):x\in \calE(p,\Phi(p,cP-1))\Big\}\right\} = cP.
    \]

We finally prove part \ref{phi-ub}.
Consider a house size $H\ge 2$. 
Following \citet{balinski2010fair}, for every extreme point $x\in \calE(p,\Phi(p,H-1))$ and every $i\in [n]$, we have $\lfloor p_i(H-1)/P\rfloor\le A_i(x,H-1)\le \lceil p_i(H-1)/P\rceil$, and then, for every integer value $k> \max_{i\in [n]}\lceil A_i(x,H-1)P/p_i-H+1 \rceil$ we have $H-1+k> \max_{i\in [n]}A_i(x,H-1)P/p_i$. 
This implies that 
$p_i(H-1+k)/P> A_i(x,H-1)$ for every $i\in [n]$.
Therefore, we have
\begin{align*}
\sum_{i=1}^n \max\left\{ \left\lfloor \frac{p_i(H-1+k)}{P} \right\rfloor - A_i(x,H-1) ,0\right\} & = \sum_{i=1}^n \left(\left\lfloor \frac{p_i(H-1+k)}{P} \right\rfloor - A_i(x,H-1)\right) \\
& < \sum_{i=1}^n \left( \frac{p_i(H-1+k)}{P} - A_i(x,H-1)\right)\\
&=H-1+k-\sum_{i=1}^nA_i(x,H-1)=k.
\end{align*}
We conclude that $\tau(p,H-1,A(x,H-1))\le \max_{i\in [n]}\lceil A_i(x,H-1)P/p_i-H+1 \rceil$.
For every $i\in [n]$, it holds that
\begin{align*}
\left\lceil \frac{A_i(x,H-1)}{p_i}P-H+1 \right\rceil&\le \left\lceil \left\lceil\frac{p_i(H-1)}{P}\right\rceil\frac{1}{p_i}P-H+1 \right\rceil\\
&\le \left\lceil \left(\frac{p_i(H-1)}{P}+1\right)\frac{1}{p_i}P-H+1 \right\rceil=\left\lceil \frac{P}{p_i}\right\rceil,
\end{align*}
and therefore,
\begin{align*}
\Phi(p,H)&=\max_{T\in [H-1]} \left\{T+\max\Big\{\tau(p,T,A(x,T)):x\in \calE(p,\Phi(p,H-1))\Big\} \right\}\\
&\le \max_{T\in [H-1]} \left\{T+\max_{i\in [n]}\lceil P/p_i\rceil \right\} \leq H-1+\max_{i\in [n]}\lceil P/p_i\rceil,
\end{align*}
from where we conclude that $\Phi(p,H)-H\le \max_{i\in [n]}\lceil P/p_i\rceil$.\qed

\subsection{Proof of \Cref{prop:tightness}}\label{app:prop-tightness}

\proptightness*

We consider the instance with $m=6$ states, where $p=(P-6,2,1,1,1,1)$ and $H=P/3+2$, with $P$ any value divisible by $6$.
We consider $x$ defined as $x(1,t)=1$ for every $t\in [P/3],~ x(i,t)=0$ for every $i\ne 1$ and $t\in [P/3],~ x(3,P/3+1)=x(4,P/3+2)=1,~ x(i,P/3+1)=0$ for every $i\ne 3$, and $x(i,P/3+2)=0$ for every $i\ne 4$.
We first observe that $x\in \calE(p,\Phi(p,H-1))$.
To do so, it is enough to show that for every $T\in [H-1]$ it holds $\tau(p,T,A(x,T))=1$, and we do so by proving that, for every $T\in [H-1]$ and $k\in \NN$ we have
\begin{equation}
    \min\left\{ k\in \NN: \sum_{i\in S(p,T,A(x,T))} \left( \left \lfloor \frac{p_i(T+k)}{P} \right \rfloor - A_i(x,T) \right) \geq k\right\} =P-T.\label{eq:condition-k-T}
\end{equation}
This immediately implies $k^*=P-T$ and thus $S_k(p,T,A(x,T))=[n]$ and $\tau(p,T,A(x,T))=1$.

Observe that when we take $k=P-T,~ (p_i(T+k))/P=p_i \in \NN$, so the $i$-th terms in the sum of the left-hand side of \eqref{eq:condition-k-T} is $p_i-A_i(x,T)$, which summed over $i$ gives $P-T=k$.
Therefore, in order to show \eqref{eq:condition-k-T} we only have to prove that, whenever $k<P-T$, we have
\begin{equation}
    \sum_{i\in S(p,T,A(x,T))} \left( \left \lfloor \frac{p_i(T+k)}{P} \right \rfloor - A_i(x,T) \right) < k.\label{eq:inequality-k-T}
\end{equation}

We first compute, for each state, the lower quota or an upper bound on it for different numbers of seats $h\in [H-1]$:
\begin{align}
    \left\lfloor \frac{p_1h}{P} \right \rfloor & = \left\lfloor \frac{(P-6)h}{P} \right \rfloor \leq h-1 \quad \text{if } h\in [P/3], \label{eq:d1-h-tiny}\\
    \left\lfloor \frac{p_1h}{P} \right \rfloor & = \left\lfloor \frac{(P-6)h}{P} \right \rfloor = h-3 \quad \text{if } h\in \{P/3+1,\ldots,P-1\}, \label{eq:d1-h-medium}\\
    \left\lfloor \frac{p_1h}{P} \right \rfloor & = \left\lfloor \frac{(P-6)h}{P} \right \rfloor \leq h-4 \quad \text{if } h\in \{P/2+1,\ldots,P-1\},\label{eq:d1-h-large}\\
    \left\lfloor \frac{p_2h}{P} \right \rfloor & = \left\lfloor \frac{2h}{P} \right \rfloor = 0 \quad \text{if } h\in [P/2-1], \label{eq:d2-h-small}\\
    \left\lfloor \frac{p_2h}{P} \right \rfloor & = \left\lfloor \frac{2h}{P} \right \rfloor =1 \quad \text{if } h\in \{P/2,\ldots,P-1\}, \label{eq:d2-h-large}\\
    \left\lfloor \frac{p_ih}{P} \right \rfloor & = \left\lfloor \frac{h}{P} \right \rfloor = 0 \quad \text{if } h\in [P-1], \text{ for every } i\in \{3,4,5,6\}. \label{eq:d3456-h}
\end{align}

We now prove \eqref{eq:inequality-k-T}.
We start with the case $T\in [P/3]$, so $A_1(x,T)=T$ and $A_i(x,T)=0$ for every $i\in [n]\setminus \{1\}$.
If $k\in [P/2-T-1]$, then from \eqref{eq:d1-h-tiny}, \eqref{eq:d2-h-small}, and \eqref{eq:d3456-h} we have 
\[
    \sum_{i\in S(p,T,A(x,T))} \left( \left \lfloor \frac{p_i(T+k)}{P} \right \rfloor - A_i(x,T) \right) = \left \lfloor \frac{p_1(T+k)}{P} \right \rfloor - A_1(x,T) \leq T+k-1 - T = k-1<k.
\]
If $k\in \{P/2-T,\ldots, P-T-1\}$, then from \eqref{eq:d1-h-medium}, \eqref{eq:d1-h-large}, \eqref{eq:d2-h-large}, and \eqref{eq:d3456-h} we have 
\begin{align*}
    \sum_{i\in S(p,T,A(x,T))} \left( \left \lfloor \frac{p_i(T+k)}{P} \right \rfloor - A_i(x,T) \right) & = \left \lfloor \frac{p_1(T+k)}{P} \right \rfloor - A_1(x,T) + \left \lfloor \frac{p_2(T+k)}{P} \right \rfloor - A_2(x,T) \\
    & \leq T+k-3- T + 1 - 0 = k-2<k.
\end{align*}
We now address the case $T = P/3+1$, so $A_1(x,T)=P/3,~ A_3(x,T)=1$, and $A_i(x,T)=0$ for every $i\in [n]\setminus \{1,3\}$.
If $k\in [P/6-2]$, then from \eqref{eq:d1-h-medium}, \eqref{eq:d2-h-small}, and \eqref{eq:d3456-h} we have 
\[
    \sum_{i\in S(p,T,A(x,T))} \left( \left \lfloor \frac{p_i(T+k)}{P} \right \rfloor - A_i(x,T) \right) = \left \lfloor \frac{p_1(T+k)}{P} \right \rfloor - A_1(x,T) \leq \frac{P}{3}+k-2 - \frac{P}{3} = k-1<k.
\]
If $k\in \{P/6-1,\ldots, 2P/3-2\}$, then from \eqref{eq:d1-h-medium}, \eqref{eq:d1-h-large}, \eqref{eq:d2-h-large}, and \eqref{eq:d3456-h} we have 
\begin{align*}
    \sum_{i\in S(p,T,A(x,T))} \left( \left \lfloor \frac{p_i(T+k)}{P} \right \rfloor - A_i(x,T) \right) & = \left \lfloor \frac{p_1(T+k)}{P} \right \rfloor - A_1(x,T) + \left \lfloor \frac{p_2(T+k)}{P} \right \rfloor - A_2(x,T) \\
    & \leq \frac{P}{3}+k-2- \frac{P}{3} + 1 - 0 = k-1<k.
\end{align*}
This concludes the proof of \eqref{eq:inequality-k-T} and thus $x\in \calE(p,\Phi(p,H-1))$.

We now show that $\tau(p,H,A(x,H))=P/2-H=P/6-2$, which yields $\Phi(p,H+1)\geq H+P/6-2$ and thus $\Phi(p,H+1)- (H+1) = \Omega(P)$.
Note that $A_1(x,H)=P/3,~ A_3(x,H)=A_4(x,H)=1$, and $A_i(x,H)=0$ for every $i\in [n]\setminus \{1,3,4\}$.
Indeed, for $k\in [P/6-3]$ we have, due to \eqref{eq:d1-h-medium}, \eqref{eq:d2-h-small}, and \eqref{eq:d3456-h}, that
\[
    \sum_{i\in S(p,H,A(x,H))} \left( \left \lfloor \frac{p_i(H+k)}{P} \right \rfloor - A_i(x,H) \right) = \left \lfloor \frac{p_1(H+k)}{P} \right \rfloor - A_1(x,H) = \frac{P}{3}+k-1-\frac{P}{3} = k-1<k.
\]
However, taking $k=P/6-2$, we have 
\begin{align*}
    \sum_{i\in S(p,H,A(x,H))} \left( \left \lfloor \frac{p_i(H+k)}{P} \right \rfloor - A_i(x,H) \right) & = \left \lfloor \frac{p_1(H+k)}{P} \right \rfloor - A_1(x,H) + \left \lfloor \frac{p_2(H+k)}{P} \right \rfloor - A_2(x,T)\\
    & = \frac{P}{2}-3-\frac{P}{3} +1 - 0 = \frac{P}{6}-2 = k,
\end{align*}
so we conclude $\tau(p,H,A(x,H))=P/6-2$.
This concludes the proof of the proposition.\qed

\subsection{Proof of Theorem \ref{thm:rand-characterization}}\label{app:rand-methods}

\thmrandcharacterization*

We first show the following lemma.

\begin{lemma}\label{lem:ex-ante}
Consider a population vector $p$, and let $Y$ be a random variable taking values over $\calE(p,P)$, and distributed according to $\theta \in \Theta(S)$ for some $S\in \calS(p)$. Then, for every $H\le P$, the following holds:
\begin{enumerate}[label=(\alph*)]
    \item For every $i\in [n]$, we have\, $\EE(\sum_{\ell=1}^H Y(i,\ell))=q_i$.\label{hm-random-a}
    \item For every $i\in [n]$, we have\, $\lfloor q_i\rfloor \le \sum_{\ell=1}^HY(i,\ell)\le \lceil q_i\rceil$.\label{hm-random-b}
    \item For every $i\in [n]$ and every $H\le P-1$, we have\, $\sum_{\ell=1}^HY(i,\ell)\le \sum_{\ell=1}^{H+1}Y(i,\ell)$.\label{hm-random-c}
\end{enumerate}
\end{lemma}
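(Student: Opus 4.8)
The plan is to derive all three statements directly from the definitions of $\calE(p,P)$ and of $\Theta(S)$. Part \ref{hm-random-a} will follow from linearity of expectation together with the defining identity $\sum_{x\in S}\theta_x x = Q$ of $\Theta(S)$; parts \ref{hm-random-b} and \ref{hm-random-c} will follow realization by realization, from the fact that every point of $\calE(p,P)$ satisfies the constraints \eqref{hm-lower}--\eqref{hm-upper} and \eqref{hm-positive}.

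For part \ref{hm-random-a}, I would use that $Y$ has finite support $S\subseteq\calE(p,P)$ with $\PP(Y=x)=\theta_x$, so that $\EE(Y(i,\ell)) = \sum_{x\in S}\theta_x\, x(i,\ell) = Q(i,\ell) = p_i/P$ for every $i\in[n]$ and $\ell\in[P]$, where the middle equality is exactly the condition $\theta\in\Theta(S)$. Linearity of expectation then gives, for any $H\le P$,
\[
\EE\Big(\textstyle\sum_{\ell=1}^{H} Y(i,\ell)\Big) = \sum_{\ell=1}^{H}\EE\big(Y(i,\ell)\big) = \sum_{\ell=1}^{H}\frac{p_i}{P} = \frac{Hp_i}{P} = q_i .
\]

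For parts \ref{hm-random-b} and \ref{hm-random-c}, the key observation is that every realization $x$ of $Y$ lies in $\calE(p,P)$, i.e., it is an integer point feasible for \eqref{hm-seat}--\eqref{hm-positive} at house size $P$. Since $H\le P$, the constraints \eqref{hm-lower} and \eqref{hm-upper} are in force for $t=H$ and yield $\lfloor q_i\rfloor = \lfloor Hp_i/P\rfloor \le \sum_{\ell=1}^{H}x(i,\ell) \le \lceil Hp_i/P\rceil = \lceil q_i\rceil$; as the support of $Y$ is finite, this bound holds surely, which is \ref{hm-random-b}. For \ref{hm-random-c}, when $H\le P-1$ the index $H+1$ lies in $[P]$, so \eqref{hm-positive} gives $x(i,H+1)\ge 0$ for every realization, whence $\sum_{\ell=1}^{H+1}x(i,\ell)-\sum_{\ell=1}^{H}x(i,\ell) = x(i,H+1)\ge 0$ surely.

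I do not expect any genuine obstacle: the statement is a direct unwrapping of definitions, and the only things to watch are that $H\le P$ (respectively $H\le P-1$) keeps the relevant constraint index $t=H$ (respectively $t=H+1$) within the range $[P]$ for which \eqref{hm-seat}--\eqref{hm-positive} is written, and that ``distributed according to $\theta$'' means $Y$ has finite support, so that the per-realization bounds from the linear program transfer to $Y$ with no measure-theoretic caveat. This lemma then feeds the proof of \Cref{thm:rand-characterization}: \ref{hm-random-a} gives ex-ante proportionality, while \ref{hm-random-b}--\ref{hm-random-c}, via \Cref{prop:hm-solutions} and \Cref{thm:hm-LP-characterization}, give that the randomized method induced by $\theta$ is house-monotone and quota-compliant.
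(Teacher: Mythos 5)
Your proof is correct and follows essentially the same route as the paper: part \ref{hm-random-a} by linearity of expectation together with $\sum_{x\in S}\theta_x x = Q$, and parts \ref{hm-random-b}--\ref{hm-random-c} realization-by-realization from the constraints \eqref{hm-lower}--\eqref{hm-upper} and \eqref{hm-positive} of the LP at house size $P$. You merely spell out the intermediate steps (e.g.\ $Q(i,\ell)=p_i/P$ and $\sum_{\ell=1}^H p_i/P = q_i$) that the paper leaves implicit.
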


\begin{proof}
Parts \ref{hm-random-b} and \ref{hm-random-c} are direct consequences of $Y$ taking values over the set of extreme points $\calE(p,P)$ of the linear program \eqref{hm-seat}-\eqref{hm-positive} with $H=P$.
For every $i\in [n]$, we have
\begin{align*}
\EE\left(\sum_{\ell=1}^HY(i,\ell)\right)&=\sum_{x\in S}\theta_x\sum_{\ell=1}^HY(i,\ell)=\sum_{\ell=1}^H\sum_{x\in S}\theta_xY(i,\ell)=\sum_{\ell=1}^H Q(i,\ell)=q_i,
\end{align*}
where the third equality follows since $\theta\in \Theta(S)$. This proves part \ref{hm-random-a}.
\end{proof}

We now show how to construct a randomized method using our linear programming approach. 
For every population vector $p$, consider an infinite sequence $S(p)=(S_1,S_2,\ldots)$ such that $S_j\in \calS(p)$ for every positive integer $j$.
Then, consider the infinite sequence of independent random variables $Y_1,Y_2,\ldots$ where $Y_j$ is distributed according to some $\theta\in \Theta(S_j)$ for every positive integer $j$, and consider the infinite sequence $h(S(p))=(h_1,h_2,h_3,\ldots)$ where $h_{\ell}=i_{\ell}$ if $Y_{\lfloor \ell/P\rfloor}(i_{\ell},\ell)=1$.
We consider the probability space $\Omega$ such that, for each $p$, we have the random sequence $h(S(p))$ distributed according to the previously defined distribution, and they are all independent across $p$.

For a given realization in $\omega\in \Omega$, in our randomized solution, we provide the following apportionment for a population vector $p$ and a house size $H$: For every $i\in [n]$, $f_i(p,H)=\{\ell\le H:h_{\ell}=i\}$, where $(h_1,h_2\ldots)=h(S(p))$ is the sequence obtained for $p$ according to our previous construction.
By \Cref{lem:ex-ante} we have that this method is house-monotone, quota-compliant, and ex-ante proportional.
We remark that the choice of the sequences $S(p)$ for every population vector $p$ generates different randomized methods.
\citet{goelz2022apportionment} provide a randomized method based on a different type of dependent rounding, based on adapting the bipartite rounding method by \citet{gandhi2006pipage}.

\begin{proof}[Proof of Theorem \ref{thm:rand-characterization}]
Suppose that $F$ is a randomized method in $\calM^{\star}$, that is, house-monotone, quota-compliant, and ex-ante proportional in the restricted domain.
Then, \Cref{thm:hm-LP-characterization} implies that $X_F\in \calE(p,P)$, and let $S$ be the support of $X_F$.
For every $y\in S$, let $\theta_y=\PP(X_F=y)$.
Clearly, we have that $\sum_{y\in S}\theta_y=1$, $\theta>0$, and since $F$ is ex-ante proportional we have $Q=\EE(X_F)=\sum_{y\in S}\theta_y y$.
We conclude that $\theta\in \Theta(S)$ and thus $S\in \calS(p)$.

Now suppose that $F\in \calM$ is such that for every $p$ there exists $S_p\in \calS(p)$ and $\theta\in \Theta(S_p)$ such that for every $x\in S_p$ we have $\PP(X_F=x)=\theta_x$.
By directly applying \Cref{lem:ex-ante}, we obtain that $F$ is ex-ante proportional, quota-compliant, and house-monotone in the restricted domain., i.e., $F\in \calM^*$. 
This finishes the proof of the theorem.
\end{proof}

\bibliographystyle{abbrvnat}
\bibliography{references}

\begin{thebibliography}{36}
\providecommand{\natexlab}[1]{#1}
\providecommand{\url}[1]{\texttt{#1}}
\expandafter\ifx\csname urlstyle\endcsname\relax
  \providecommand{\doi}[1]{doi: #1}\else
  \providecommand{\doi}{doi: \begingroup \urlstyle{rm}\Url}\fi

\bibitem[Aziz et~al.(2019)Aziz, Lev, Mattei, Rosenschein, and
  Walsh]{aziz2019random}
H.~Aziz, O.~Lev, N.~Mattei, J.~S. Rosenschein, and T.~Walsh.
\newblock Strategyproof peer selection using randomization, partitioning, and
  apportionment.
\newblock \emph{Artificial Intelligence}, 275:\penalty0 295--309, 2019.

\bibitem[Balinski and Demange(1989{\natexlab{a}})]{balinskidemange1989a}
M.~Balinski and G.~Demange.
\newblock An axiomatic approach to proportionality between matrices.
\newblock \emph{Mathematics of Operations Research}, 14\penalty0 (4):\penalty0
  700--719, 1989{\natexlab{a}}.

\bibitem[Balinski and Demange(1989{\natexlab{b}})]{balinskidemange1989b}
M.~Balinski and G.~Demange.
\newblock Algorithms for proportional matrices in reals and integers.
\newblock \emph{Mathematical Programming}, 45\penalty0 (1-3):\penalty0
  193--210, 1989{\natexlab{b}}.

\bibitem[Balinski and Young(1974)]{balinskiyoung1974}
M.~Balinski and H.~P. Young.
\newblock A new method for congressional apportionment.
\newblock \emph{Proceedings of the National Academy of Sciences}, 71\penalty0
  (11):\penalty0 4602--4606, 1974.

\bibitem[Balinski and Young(1975)]{balinskiyoung1975}
M.~Balinski and H.~P. Young.
\newblock The quota method of apportionment.
\newblock \emph{The American Mathematical Monthly}, 82\penalty0 (7):\penalty0
  701, 1975.

\bibitem[Balinski and Young(1979)]{balinski1979}
M.~Balinski and H.~P. Young.
\newblock Quotatone apportionment methods.
\newblock \emph{Mathematics of Operations Research}, 4\penalty0 (1):\penalty0
  31--38, 1979.

\bibitem[Balinski and Young(2010)]{balinski2010fair}
M.~Balinski and H.~P. Young.
\newblock \emph{Fair Representation: Meeting the Ideal of One Man, One Vote}.
\newblock Brookings Institution Press, 2010.

\bibitem[Bautista et~al.(1996)Bautista, Companys, and
  Corominas]{bautista1996note}
J.~Bautista, R.~Companys, and A.~Corominas.
\newblock A note on the relation between the product rate variation (prv)
  problem and the apportionment problem.
\newblock \emph{Journal of the Operational Research Society}, 47:\penalty0
  1410--1414, 1996.

\bibitem[Cembrano et~al.(2021)Cembrano, Correa, Diaz, and
  Verdugo]{cembranocorreadiazverdugo2021}
J.~Cembrano, J.~Correa, G.~Diaz, and V.~Verdugo.
\newblock Proportional apportionment: A case study from the chilean
  constitutional convention.
\newblock In \emph{Proceedings of the 1st ACM Conference on Equity and Access
  in Algorithms, Mechanisms, and Optimization}, pages 1--9, 2021.

\bibitem[Cembrano et~al.(2022)Cembrano, Correa, and
  Verdugo]{cembranocorreaverdugo2022}
J.~Cembrano, J.~Correa, and V.~Verdugo.
\newblock Multidimensional political apportionment.
\newblock \emph{Proceedings of the National Academy of Sciences}, 119\penalty0
  (15):\penalty0 e2109305119, 2022.

\bibitem[Chan(1999)]{chan1999remarks}
T.~M. Chan.
\newblock Remarks on k-level algorithms in the plane.
\newblock Technical report, 1999.

\bibitem[Correa et~al.(2024)Correa, G{\"o}lz, Schmidt-Kraepelin, Tucker-Foltz,
  and Verdugo]{correa2024monotone}
J.~Correa, P.~G{\"o}lz, U.~Schmidt-Kraepelin, J.~Tucker-Foltz, and V.~Verdugo.
\newblock Monotone randomized apportionment.
\newblock \emph{arXiv preprint arXiv:2405.03687}, 2024.

\bibitem[Dey(1998)]{dey1998improved}
T.~K. Dey.
\newblock Improved bounds for planar k-sets and related problems.
\newblock \emph{Discrete \& Computational Geometry}, 19:\penalty0 373--382,
  1998.

\bibitem[Edelsbrunner and Welzl(1986)]{edelsbrunner1986constructing}
H.~Edelsbrunner and E.~Welzl.
\newblock Constructing belts in two-dimensional arrangements with applications.
\newblock \emph{SIAM Journal on Computing}, 15\penalty0 (1):\penalty0 271--284,
  1986.

\bibitem[Gaffke and Pukelsheim(2008{\natexlab{a}})]{gaffkepukelsheim2008a}
N.~Gaffke and F.~Pukelsheim.
\newblock Divisor methods for proportional representation systems: An
  optimization approach to vector and matrix apportionment problems.
\newblock \emph{Mathematical Social Sciences}, 56\penalty0 (2):\penalty0
  166--184, 2008{\natexlab{a}}.

\bibitem[Gaffke and Pukelsheim(2008{\natexlab{b}})]{gaffkepukelsheim2008b}
N.~Gaffke and F.~Pukelsheim.
\newblock Vector and matrix apportionment problems and separable convex integer
  optimization.
\newblock \emph{Mathematical Methods of Operations Research}, 67\penalty0
  (1):\penalty0 133--159, 2008{\natexlab{b}}.

\bibitem[Gandhi et~al.(2006)Gandhi, Khuller, Parthasarathy, and
  Srinivasan]{gandhi2006pipage}
R.~Gandhi, S.~Khuller, S.~Parthasarathy, and A.~Srinivasan.
\newblock Dependent rounding and its applications to approximation algorithms.
\newblock \emph{Journal of the {ACM}}, 53\penalty0 (3):\penalty0 324--360,
  2006.

\bibitem[Ghouila-Houri(1962)]{ghouila-houri1962}
A.~Ghouila-Houri.
\newblock Caractérisation des matrices totalement unimodulaires.
\newblock \emph{C.R. Acad. Sci. Paris}, 254:\penalty0 1192--1194, 1962.

\bibitem[G\"{o}lz et~al.(2022)G\"{o}lz, Peters, and
  Procaccia]{goelz2022apportionment}
P.~G\"{o}lz, D.~Peters, and A.~D. Procaccia.
\newblock In this apportionment lottery, the house always wins.
\newblock In \emph{Proceedings of the 23rd ACM Conference on Economics and
  Computation}, page 562, 2022.

\bibitem[Grimmett(2004)]{grimmett2004}
G.~Grimmett.
\newblock Stochastic apportionment.
\newblock \emph{The American Mathematical Monthly}, 111\penalty0 (4):\penalty0
  299, 2004.

\bibitem[Hoeffding(1994)]{hoeffding1994probability}
W.~Hoeffding.
\newblock Probability inequalities for sums of bounded random variables.
\newblock In \emph{The collected works of Wassily Hoeffding}, pages 409--426.
  Springer, 1994.

\bibitem[Hong et~al.(2023)Hong, Najnudel, Rao, and Yen]{hong2023random}
J.-I. Hong, J.~Najnudel, S.-M. Rao, and J.-Y. Yen.
\newblock Random apportionment: A stochastic solution to the balinski-young
  impossibility.
\newblock \emph{Methodology and Computing in Applied Probability}, 25\penalty0
  (4):\penalty0 1--11, 2023.

\bibitem[J{\'o}zefowska et~al.(2006)J{\'o}zefowska, J{\'o}zefowski, and
  Kubiak]{jozefowska2006characterization}
J.~J{\'o}zefowska, {\L}.~J{\'o}zefowski, and W.~Kubiak.
\newblock Characterization of just in time sequencing via apportionment.
\newblock \emph{Stochastic Processes, Optimization, and Control Theory:
  Applications in Financial Engineering, Queueing Networks, and Manufacturing
  Systems: A Volume in Honor of Suresh Sethi}, pages 175--200, 2006.

\bibitem[Li(2022)]{li2022webster}
X.~Li.
\newblock Webster sequences, apportionment problems, and just-in-time
  sequencing.
\newblock \emph{Discrete Applied Mathematics}, 306:\penalty0 52--69, 2022.

\bibitem[Marshall et~al.(2002)Marshall, Olkin, and
  Pukelsheim]{marshall2002majorization}
A.~W. Marshall, I.~Olkin, and F.~Pukelsheim.
\newblock A majorization comparison of apportionment methods in proportional
  representation.
\newblock \emph{Social Choice and Welfare}, 19\penalty0 (4):\penalty0 885--900,
  2002.

\bibitem[Mathieu and Verdugo(2022)]{mathieu2022}
C.~Mathieu and V.~Verdugo.
\newblock Apportionment with parity constraints.
\newblock \emph{Mathematical Programming}, 203\penalty0 (1):\penalty0 135--168,
  2022.

\bibitem[Matousek(2013)]{matousek2013lectures}
J.~Matousek.
\newblock \emph{Lectures on discrete geometry}, volume 212.
\newblock Springer Science \& Business Media, 2013.

\bibitem[Panconesi and Srinivasan(1997)]{panconesi1997randomized}
A.~Panconesi and A.~Srinivasan.
\newblock Randomized distributed edge coloring via an extension of the
  chernoff--hoeffding bounds.
\newblock \emph{SIAM Journal on Computing}, 26\penalty0 (2):\penalty0 350--368,
  1997.

\bibitem[Pukelsheim(2017)]{pukelsheim2017}
F.~Pukelsheim.
\newblock \emph{Proportional Representation}.
\newblock Springer International Publishing, 2017.

\bibitem[Pukelsheim et~al.(2011)Pukelsheim, Ricca, Simeone, Scozzari, and
  Serafini]{pukelsheimricca2011}
F.~Pukelsheim, F.~Ricca, B.~Simeone, A.~Scozzari, and P.~Serafini.
\newblock Network flow methods for electoral systems.
\newblock \emph{Networks}, 59\penalty0 (1):\penalty0 73--88, 2011.

\bibitem[Rote and Zachariasen(2007)]{guenterzachariasen2007}
G.~Rote and M.~Zachariasen.
\newblock Matrix scaling by network flow.
\newblock In \emph{Proceedings of the 18th Annual ACM-SIAM Symposium on
  Discrete Algorithms}, page 848––854. Society for Industrial and Applied
  Mathematics, 2007.

\bibitem[Serafini and Simeone(2011)]{serafini2011}
P.~Serafini and B.~Simeone.
\newblock Parametric maximum flow methods for minimax approximation of target
  quotas in biproportional apportionment.
\newblock \emph{Networks}, 59\penalty0 (2):\penalty0 191--208, 2011.

\bibitem[Shechter(2024)]{shechter2024congressional}
S.~M. Shechter.
\newblock Congressional apportionment: A multiobjective optimization approach.
\newblock \emph{Management Science}, 2024.

\bibitem[Still(1979)]{still1979}
J.~W. Still.
\newblock A class of new methods for congressional apportionment.
\newblock \emph{{SIAM} Journal on Applied Mathematics}, 37\penalty0
  (2):\penalty0 401--418, 1979.

\bibitem[Tamaki and Tokuyama(2003)]{tamaki2003characterization}
H.~Tamaki and T.~Tokuyama.
\newblock A characterization of planar graphs by pseudo-line arrangements.
\newblock \emph{Algorithmica}, 35:\penalty0 269--285, 2003.

\bibitem[T{\'o}th(2000)]{toth2000point}
G.~T{\'o}th.
\newblock Point sets with many k-sets.
\newblock In \emph{Proceedings of the 16th Annual ACM Symposium on
  Computational Geometry}, pages 37--42, 2000.

\end{thebibliography}

\end{document}